\documentclass{article}
\usepackage{preamble} 
\usepackage{xcolor}

\title{Bootstrapping Classical and Quantum Query Lower Bounds for Finding Tarski Fixed Points}

\author[1,2]{Tongyang Li\thanks{tongyangli@pku.edu.cn}}
\author[3]{Weiran Ma\thanks{2400013063@stu.pku.edu.cn}}
\author[1,2]{Ziyi Yang\thanks{yangziyi@stu.pku.edu.cn}}
\author[3]{Xingyu Zhao\thanks{2300012903@stu.pku.edu.cn}}
\affil[1]{Center on Frontiers of Computing Studies, Peking University, Beijing 100871, China}
\affil[2]{School of Computer Science, Peking University, Beijing 100871, China}
\affil[3]{School of Electronics Engineering and Computer Science, Peking University, Beijing 100871, China}
\date{}

\begin{document}

\thispagestyle{empty}

\maketitle

\begin{abstract}
The Tarski fixed-point problem asks for a fixed point of a monotone map on a $k$-dimensional grid lattice of width $n$ and has applications in game theory, economics, and verification. Substantial gaps remain between the known upper and lower bounds on its quantum query complexity.

We prove two hardness bootstrapping theorems for quantum query lower bounds. Starting from a hard instance family whose instances have pair-wise incomparable unique fixed points, and a lower bound established by the nonnegative adversary method, we bootstrap this lower bound by (i) a factor of $\Omega(t)$ when the dimension is multiplied by $t$ and the width remains unchanged, and (ii) a factor of $\Omega(\log m)$ when the width is multiplied by $m$ and the dimension is doubled. Applying these theorems, we obtain three consequences: (i) an $\Omega(k\log^2 n)$ lower bound for $k\geq 4$, improving previously best-known classical and quantum lower bounds; (ii) for every fixed constant $\alpha>0$, there exists a constant $\delta(\alpha)>0$ such that $Q_\epsilon(\Tarski(n,k))=\Omega(kn^{\delta(\alpha)})$ whenever $k>n^\alpha$, and in particular, an $\Omega(kn^{0.0072})$ lower bound for $k>n^{2/203}$, the first lower bound with simultaneous linear dependence on the dimension and polynomial dependence on the width; and (iii) a superpolynomial lower bound of $(k/\log k)^{\Omega(\log k)}$ for exponentially wide grids, where $n=2^{\Theta(k)}$. The last result rules out a route to polynomial-time algorithms for simple stochastic games and ARRIVAL through generic polynomial-query black-box Tarski solvers.
\end{abstract}

\tableofcontents

\pagebreak

\section{Introduction}

The Knaster--Tarski fixed-point theorem \cite{tarski1955ALF} guarantees the existence of a fixed point for every monotone self-map of a complete lattice. Beyond its central role in order theory, the theorem has numerous applications in computer science and economics. In programming languages, least and greatest fixed points underpin recursive program semantics, monotone data-flow analysis, abstract interpretation, and fixed-point logics used in verification and model checking \cite{kildall1973unified,cousot1977abstract,kozen1982results}. In economics and game theory, fixed-point formulations characterize equilibria in supermodular games and clearing states in financial networks \cite{topkis1979equilibrium,eisenberg2001systemic}.

We consider the problem of finding a Tarski fixed point on the grid lattice $[n]^k$, where $[n]=\{0,\ldots,n-1\}$ and the partial order is defined by ${x}\preceq{y}$ if and only if $x_i\leq y_i$ for every $i\in\{1,\ldots,k\}$. Given oracle access to a monotone function $f\colon[n]^k\rightarrow[n]^k$, the goal is to find ${x}$ satisfying $f({x})={x}$; we denote this problem by $\Tarski(n,k)$. In complexity theory, this problem provides a common framework for several game theory problems. Finding a pure Nash equilibrium in a two-player supermodular game, with strategy space $[n]^k$ for each player and best-response access, is computationally equivalent to $\Tarski(n,k)$ \cite{etessami2020tarski}. Computing the exact values of an $N$-vertex simple stochastic game, introduced by Condon \cite{condon1992complexity}, reduces to $\Tarski(2^{\operatorname{poly}(N)},N)$ \cite{etessami2020tarski}. Similarly, an instance of $\ARRIVAL$ \cite{dohrau2017arrival} with $N$ switching vertices reduces to $\Tarski(2^N+1,N)$ \cite{gartner2021subexponential}. The decision versions of the latter two problems belong to $\NP\cap\coNP$, but whether they belong to $\P$ remains open. These reductions offer a potential route to polynomial-time solutions through a generic Tarski algorithm running in time polynomial in $k$ and $\log n$.

Research on the query complexity of $\Tarski(n,k)$ has pursued both algorithms and lower bounds. On the upper-bound side, iterating $f$ from the bottom or top of the grid uses $O(kn)$ queries \cite{etessami2020tarski}, while recursive coordinate-wise binary search gives an $O((\log n)^k)$-query algorithm for fixed $k$ \cite{dang2025computations}. Fearnley, P\'alv\"olgyi, and Savani \cite{fearnley2022faster} obtained an $O(\log^2 n)$-query algorithm in three dimensions; combining their decomposition theorem with the one-, two-, and three-dimensional algorithms yields an $O((\log n)^{\lceil 2k/3\rceil})$ upper bound. Chen and Li \cite{chen2022improved} introduced the auxiliary problem $\Tarski^*$ and a decomposition theorem for it, avoiding the repeated binary-search overhead incurred by decomposing $\Tarski$ directly and reducing the exponent to $\lceil(k+1)/2\rceil$. Chen, Li, and Yannakakis \cite{chen2026mystery} subsequently developed a safe-function framework and a geometric candidate-elimination argument to solve $\Tarski^*(n,3)$ in $O(\log n)$ queries. This gives an optimal $O(\log^2 n)$ upper bound in four dimensions and, through decomposition, an $O((\log n)^{1+\lceil(k-1)/3\rceil})$ upper bound in general fixed dimension. Very recently, Chen, Gao, Li, Rubinstein, and Yannakakis \cite{chen2026quasifpt} obtained the first quasi-FPT query upper bound, $O(5^k(\log n)^{\lceil\log_2 k\rceil})$ for $k\geq 3$. Their recursive algorithm approximately halves the dimension at each level, reducing the exponent of $\log n$ from linear to logarithmic in $k$.

Until recently, substantially less was understood about lower bounds. Etessami, Papadimitriou, Rubinstein, and Yannakakis \cite{etessami2020tarski} constructed two-dimensional herringbone functions around a hidden monotone spine, proving an $\Omega(\log^2 n)$ classical randomized query lower bound. Br\^anzei, Phillips, and Recker \cite{branzei2025randomized} constructed a family of hard instances on the Boolean hypercube and proved an $\Omega(k)$ lower bound. The same authors \cite{branzei2025tarski} generalized the herringbone construction to higher-dimensional grids, obtaining an $\Omega(k\log^2 n/\log k)$ randomized query lower bound. In the quantum query model, Phillips \cite{phillips2026quantum} established a composition theorem for nonnegative adversary bounds with generalized-search inner functions. By embedding nested ordered search into two-dimensional herringbone functions, an $\Omega(\log^2 n)$ quantum query lower bound is established. Li, Ma, Yang, and Zhao \cite{li2026quantum} constructed project-to-the-spine hard instances, which project grid points onto a hidden monotone spine and then move them toward its midpoint. They developed the Tree-Filtration Adversary Method to analyze this hierarchical construction and establish an $\Omega(k\log n)$ quantum query lower bound. Chen, Li, and Yannakakis \cite{chen2026mystery} highlighted the possibility of a generic $O(f(k)\log^2 n)$ query upper bound. 

\subsection{Our Contributions}

We prove two hardness bootstrapping theorems for the quantum query complexity of finding Tarski fixed points. Starting from a nonnegative adversary lower bound for $\Tarski(n,k)$ on a hard instance family satisfying suitable conditions---specifically, every function has a unique fixed point and all possible fixed points form an antichain (see the formal definition in \eqref{eq:antichain-definition})---we construct hard instances on grids of different dimensions and widths with amplified lower bounds. For any fixed $0\leq\epsilon<1/2$, these hardness bootstrapping theorems yield the following consequences:
\begin{enumerate}
\item \textbf{Improved logarithmic exponents in constant dimensions.}
We obtain an $\Omega(\log^3 n)$ lower bound for $k\geq 8$, an $\Omega(\log^4 n)$ lower bound for $k\geq 32$, and so on. This rules out the possibility of a generic $O(f(k)\log^2 n)$ upper bound. Combined with the upper bounds in \cite{chen2026quasifpt}, this determines the classical and quantum query complexity up to constant factors in dimensions $k=8,9,10$, where $D(\Tarski(n, k)) = R_\epsilon(\Tarski(n, k)) = Q_\epsilon(\Tarski(n,k))=\Theta(\log^3 n)$.

\item \textbf{A dimension-linear, log width-square lower bound.}
For every $k\geq 4$, we prove $Q_\epsilon(\Tarski(n,k))=\Omega(k\log^2 n)$. This strengthens the previously known quantum lower bound $\Omega(\log^2 n+k\log n)$ \cite{phillips2026quantum, li2026quantum} and classical randomized lower bound $\Omega(k+k\log^2 n/\log k)$ \cite{branzei2025randomized, branzei2025tarski}, capturing simultaneous linear dependence on $k$ and quadratic dependence on $\log n$.

\item \textbf{Polynomial dependence on width in high dimensions.}
For every fixed constant $\alpha>0$, there exists a constant $\delta(\alpha)>0$ such that $Q_\epsilon(\Tarski(n,k))=\Omega(kn^{\delta(\alpha)})$ whenever $k>n^\alpha$. In particular, taking $\alpha=2/203$ gives $Q_\epsilon(\Tarski(n,k))=\Omega(kn^{0.0072})$.

\item \textbf{Superpolynomial lower bounds at exponential width.}
For $n=2^{\Theta(k)}$, we prove $Q_\epsilon(\Tarski(n,k))\geq(k/\log k)^{\Omega(\log k)}$. This gives a superpolynomial query lower bound in the parameter regime arising from reductions of $\ARRIVAL$ and $\SSG$ to Tarski. Consequently, a generic black-box Tarski solver cannot provide a route to proving $\ARRIVAL,\SSG\in\BQP$, not even to mention $\P$.
\end{enumerate}

Specifically, our first theorem arranges $t$ instances from such a family into a chain, multiplying the nonnegative adversary lower bound by a factor of $\Omega(t)$ while increasing the dimension from $k$ to $tk$:

\begin{restatable}[Dimensional Hardness Bootstrapping Theorem]{theorem}{dimensionalbootstrapping}
\label{thm:dimensional-hardness-bootstrapping}
Let $\mathcal{F}$ be a family of hard instances for $\Tarski(n,k)$ satisfying the following conditions:
\begin{enumerate}
\item Every $f\in\mathcal{F}$ has a unique fixed point, denoted by $f^*$.
\item The set of all possible fixed points,
$A:=\{f^*\mid f\in\mathcal{F}\}$, is an antichain.
\item There exists a nonnegative adversary matrix $\Gamma$ for $\Tarski(n,k)|_{\mathcal{F}}$ that yields a spectral lower bound of $\lambda/\delta$.
\end{enumerate}
Then, for every positive integer $t$,
\begin{align}
\SA(\Tarski(n,tk))
\geq
\frac{t\lambda}{C\delta},
\end{align}
where $C=5+2\sqrt{2}$.
\end{restatable}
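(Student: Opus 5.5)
We construct a chained instance family on $[n]^{tk}$, write it as $t$ blocks of $k$ coordinates $x=(x^{(1)},\dots,x^{(t)})$, and then build an adversary matrix for it from $\Gamma$. The construction is a ``chain'' in which block $j$ runs a hidden $f_j\in\mathcal{F}$ only once blocks $1,\dots,j-1$ have been resolved. Concretely, for $f_1,\dots,f_t\in\mathcal{F}$ we define $F=F_{f_1,\dots,f_t}$ blockwise. Block $1$ is $F^{(1)}(x)=f_1(x^{(1)})$. For $j\ge 2$, block $j$ equals $f_j(x^{(j)})$ when $x^{(i)}=f_i^*$ for all $i<j$. Otherwise it is pushed monotonically to $\mathbf 0$ or $\mathbf{n-1}$: to the bottom if some earlier block lies strictly ``below'' its fixed point in the relevant sense, and to the top otherwise.

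The antichain condition is what makes this well defined and monotone. Because the possible fixed points in $A$ are pairwise incomparable, the map ``is $x^{(i)}\succeq f_i^*$'' is an up-set indicator. Hence the choice of bottom/top/active can be made by monotone threshold predicates of the form $[x^{(i)}\succeq f_i^*]$, and the predicates for different hidden $f_i$ never conflict in a way that breaks monotonicity. I would first verify in a lemma that $F$ is monotone and has the unique fixed point $(f_1^*,\dots,f_t^*)$. Finding it requires recovering every $f_j^*$.

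For the adversary, we index rows and columns by tuples $(f_1,\dots,f_t)$. The natural matrix is a direct-sum-type combination
\[
\widetilde\Gamma=\sum_{j=1}^t I^{\otimes (j-1)}\otimes\Gamma\otimes J^{\otimes(t-j)},
\]
with $J$ possibly replaced by a suitably normalized all-ones or uniform-weight matrix. This is nonnegative and zero on pairs with equal outputs, since different tuples have different fixed points. Its norm is at least about $t\lambda$ times the normalizing factors, because the top eigenvector $u^{\otimes t}$ aligns all $t$ terms. The central estimate is that for every query point $x$, $\|\widetilde\Gamma\circ D_x\|$ is at most about $C\delta$ in normalized units. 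A query at $x$ can only reveal information about the first block $j$ where $x^{(j)}\ne f_j^*$-agreement fails: it either sees $f_j$ (costing $\Gamma\circ D_{x^{(j)}}$, norm $\le\delta$), or it sees only a top/bottom comparison against $f_j^*$.

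The main obstacle is these comparison queries: an oracle answer of the form $[x^{(j)}\succeq f_j^*]$ is not controlled by $\Gamma$'s $\delta$ for the original problem. I would handle it by showing it is controlled up to a constant, since such a comparison can be simulated by $O(1)$ queries to $f_j$ on the original family: $x\succeq f^*$ iff $f(x)\preceq x$ under uniqueness plus monotonicity, so the bit is a function of $f(x)$. Consequently $\Gamma\circ D$ for this coarser bit is dominated by $\Gamma\circ D_x$. Summing the contributions of the terms for $i<j$ (where $I$ blocks kill nothing but the comparison terms add) gives a geometric/Cauchy–Schwarz bound, which is where the constant $5+2\sqrt2$ would arise. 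I would bound $\|\widetilde\Gamma\circ D_x\|$ by splitting $D_x$ into the ``active block'' part and ``earlier blocks mismatched'' part, bounding each with $\|A\circ B\|\le\|A\|\max$ norms and the triangle inequality. Combining gives $\SA\ge t\lambda/(C\delta)$.
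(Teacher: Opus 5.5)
\textbf{There is a genuine gap, and it lies in the construction itself.} On the active part you agree with the paper: blocks before the first unresolved index $\sigma$ return $x_j=f_j(x_j)$, and block $\sigma$ returns $f_\sigma(x_\sigma)$. The difference is the padding of the later blocks. You decide it by hidden comparisons such as $[x^{(i)}\succeq f_i^*]$ or $[x^{(i)}\prec f_i^*]$, and this leaks information that no adversary matrix for $\mathcal{F}$ controls.

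Your proposed repair is false for $k\ge2$. Only the direction $f(x)\preceq x\Rightarrow x\succeq f^*$ holds, and the comparison bit is not a function of $f(x)$, even inside an antichain family. On $[3]^2$, take $f(a,b)=(\mathbf{1}\{b=2\},1)$ and $g(a,b)=(1,\mathbf{1}\{b=2\})$. Both are monotone, with unique and incomparable fixed points $(0,1)$ and $(1,0)$. They agree at $x=(0,2)$, yet $x\succeq f^*$ while $x\not\succeq g^*$.

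The leak is fatal, not just technical. Either comparison bit lets one find $f_j^*$ by coordinatewise binary search, e.g.\ querying $(a,n-1,\ldots,n-1)$ to locate the first coordinate, in $O(k\log n)$ queries. So one can recover $f_1^*,\ldots,f_{t-1}^*$ in turn through the padding of the next block and then solve only the last block. The achievable bound on your family therefore grows by only $O(k\log n)$ per block, not $\lambda/(C\delta)$. Whenever $\lambda/\delta\gg k\log n$ (e.g.\ $k=4$ with $\lambda/\delta=\Omega(\log^2 n)$), no adversary matrix on your family can certify $t\lambda/(C\delta)$ for large $t$.

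The missing idea is to make the padding instance-independent. Let $D$ be the downward closure of $A$, and fill every block after $\sigma$ with $\bot$ if $x_\sigma\in D$ and with $\top$ otherwise. The oracle answer then depends on the instance only through $f_1(x_1),\ldots,f_\sigma(x_\sigma)$. This is where the antichain is really used: $x_\sigma\prec f_\sigma^*$ forces $x_\sigma\in D$, and $x_\sigma\succ f_\sigma^*$ forces $x_\sigma\notin D$. These are exactly the cases that arise when one of two comparable queries resolves block $\sigma$ and the other does not. By contrast, your remark that $[x\succeq f^*]$ is an up-set indicator holds for any point and uses nothing about $A$.

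\textbf{Your denominator plan also falls short, even for the corrected family.} Write $\Gamma_0$ for the given matrix. Then $\widetilde\Gamma\circ\Delta_{\boldsymbol{x}}=\sum_iQ_i$, where $Q_i=P_1\otimes\cdots\otimes P_{i-1}\otimes(\Gamma_0\circ\Delta_{x_i})\otimes J^{\otimes(t-i)}$ and $P_j[f,g]=\mathbf{1}\{f=g\wedge f(x_j)=x_j\}$. Splitting and applying the triangle inequality only gives $t\delta$, which cancels the gain.

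The paper instead proves near-orthogonality. It writes $Q_i=E_iQ_iE_i$ with projections $E_i=P_1\otimes\cdots\otimes P_{i-1}\otimes I\otimes J^{\otimes(t-i)}$, so that $\|Q\|\le\delta\|\sum_iE_i\|$ and $\|E_rE_s\|\le\prod_{r<j<s}\|JP_j\|$. The choice $J=vv^\top$, with $v$ the Perron vector of $\Gamma_0$, is essential here; it is also what makes $v^{\otimes t}$ a common eigenvector in your numerator step. Any two functions fixing $x_j$ share their fixed point, so $P_j\Gamma_0P_j=0$. Hence $\|JP_j\|=\|P_jv\|=\|P_j\Gamma_0(I-P_j)v\|/\|\Gamma_0\|\le\|(I-P_j)v\|$, which forces $\|JP_j\|\le1/\sqrt2$. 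Feeding the decay $\|E_rE_s\|\le2^{-(s-r-1)/2}$ into a block-norm (Schur test) bound gives $1+2\sum_{d\ge1}2^{-(d-1)/2}=5+2\sqrt2$.

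\textbf{A minor point.} Distinct tuples can share their fixed point. The adversary condition holds not for the reason you give, but because each term carries a factor $\Gamma_0[f_i^{(1)},f_i^{(2)}]$.
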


For a hard instance family satisfying the same conditions, increasing the width from $n$ to $nm$ and the dimension from $k$ to $2k$ multiplies its nonnegative adversary lower bound by a factor of $\SA(\OS_m) = \Omega(\log m)$. Specifically, we first transform each $\Tarski(n,k)$ instance into a $\Tarski^*$ instance that hides an arbitrary sign $s\in\{-1,0,1\}$ at its unique solution. We then compose the problem of recovering this hidden sign with one-dimensional ordered search and embed the resulting abstract problem into a grid of dimension $2k$ and width $nm$.

\begin{restatable}[Logarithmic Hardness Bootstrapping Theorem]{theorem}{logarithmicbootstrapping}
\label{thm:logarithm-bootstrapping}
Let $\mathcal{F}$ be a hard instance family for $\Tarski(n,k)$ in which every function has a unique fixed point and all possible fixed points lie on an antichain $A$. Suppose that there exists a nonnegative adversary matrix $\Gamma_0$ for $\Tarski(n,k)|_{\mathcal{F}}$ that yields a spectral lower bound of $\lambda/\delta$. Then, for every integer $m\geq 2$, there exists a hard instance family $\mathcal{F}'$ for $\Tarski(nm,2k)$ in which every function has a unique fixed point, together with a corresponding nonnegative adversary matrix $\Gamma$ that yields a spectral lower bound of $\SA(\OS_m)\lambda/(2\delta)$.
\end{restatable}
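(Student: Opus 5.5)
The plan follows the outline in the paragraph before the statement: reduce to a $\Tarski^*$-type sign problem, compose it with ordered search, and embed the composed problem in $[nm]^{2k}$.

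\textbf{Step 1: a sign-recovery problem.} From each $f\in\mathcal{F}$ with unique fixed point $f^*\in A$, and each sign $s\in\{-1,0,1\}$, we build a function $g_{f,s}$ on $[n]^k$. Away from $f^*$, $g_{f,s}$ agrees with $f$ in the sense of revealing only the direction $f(x)-x$. At $f^*$ it reveals $s$: it points ``up'' if $s=1$, ``down'' if $s=-1$, and is stationary if $s=0$. Every query outside $f^*$ is then independent of $s$, so recovering $s$ requires locating $f^*$. We lift $\Gamma_0$ to a matrix on pairs $(f,s)$ with $s\neq s'$:
\begin{align}
\Gamma_1[(f,s),(f',s')]=\Gamma_0[f,f'] .
\end{align}
We check that the spectral norm is preserved up to a constant. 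For each query $x$, the difference matrix is nonzero only where $f(x)\neq f'(x)$, or where $x=f^*=f'^*$ with $s\neq s'$. Because $\Gamma_0$ is an adversary for \emph{finding} the fixed point, the latter case contributes only pairs $(f,f')$ with $f^*\neq f'^*$, and those contribute only through queries where $f,f'$ already differ. This step loses at most a factor $2$, which accounts for the $2$ in $\lambda/(2\delta)$.

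\textbf{Step 2: composition with ordered search.} We consider $\OS_m$ on $[m]$, whose hidden threshold $j^*$ is found by comparisons. We replace each comparison bit by an independent instance of the sign problem: querying position $j$ of the outer search means solving a fresh inner instance $g_{f_j,s_j}$, whose sign $s_j$ says whether $j<j^*$, $j=j^*$, or $j>j^*$. Phillips' composition theorem for nonnegative adversary bounds with generalized-search inner functions \cite{phillips2026quantum}, which is stated earlier and may be assumed, gives a composed bound of
\begin{align}
\SA(\OS_m)\cdot\frac{\lambda}{2\delta}.
\end{align}
Taking $\Gamma$ to be the composed adversary matrix, we must verify the theorem's hypotheses. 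The inner problem, ``find the unique position and read its sign,'' is exactly a generalized search: the output is determined by the value at a single hidden location, and it is ternary-valued at that location.

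\textbf{Step 3: embedding into $[nm]^{2k}$.} We view the grid as $([m]\times[n])^k\times([m]\times[n])^k$, i.e.\ block index times in-block coordinate. The first $k$ coordinates carry the outer ordered search along a monotone chain of $m$ blocks of $[n]^k$, which we arrange diagonally. Inside block $j$ we place $g_{f_j,s_j}$. The second $k$ coordinates act as a copy in which $s_j$ makes the point flow to the next or previous block, or to stop, so that the global fixed point is forced to be the pair $(f_{j^*}^*,f_{j^*}^*)$ inside block $j^*$. Monotonicity across block boundaries is where the antichain hypothesis enters. Points of block $j$ dominated by $f_j^*$ flow up and points dominating it flow down, and since $A$ is an antichain no two candidate fixed points are ordered. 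This lets us define the inter-block transitions consistently, for instance through a herringbone or spine-style routing between consecutive blocks. Uniqueness of the fixed point then follows from uniqueness inside each $f_j$ together with the strict flow away from blocks with $s_j\neq0$. Finally, each grid query must be simulable by $O(1)$ queries to the abstract composed problem, namely queries to at most one or two inner instances plus a block index. Query complexity then transfers with only constant loss, and that loss is absorbed by the use of the nonnegative adversary quantity rather than $Q_\epsilon$.

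\textbf{Main obstacle.} The hardest step is Step 3: keeping the embedded function monotone across block boundaries while each grid query reveals only local information about a single inner instance. My first attempt is the doubling trick. One $k$-tuple encodes the comparison direction, and the other lets points outside the diagonal chain move monotonically toward it, using the antichain to rule out conflicting orders between neighbouring blocks' fixed points. I expect the composition step to be routine once the generalized-search condition of the composition theorem is verified for the sign problem.
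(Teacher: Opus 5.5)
You have the paper's overall architecture, and your $\Gamma_1$ is exactly the paper's $K_3\otimes\Gamma_0$. However, Step~3, which is where the theorem is actually proved, is never constructed, and that is a genuine gap. Two ingredients are missing.

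First, you need a monotone way to stack $m$ copies of $[n]^k$ inside $[nm]^k$. The paper uses $Q=\{1,\dots,m\}\times[n]^k$ with the lexicographic order (block index first), embedded by $e(i,x)=n(i-1)\mathbf{1}+x$, with an order-preserving left inverse $e^\dagger$.

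Second, you need inter-block moves that are monotone yet consult only two inner positions. The paper restricts to the staircase $S\subseteq Q^2$ of main blocks $P_i\times P_i$ and bridges $P_i\times P_{i+1}$, with a monotone retraction $\rho\colon Q^2\to S$. It reads a direction $\varphi_i$ or $\psi_i$ off the sign functions, applies $f_i$ to the component that keeps its block index, and sends the other component to the adjacent block, padded with $\bot$ (forward) or $\top$ (backward). Monotonicity between adjacent pieces then reduces to monotonicity of $\sigma_{s_i}$: opposite crossings would need $\sigma_{s_i}(x_1)=+1>\sigma_{s_i}(x_2)$ with $x_1\preceq x_2$. Uniqueness uses the ordered-search promise: a fixed point on a bridge would need $s_i=+1$ and $s_{i+1}=-1$.

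``Herringbone or spine-style routing'' provides none of this. The rule you state (points dominated by $f_j^*$ flow up, points dominating it flow down) is backwards if meant across blocks. It would send $x\prec y$ to blocks $j+1$ and $j-1$, breaking monotonicity. In the paper, a point of $D_i$ whose first component has sign $+1$ (e.g.\ lies strictly above $f_i^*$) moves forward.

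You have also misplaced the antichain hypothesis: it does not reconcile neighbouring blocks. It is used in Step~1, for two purposes. It keeps $\sigma_s$ monotone for every choice of $s$ at $f^*$, since by Lemma~\ref{lem:antichain-property} points strictly below $f^*$ lie in $D\setminus A$ and points strictly above lie in $\overline D$. It also makes $f^*$ the unique $\Tarski^*$ solution.

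Your accounting of the constant is wrong, and this matters because the statement fixes it. Step~1 loses nothing: $\|K_3\otimes\Gamma_0\|$ and every $\|(K_3\otimes\Gamma_0)\circ\Delta'_x\|$ scale by exactly $\|K_3\|=2$, because $\Gamma_0[f,f']=0$ at a common fixed point. Your phrase that this case ``contributes only pairs with $f^*\ne f'^*$'' is self-contradictory; the case contributes nothing. The $2$ in $\SA(\OS_m)\lambda/(2\delta)$ comes from Step~3. A staircase query $((i,x),(j,y))$ depends on the two inner positions $(i,x)$ and $(j,y)$, so Lemma~\ref{lem:adversary-reduction} applies with $c=2$ and divides the bound by $2$. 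That loss is not ``absorbed'' by working with the adversary quantity; the reduction lemma is exactly where it is charged. Your bookkeeping would therefore end at $\SA(\OS_m)\lambda/(2c\delta)$.

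Finally, the generalized-search property (Definition~\ref{def:generalized-search}) is not ``the output is determined by one hidden location.'' It requires that, at each query, equality of the two answers depends only on $(f,f')$ and on whether $s=s'$. Your encoding (report the direction of $f(x)-x$, and at $f^*$ point up or down according to $s$) can violate this. Take $x=f^*\ne f'^*$ with $f'(x)-x$ pointing up. The pairs $((f,1),(f',-1))$ and $((f,-1),(f',1))$ both have $s\ne s'$, yet their answers coincide in the first case and differ in the second. Read as a self-map, such a $g_{f,s}$ also need not be monotone.

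The paper avoids both problems by keeping $f$ intact and adding a separate monotone sign function $\sigma_s$. At a point fixed by exactly one of $f,f'$ the answers then always differ, and at a common fixed point they differ exactly when $s\ne s'$.
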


A general hard instance family need not satisfy the antichain condition required by these transformations. However, a dimension-doubling transformation can place all possible fixed points of any unique-fixed-point family on an antichain. We denote it by $\mathcal{B}$ and show that it preserves nonnegative adversary lower bounds up to a constant factor, allowing its use in the quantum query model.

\begin{restatable}{theorem}{antichainlifting}
\label{thm:antichain-lifting}
Let $\mathcal{F}$ be a hard instance family for $\Tarski(n,k)$ in which every function has a unique fixed point. Suppose that there exists a nonnegative adversary matrix $\Gamma_0$ for $\Tarski(n,k)|_{\mathcal{F}}$ that yields a spectral lower bound of $\lambda/\delta$. Then $\mathcal{B}(\mathcal{F})$ is a hard instance family for $\Tarski(n,2k)$ in which every function has a unique fixed point and all possible fixed points lie on an antichain. Moreover, there exists a nonnegative adversary matrix $\Gamma$ for $\Tarski(n,2k)|_{\mathcal{B}(\mathcal{F})}$ that yields a spectral lower bound of $\lambda/(2\delta)$.
\end{restatable}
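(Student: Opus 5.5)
We construct $\mathcal{B}$ by a product with a reflected copy. For $f\in\mathcal{F}$ on $[n]^k$, let $\bar f(y):=\overline{f(\bar y)}$ with $\bar y_i=n-1-y_i$; this map is monotone and has the unique fixed point $\overline{f^*}$. On $[n]^{2k}$ define $\mathcal{B}(f)(x,y):=(f(x),\bar f(y))$. This is monotone because it is monotone in each block. Its fixed points are exactly the pairs $(f^*,\overline{f^*})$, so the fixed point is unique.

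For the antichain property, take $f,g$ with $f^*\ne g^*$ and suppose $(f^*,\overline{f^*})\preceq(g^*,\overline{g^*})$. The first block gives $f^*\preceq g^*$. The second block gives $\overline{f^*}\preceq\overline{g^*}$, i.e.\ $g^*\preceq f^*$. Together these force $f^*=g^*$, a contradiction. Hence $\{\mathcal{B}(f)^*\}$ is an antichain. We may identify instances with equal fixed points, since the adversary only concerns functions with different answers.

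For the adversary bound, define $\Gamma[\mathcal{B}(f),\mathcal{B}(g)]:=\Gamma_0[f,g]$, so that $\|\Gamma\|=\lambda$. A query to $\mathcal{B}(f)$ at a point $(x,y)$ returns the pair $(f(x),\bar f(y))$. It therefore reveals the information of two queries to $f$, namely at $x$ and at $\bar y$.

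The main step is bounding $\|\Gamma\circ D_{(x,y)}\|$, where $D_{(x,y)}[\mathcal{B}(f),\mathcal{B}(g)]=1$ iff $\mathcal{B}(f)(x,y)\ne \mathcal{B}(g)(x,y)$. This holds iff $f(x)\ne g(x)$ or $f(\bar y)\ne g(\bar y)$. Hence entrywise
\[
0\le \Gamma\circ D_{(x,y)}\le \Gamma_0\circ D_x+\Gamma_0\circ D_{\bar y},
\]
with all matrices nonnegative. For nonnegative matrices, entrywise domination implies domination in spectral norm (Perron--Frobenius). So $\|\Gamma\circ D_{(x,y)}\|\le 2\delta$, and the bound $\lambda/(2\delta)$ follows.

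A care point is which form of the nonnegative adversary the paper uses. If it is the positive-weight form requiring the bound $\max_q\|\Gamma\circ D_q\|\le\delta$, the argument above is exactly right. If the paper instead uses the variant with the $\Delta$-filter taken over query answers as a set, the same union-type decomposition applies. Entrywise monotonicity of the norm for nonnegative matrices is the only tool required.
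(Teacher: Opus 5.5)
Your proposal is correct and is essentially the paper's argument: the same reflected-copy constructor $\mathcal{B}(f)(x,y)=(f(x),\overline{f(\overline{y})})$, and your entrywise domination $\Gamma\circ\Delta_{(x,y)}\le\Gamma_0\circ\Delta_x+\Gamma_0\circ\Delta_{\overline{y}}$ is exactly the paper's Adversary Reduction Lemma with $c=2$ and query maps $q_1(x,y)=x$, $q_2(x,y)=\overline{y}$. The paper certifies the antichain by noting that all fixed points lie on the hyperplane $\sum_i z_i=k(n-1)$ instead of your direct two-block comparison. Minor slips: $\|\Gamma\|=\|\Gamma_0\|\ge\lambda$ rather than $=\lambda$, and the relabeling $\Gamma[\mathcal{B}(f),\mathcal{B}(g)]:=\Gamma_0[f,g]$ needs injectivity of $\mathcal{B}$ (immediate from the first block) and the check that $\mathcal{B}(f)^*=\mathcal{B}(g)^*$ forces $f^*=g^*$, so no identification of instances is needed.
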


\subsection{Concurrent Work}

During the preparation of this manuscript, Liu and Wang \cite{liu2026dimension} and Chen, Gao, Li, Rubinstein, and Yannakakis \cite{chen2026quasifpt} also obtained the same classical antichain-lifting transformation and similar logarithmic bootstrapping theorems. Compared with these concurrent works, our results offer the following advances:
\begin{enumerate}
\item In addition to the two results above, our dimensional hardness bootstrapping theorem is entirely new, which multiplies the lower bound by $\Omega(t)$ when the dimensionality is multiplies by $t$.

\item Our lower bounds are proved under the quantum query setting; they directly apply to classical deterministic and bounded-error randomized query complexities.

\item Our embedding of the abstract composite problem is more compact: gaining an additional $\log m$ factor requires dimension expansion from $k$ to $4k$, rather than $4k+1$ and $4k+2$ in the two concurrent works, respectively. Consequently, we obtain a tight $\Omega(\log^3 n)$ quantum query lower bound already in dimension $k=8$, whereas the corresponding classical bounds become tight starting at $k=9$ and $k=10$ in the two concurrent works, respectively.
\end{enumerate}

\subsection{Paper Organization}

In Section~\ref{sec:preliminaries}, we formally define the problems $\Tarski$, $\Tarski^*$, and $\OS$, review the quantum query model and the nonnegative adversary method, and present two constructive techniques for establishing nonnegative adversary lower bounds. In Section~\ref{sec:dimensional-bootstrapping}, we describe a construction that links $t$ hard instances into a chain and construct a nonnegative adversary matrix to prove Theorem~\ref{thm:dimensional-hardness-bootstrapping}. In Subsection~\ref{subsec:conjugate-transformation}, we prove Theorem~\ref{thm:antichain-lifting}. In Section~\ref{sec:logarithmic-bootstrapping}, we introduce the sign-hiding construction, compose the resulting problem with $\OS_m$, and embed the composite problem into $[nm]^{2k}$ to prove Theorem~\ref{thm:logarithm-bootstrapping}. In Section~\ref{sec:consequences}, we establish the consequences outlined above. Finally, in Section~\ref{sec:conclusion}, we summarize our results and conclude with directions for further improvement.

\section{Preliminaries} \label{sec:preliminaries}

\subsection{\texorpdfstring
  {The {\mdseries\textsc{Tarski}} and {\mdseries\textsc{Tarski}}$^*$ Problems}
  {The Tarski and Tarski* Problems}}

We begin by reviewing some basic properties of lattices and formally introducing the \textsc{Tarski} and \textsc{Tarski}$^*$ problems.

A partially ordered set $(\mathcal{L}, \preceq)$ is called a \emph{complete lattice} if every subset $S \subseteq \mathcal{L}$ has an infimum $\inf S$ and a supremum $\sup S$ in $\mathcal{L}$. A subset $A \subseteq \mathcal{L}$ is called an \emph{antichain} if any two distinct elements of $A$ are incomparable; equivalently,
\begin{align}
\forall x, y\in A, \quad (x\preceq y \vee y\preceq x)\rightarrow x = y. \label{eq:antichain-definition}
\end{align}

A function $f\colon \mathcal{L}\rightarrow \mathcal{L}$ is called \emph{monotone} if, for all $x, y\in \mathcal{L}$, $x\preceq y$ implies $f(x)\preceq f(y)$. Tarski's fixed-point theorem~\cite{tarski1955ALF} states that every monotone function on a complete lattice has a fixed point. If $f$ have unique fixed point, we denote the unique fixed point by $f^*$. This motivates the study of fixed-point computation on specific lattices.

Let $n$ and $k$ be positive integers, and let $[n] = \{0, 1, \ldots, n - 1\}$. Equip $[n]^k$ with the coordinatewise partial order $\preceq$ defined by
\begin{align}
\forall x, y\in [n]^k, \quad x\preceq y \Leftrightarrow \forall i \in \{1, \ldots, k\},\quad x_i \leq y_i.
\end{align}
The resulting complete lattice $([n]^k, \preceq)$ is called a $k$-dimensional \emph{grid lattice} of width $n$. The problem of finding a fixed point on a grid lattice, denoted by $\textsc{Tarski}(n, k)$, is defined as follows.

\begin{definition}[$\textsc{Tarski}(n, k)$]
Given oracle access to a monotone function $f$ on the grid lattice $([n]^k, \preceq)$, find any point $x\in [n]^k$ such that $f(x) = x$.
\end{definition}

When $k=1$, $\Tarski(n,k)$ can be solved by binary search. For a positive integer $n$, the ordered search problem is formally defined as follows.

\begin{definition}[$\OS_n$]\label{def:ordered-search}
For each $i\in\{1,\ldots,n\}$, the ordered search instance with hidden position $i$ is the function $f_i\colon\{1,\ldots,n\}\rightarrow\{-1,0,1\}$ defined by
\begin{align}
f_i(j)=
\begin{cases}
1, & j<i,\\
0, & j=i,\\
-1, & j>i.
\end{cases}
\end{align}
Given oracle access to some $f_i$, find the hidden position $i$.
\end{definition}

The $\textsc{Tarski}^*$ problem is a standard subproblem that arises in algorithms for the Tarski fixed-point problem (see~\cite{chen2022improved}). When performing binary search along a chosen dimension, solving this subproblem suffices to determine which direction to continue the search. The problem is defined as follows.

\begin{definition}[$\textsc{Tarski}^*(n, k)$]
Given oracle access to a monotone function $f\colon [n]^k\rightarrow [n]^k$ on $([n]^k, \preceq)$ and a monotone sign function $\sigma\colon [n]^k\rightarrow \{-1, 0, 1\}$, find any point $x\in [n]^k$ satisfying either $x\preceq f(x)\wedge \sigma(x)\geq 0$ or $x\succeq f(x)\wedge \sigma(x)\leq 0$. Here, monotonicity of $\sigma$ means that $\sigma(x)\leq \sigma(y)$ whenever $x\preceq y$.
\end{definition}

Note that previous works on upper bounds (for example \cite{chen2022improved}) additionally require that every coordinate of $f(x)-x$ belong to $\{-1, 0, 1\}$ for every $x\in [n]^k$. Our lower-bound argument in this paper does not require this restriction; we only assume that $f$ is monotone.

\subsection{Quantum Query Model} \label{subsec:quantum-query-model}

We work in the standard quantum query model. Let $\Sigma$ and $\Pi$ be finite alphabets, and let $I$ be an index set. Our goal is to evaluate a partial function $\Phi\colon S\rightarrow \Pi$, where $S\subseteq \Sigma^I$. The input $x\in S$ is not given explicitly; instead, it is accessed through an oracle $O_x$, defined for all $i\in I$ and $z\in \Sigma$ by
\begin{align}
O_x\ket{i, z} = \ket{i, z\oplus x_i},
\end{align}
where $\oplus$ denotes an abelian group operation on $\Sigma$.

A $T$-query quantum algorithm starts in a state $\ket{\psi_0}$ independent of the input $x$ and applies the following operations in sequence:
\begin{align}
U_0\rightarrow O_x \rightarrow U_1 \rightarrow O_x \rightarrow \cdots \rightarrow U_{T - 1} \rightarrow O_x \rightarrow U_T \rightarrow \text{Measurement.}
\end{align}
Here, $U_0, \ldots, U_T$ are unitaries independent of $x$. The final step is a computational-basis measurement of a designated output register whose basis states are labeled by elements of $\Pi$.

The algorithm computes $\Phi$ with error at most $\epsilon$ if, for every $x\in S$, the measurement outcome equals $\Phi(x)$ with probability at least $1-\epsilon$. The bounded-error quantum query complexity of $\Phi$, denoted by $Q_\epsilon(\Phi)$, is the minimum nonnegative integer $T$ for which there exists a $T$-query quantum algorithm that computes $\Phi$ with error at most $\epsilon$.

\subsection{Nonnegative Adversary Method}

The nonnegative adversary method is a standard technique for proving quantum query lower bounds. Its spectral formulation was introduced by Barnum, Saks, and Szegedy~\cite{barnum2003quantum} and was subsequently shown to be equivalent to other nonnegative-weight adversary methods by \cite{spalek2006all}. We follow the formulation of the spectral adversary bound in \cite[Theorem~3.1]{spalek2006all}. Throughout this subsection, we use the computational model and notation introduced in Section \ref{subsec:quantum-query-model}. We begin by defining two types of matrices that we will need.

\begin{definition}[Nonnegative adversary matrix]
A matrix $\Gamma\in \R_{\geq 0}^{|S|\times |S|}$ is a \emph{nonnegative adversary matrix} for $\Phi$ if it is symmetric and satisfies
\begin{align}
\forall x, y\in S, \quad
\Phi(x) = \Phi(y) \rightarrow \Gamma[x, y] = 0. \label{eq:nonnegative-adversary-condition}
\end{align} 
\end{definition}

\begin{definition}[Distinguisher matrix]
For each $i\in I$, the $i$-th \emph{distinguisher matrix} is the matrix $\Delta_i\in \{0, 1\}^{|S|\times |S|}$ defined by
\begin{align}
\Delta_i[x, y] =
\begin{cases}
1, & \text{if $x_i \ne y_i$}, \\
0, & \text{otherwise.}
\end{cases}
\end{align}
\end{definition}

Define
\begin{align}
\SA(\Phi)
:=
\max_{\Gamma\ne 0}
\frac{\norm{\Gamma}}
{\max_{i\in I}\norm{\Gamma\circ \Delta_i}},
\end{align}
where the maximum is taken over all nonzero nonnegative adversary matrices for $\Phi$, $\circ$ denotes the Hadamard (entrywise) product, and $\norm{\cdot}$ denotes the spectral norm. If $\Phi$ is constant, we set $\operatorname{SA}(\Phi) = 0$. For $\delta>0$, we say that an adversary matrix $\Gamma$ for $\Phi$ \emph{yields a spectral lower bound of $\lambda / \delta$} if
\begin{align}
\norm{\Gamma} \geq \lambda
\quad \text{and} \quad
\max_{i\in I} \norm{\Gamma \circ \Delta_i} \leq \delta.
\end{align}
Note that this implies $\SA(\Phi)\geq \lambda / \delta$.

The following theorem gives a lower bound on the bounded-error quantum query complexity of $\Phi$.

\begin{theorem}[Nonnegative adversary method, see~\cite{barnum2003quantum}]
For every $0\leq \epsilon < 1/2$,
\begin{align}
Q_\epsilon(\Phi)
\geq
\frac{1 - 2\sqrt{\epsilon(1 - \epsilon)}}{2}
\SA(\Phi).
\end{align}
\end{theorem}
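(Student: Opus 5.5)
We would prove this by the standard hybrid (progress-measure) argument. Fix an optimal nonzero nonnegative adversary matrix $\Gamma$ for $\Phi$, and let $v\in\R_{\geq 0}^{|S|}$ be a unit principal eigenvector of $\Gamma$. Since $\Gamma$ is symmetric and entrywise nonnegative, Perron--Frobenius lets us choose $v$ nonnegative, so that $v^{\top}\Gamma v=\norm{\Gamma}$.

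Fix a $T$-query algorithm that computes $\Phi$ with error at most $\epsilon$. Let $\ket{\psi_x^t}$ denote its state on input $x$ just before the $(t+1)$-st oracle call. Define the progress measure
\begin{align}
W_t := \sum_{x,y\in S}\Gamma[x,y]\,v_xv_y\,\braket{\psi_x^t}{\psi_y^t}.
\end{align}
The proof then has three steps.

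\textbf{Initial value.} Since $\ket{\psi_0}$ is input-independent, every inner product equals $1$, so $W_0=v^{\top}\Gamma v=\norm{\Gamma}$.

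\textbf{Final value.} Whenever $\Gamma[x,y]\neq 0$ we have $\Phi(x)\neq\Phi(y)$, by the adversary condition \eqref{eq:nonnegative-adversary-condition}. Two final states that each yield a different correct answer with probability at least $1-\epsilon$ satisfy $|\braket{\psi_x^T}{\psi_y^T}|\leq 2\sqrt{\epsilon(1-\epsilon)}$; this follows from a fidelity/trace-distance bound applied to the output measurement. Since $\Gamma$ and $v$ are nonnegative, this gives $|W_T|\leq 2\sqrt{\epsilon(1-\epsilon)}\,\norm{\Gamma}$.

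\textbf{Bounded change per query (the key step).} The unitaries $U_t$ are input-independent and so preserve inner products. Hence only the oracle calls change $W$. Decompose $\ket{\psi_x^t}=\sum_{i\in I}\ket{i}\ket{\psi_{x,i}}$ according to the query register. Then
\begin{align}
W_t-W_{t+1}=\sum_{i\in I}\sum_{x,y}\Gamma[x,y]\,v_xv_y\,\bra{\psi_{x,i}}\bigl(I-O_{x,i}^{\dagger}O_{y,i}\bigr)\ket{\psi_{y,i}},
\end{align}
where $O_{x,i}$ is the action $z\mapsto z\oplus x_i$. The bracket vanishes whenever $x_i=y_i$, and otherwise has operator norm at most $2$. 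Hence
\begin{align}
|W_t-W_{t+1}|\leq 2\sum_{i\in I} a_i^{\top}(\Gamma\circ\Delta_i)\,a_i,
\end{align}
where $a_i[x]:=v_x\norm{\psi_{x,i}}$ is a nonnegative vector. Bounding each term by $\norm{\Gamma\circ\Delta_i}\,\norm{a_i}^2$ and using $\sum_i\norm{a_i}^2=\sum_x v_x^2=1$, we get $|W_t-W_{t+1}|\leq 2\max_i\norm{\Gamma\circ\Delta_i}$.

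\textbf{Conclusion.} Combining the three steps,
\begin{align}
\bigl(1-2\sqrt{\epsilon(1-\epsilon)}\bigr)\norm{\Gamma}\leq W_0-W_T\leq 2T\max_i\norm{\Gamma\circ\Delta_i}.
\end{align}
Rearranging and maximizing over $\Gamma$ gives the claimed bound on $Q_\epsilon(\Phi)$. If $\Phi$ is constant, the bound is trivial by convention.

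We expect the per-query step to be the main obstacle. The difficulty is writing the oracle's effect blockwise in the query register so that only pairs with $x_i\neq y_i$ contribute. One must then keep the estimate entrywise nonnegative, so that the spectral norm of $\Gamma\circ\Delta_i$ controls the sum. This works because $\Gamma$, $v$ and the vectors $a_i$ are all nonnegative, so the absolute values can be taken inside the sum. Everything else is bookkeeping.
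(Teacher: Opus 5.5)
Your proposal is correct. Each step holds: the overlap bound $|\langle\psi_x^T|\psi_y^T\rangle|\leq 2\sqrt{\epsilon(1-\epsilon)}$ for inputs with different answers, the blockwise decomposition giving a per-query change of at most $2\max_i\norm{\Gamma\circ\Delta_i}$, and the normalization $\sum_i\norm{a_i}^2=1$.

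The paper does not reprove this theorem; it imports it from \cite{barnum2003quantum}, in the form of \cite[Theorem~3.1]{spalek2006all}. Your weighted progress-measure argument is the standard proof of that cited bound, which expresses the same potential as $\langle \Gamma\circ vv^{\top}, M_t\rangle$ for the Gram matrices $M_t[x,y]=\langle\psi_x^t|\psi_y^t\rangle$.
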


The following lower bound for ordered search is proved using the nonnegative adversary method in \cite[Appendix~B]{phillips2026quantum}:

\begin{lemma} \label{lem:ordered-search-lowerbound}
For every integer $m\geq 2$, we have $\SA(\OS_m) \geq (\ln m - 2)/\pi$.
\end{lemma}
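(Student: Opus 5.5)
The plan is to exhibit an explicit nonnegative adversary matrix for $\OS_m$ of Hilbert type. Inputs are the instances $f_1,\dots,f_m$, which have pairwise distinct answers. Define $\Gamma[i,j]=1/(|i-j|+1)$ for $i\neq j$ and $\Gamma[i,i]=0$. This matrix is symmetric and nonnegative. Condition \eqref{eq:nonnegative-adversary-condition} holds trivially because distinct instances have distinct outputs. Two things remain: a lower bound on $\norm{\Gamma}$ and a uniform upper bound on $\norm{\Gamma\circ\Delta_k}$ for every query position $k$.

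\emph{Lower bound on $\norm{\Gamma}$.} I will test $\Gamma$ against the normalized all-ones vector $u=m^{-1/2}\mathbf{1}$. Grouping pairs by $d=|i-j|$ gives
\begin{align}
u^{\top}\Gamma u=\frac{2}{m}\sum_{d=1}^{m-1}\frac{m-d}{d+1}
=2\sum_{d=1}^{m-1}\frac{1}{d+1}-\frac{2}{m}\sum_{d=1}^{m-1}\frac{d}{d+1}
\geq 2(H_m-1)-2.
\end{align}
Since $H_m\geq\ln m$, this yields $\norm{\Gamma}\geq 2(\ln m-2)$.

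\emph{Upper bound on $\norm{\Gamma\circ\Delta_k}$.} Fix a query position $k$ and take $i<j$. A direct check of Definition~\ref{def:ordered-search} shows that $f_i(k)\neq f_j(k)$ exactly when $i\leq k\leq j$:
\begin{itemize}
\item for $k<i$ or $k>j$ the two values agree;
\item for $k=i$ they are $0$ versus $1$;
\item for $i<k<j$ they are $-1$ versus $1$;
\item for $k=j$ they are $-1$ versus $0$.
\end{itemize}
Hence $\Gamma\circ\Delta_k$ is supported on two mutually transposed off-diagonal blocks, with rows $i\in\{1,\dots,k\}$ and columns $j\in\{k,\dots,m\}$. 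The only shared index $i=j=k$ carries a zero entry. The norm of such a symmetric matrix equals the norm of its block $B$.

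Reparametrize by $a=k-i\geq 0$ and $b=j-k\geq 0$. Then $B[a,b]=1/(a+b+1)$, except at $(0,0)$ where the entry is $0$. So $B$ is entrywise dominated by a finite section of the Hilbert matrix. For nonnegative matrices, entrywise domination implies domination of spectral norms, because one may replace test vectors by their absolute values. Hilbert's inequality then gives $\norm{B}\leq\pi$, so $\max_k\norm{\Gamma\circ\Delta_k}\leq\pi$.

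\emph{Conclusion.} Combining the two bounds gives $\SA(\OS_m)\geq 2(\ln m-2)/\pi$. This is at least $(\ln m-2)/\pi$ when $\ln m\geq 2$. When $\ln m<2$ the claimed bound is negative, and it holds trivially since $\SA\geq 0$.

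The step needing the most care is the distinguisher analysis. One must verify that the support is exactly the rectangle $i\leq k\leq j$, including the endpoint cases $k=i$ and $k=j$. One must also choose the weight $1/(|i-j|+1)$ rather than $1/|i-j|$: the latter would put entries $1/(a+b)$ with $a=0$ or $b=0$ in the block and break the clean Hilbert bound of $\pi$.
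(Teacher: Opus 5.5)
Your adversary matrix, your numerator bound $\norm{\Gamma}\ge 2(\ln m-2)$, and your distinguisher analysis are all correct; the last says that for $i<j$, $f_i(k)\ne f_j(k)$ exactly when $i\le k\le j$. The gap is the sentence asserting that $\norm{\Gamma\circ\Delta_k}$ equals the norm of the block $B$. That identity holds when the row set and the column set of $B$ are disjoint, so that the symmetric matrix is block anti-diagonal. Here they share the index $k$, and the zero at $(k,k)$ does not rescue the argument. In the support graph of $\Gamma\circ\Delta_k$, the vertex $k$ is joined to every $i<k$ and to every $j>k$, and every $i<k$ is also joined to every $j>k$. So the graph contains triangles and is not bipartite. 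A concrete counterexample is $m=3$, $k=2$. The values $f_1(2),f_2(2),f_3(2)$ are $-1,0,1$, so $\Gamma\circ\Delta_2=\Gamma$, whose norm is $(1+\sqrt{19})/6\approx 0.893$. By contrast, $B=\begin{pmatrix}1/2&1/3\\0&1/2\end{pmatrix}$ has norm $(1+\sqrt{10})/6\approx 0.694$.

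What your analysis does give is $\Gamma\circ\Delta_k=\hat B+\hat B^{\top}$, where $\hat B$ is the $m\times m$ zero-padding of $B$, i.e.\ the entries $\Gamma[i,j]$ with $i<j$ and $i\le k\le j$. Since $\norm{\hat B}=\norm{B}\le\pi$ by your Hilbert-matrix domination, the triangle inequality yields $\norm{\Gamma\circ\Delta_k}\le 2\pi$. Combined with your numerator bound, this gives $\SA(\OS_m)\ge 2(\ln m-2)/(2\pi)=(\ln m-2)/\pi$, which is exactly the lemma. The factor of $2$ you believed was spare is precisely what the corrected denominator consumes, so drop your final weakening step. The paper does not prove this lemma itself; it imports it from Phillips's Appendix~B. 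The constant $(\ln m-2)/\pi$ quoted there is exactly what this repaired argument produces. The sharper bound $\norm{\Gamma\circ\Delta_k}\le\pi$ does in fact hold, but it needs a finer argument than a block-norm identity. One route is positive semidefiniteness of the Hilbert kernel plus a weighted Schur test at the shared index $k$. The lemma does not need it.
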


\subsection{Techniques for Adversary Matrix Construction}

Our hardness bootstrapping results are obtained by constructing new adversary matrices from existing ones and their associated lower bounds. This process consists of two stages: constructing new abstract problems from known problems, and embedding these abstract problems into grid lattices. These two stages rely, respectively, on the constructive techniques presented in Sections~\ref{subsubsec:composition} and~\ref{subsubsec:reduction}.

\subsubsection{Generalized Search Composition} \label{subsubsec:composition}

In this paper, we construct hard instances using function composition. Phillips~\cite{phillips2026quantum} provides a method for proving quantum query lower bounds for composed functions. This method requires the inner functions to be generalized search functions, whose definition we restate below using the notation from Section~\ref{subsec:quantum-query-model}.

\begin{definition}[Generalized search function, see~{\cite[Definition~4]{phillips2026quantum}}] \label{def:generalized-search}
Suppose that a function $\Phi\colon S \rightarrow \Pi$ satisfies the following conditions:
\begin{enumerate}
\item There exists a positive integer $m\in \N$ such that $|\Phi^{-1}(\pi)| = m$ for every $\pi\in \Pi$. Thus, the inputs can be uniquely labeled by pairs $(\pi, j)$, where $j\in [m]$ and $\Phi((\pi, j)) = \pi$.
\item This labeling can be chosen so that, for each fixed query index $i\in I$ and any two inputs $(\sigma_1, j_1), (\sigma_2, j_2)\in S$, whether or not $(\sigma_1, j_1)_i = (\sigma_2, j_2)_i$ depends only on $j_1$, $j_2$, and whether or not $\sigma_1 = \sigma_2$, rather than on the particular values of $\sigma_1$ and $\sigma_2$.
\end{enumerate}
Then $\Phi$ is called a \emph{generalized search function} with $m$ \emph{variants}.
\end{definition}

The composition of search problems is defined as follows (see \cite[Definition~3]{phillips2026quantum}).

\begin{definition} \label{def:composite-function}
Let $\Sigma,\Pi,\Xi$ be finite alphabets, and let $J$ be a finite index set. Let $\Psi\colon T\rightarrow\Xi$ be a function with $T\subseteq\Pi^J$. For each $j\in J$, let $I_j$ be a finite index set and let $\Phi_j\colon S_j\rightarrow\Pi$ be a function with $S_j\subseteq\Sigma^{I_j}$. Define the composite index set as the disjoint union
\begin{align}
I := \bigsqcup_{j\in J} I_j
   = \{(j,i)\mid j\in J,\ i\in I_j\}.
\end{align}
For each $x\in\Sigma^I$, write $x=(x^{(j)})_{j\in J}$, where $x^{(j)}\in\Sigma^{I_j}$ is defined by $x^{(j)}_i=x_{(j,i)}$. The \emph{composition} $\Psi\circ(\Phi_j)_{j\in J}\colon S_{\mathrm{comp}}\rightarrow\Xi$ is defined by
\begin{align}
\bigl(\Psi\circ(\Phi_j)_{j\in J}\bigr)(x)
:=
\Psi\bigl((\Phi_j(x^{(j)}))_{j\in J}\bigr),
\end{align}
where the domain of composite function is defined by
\begin{align}
S_{\mathrm{comp}}
:=
\bigcup_{y\in T}
\prod_{j\in J}\Phi_j^{-1}(y_j)
\subseteq\Sigma^I.
\end{align}
\end{definition}

Phillips~\cite{phillips2026quantum} proves the following nonnegative adversary bound for function composition. The corresponding adversary matrix construction is given in \cite[Definition~7]{phillips2026quantum}.

\begin{theorem}[Generalized search composition, see~{\cite[Theorem~2]{phillips2026quantum}}] \label{thm:generalized-search-composition}
Let $\Psi\circ(\Phi_j)_{j\in J}$ be a composition as in Definition \ref{def:composite-function}, where $J$ is a nonempty finite index set. Suppose that $\Phi_j$ is a generalized search function (Definition \ref{def:generalized-search}) for every $j\in J$. Then
\begin{align}
\SA\bigl(\Psi\circ(\Phi_j)_{j\in J}\bigr)
\geq
\SA(\Psi)\cdot\min_{j\in J}\SA(\Phi_j).
\end{align}
\end{theorem}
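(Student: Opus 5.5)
The plan is to build an explicit nonnegative adversary matrix for the composition $\Psi\circ(\Phi_j)_{j\in J}$. It will be a "product" of an optimal adversary matrix $\Gamma_\Psi$ for $\Psi$ and suitably symmetrized adversary matrices for the inner functions $\Phi_j$. The norm will then be compared with $\norm{\Gamma_\Psi}$, and the distinguisher norms with $\norm{\Gamma_\Psi\circ\Delta_j}$ times the inner ratios. By the generalized-search labeling of Definition~\ref{def:generalized-search}, every input of $S_{\mathrm{comp}}$ is a pair $(a,v)$, where $a=(a_j)_{j\in J}\in T$ is the vector of inner outputs and $v=(v_j)_{j\in J}$ with $v_j\in[m_j]$ is the vector of variants.

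\textbf{Step 1: symmetrize the inner matrices.} Fix $j$ and an optimal $\Gamma_j$ for $\Phi_j$. For each permutation $P$ of $\Pi$, let $\Gamma_j^P$ be $\Gamma_j$ with output labels permuted; by condition~2 of Definition~\ref{def:generalized-search}, $\Gamma_j^P$ has the same distinguisher norms as $\Gamma_j$. Let $\bar\Gamma_j$ be the average of $\Gamma_j^P$ over all $P$, restricted to pairs with different outputs. Then $\bar\Gamma_j[(\sigma_1,v_1),(\sigma_2,v_2)]=G_j[v_1,v_2]$ for $\sigma_1\neq\sigma_2$ and $0$ otherwise, for some nonnegative matrix $G_j$.
\begin{itemize}
\item Upper bound: the triangle inequality gives $\norm{\bar\Gamma_j\circ\Delta_i}\le\norm{\Gamma_j\circ\Delta_i}$.
\item Lower bound: take the nonnegative Perron vectors $u_P$ of the $\Gamma_j^P$ and set $w(x)=(\mathrm{avg}_P\,u_P(x)^2)^{1/2}$. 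Cauchy--Schwarz gives $w(x)w(y)\ge\mathrm{avg}_P\,u_P(x)u_P(y)$, and the entries are nonnegative. Hence $\langle w,\bar\Gamma_j w\rangle\ge\norm{\Gamma_j}$ and $\norm{\bar\Gamma_j}\ge\norm{\Gamma_j}$.
\end{itemize}
Because $\Delta_i$ restricted to these pairs also depends only on $(v_1,v_2,[\sigma_1=\sigma_2])$, write $\Delta_i[(\sigma_1,v_1),(\sigma_2,v_2)]=D_i^{\neq}[v_1,v_2]$ when $\sigma_1\neq\sigma_2$. Then $\SA(\Phi_j)$ is essentially governed by the ratio $\norm{G_j}/\max_i\norm{G_j\circ D_i^{\neq}}$, up to the $|\Pi|$-dependent constant relating $\norm{\bar\Gamma_j}$ to $\norm{G_j}$. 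That constant cancels in the ratio.

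\textbf{Step 2: define the composite matrix.} Set
\begin{align}
\Gamma[(a,v),(b,v')]=\Gamma_\Psi[a,b]\prod_{j\colon a_j\neq b_j}\frac{G_j[v_j,v'_j]}{\norm{G_j}}\prod_{j\colon a_j=b_j}[v_j=v'_j].
\end{align}
This matrix is symmetric and nonnegative, and it vanishes whenever $\Psi(a)=\Psi(b)$, so it is a valid adversary matrix. For the norm, test against $\psi(a,v)=\phi(a)\prod_j g_j(v_j)$, where $\phi$ and $g_j$ are the unit Perron vectors of $\Gamma_\Psi$ and $G_j$. Each factor contributes $\langle g_j,G_jg_j\rangle/\norm{G_j}=1$ or $\langle g_j,g_j\rangle=1$, so $\norm{\Gamma}\ge\norm{\Gamma_\Psi}$.

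\textbf{Step 3: bound the distinguishers.} Fix a query $(j,i)$. On pairs with $a_j=b_j$ the $j$-th factor forces $v_j=v'_j$, so the two inputs agree at $(j,i)$ and those entries vanish. On pairs with $a_j\neq b_j$ the $j$-th factor becomes $(G_j\circ D_i^{\neq})/\norm{G_j}$. So $\Gamma\circ\Delta_{(j,i)}$ is the same product construction applied to the outer matrix $\Gamma_\Psi\circ\Delta_j$, with the $j$-th inner block replaced by a matrix of norm at most $1/\SA(\Phi_j)$ (taking the $\Gamma_j$ optimal), and all other blocks of norm at most $1$. The key lemma is that such a product matrix has norm at most $\norm{\text{outer}}\cdot\prod(\text{norms of the blocks})$.

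I would prove this lemma by writing $\norm{M}=\max\langle x,My\rangle$ over nonnegative unit vectors. Then I would peel off one coordinate $j'$ at a time. For fixed $(a,b)$, apply the block's norm bound in the $v_{j'}$ variables and Cauchy--Schwarz across the remaining variables. After all coordinates are peeled, what remains is a quadratic form in the outer matrix with vectors $\|x(a,\cdot)\|_2$, which is bounded by the outer norm.

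This gives $\norm{\Gamma\circ\Delta_{(j,i)}}\le\norm{\Gamma_\Psi\circ\Delta_j}/\SA(\Phi_j)$. Taking the maximum over $(j,i)$ and dividing into $\norm{\Gamma}\ge\norm{\Gamma_\Psi}$ yields the claimed bound $\SA(\Psi)\cdot\min_j\SA(\Phi_j)$.

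I expect Step 3's multiplicativity lemma to be the main obstacle. The blocks switch between $G_j$ and the identity depending on whether $a_j=b_j$, so $\Gamma$ is not a genuine tensor product. The peeling argument must handle each outer pair $(a,b)$ separately and still recombine the pieces into a single outer quadratic form.
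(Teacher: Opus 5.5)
The paper does not prove this theorem; it imports it from \cite[Theorem~2]{phillips2026quantum}, so your proposal has to stand on its own. Steps~2 and~3 are sound. Step~3 is also easier than you expect: for each fixed outer pair $(a,b)$, the $(v,v')$-block of $\Gamma\circ\Delta_{(j,i)}$ is an honest Kronecker product. Hence $|\langle x,Ny\rangle|\le\sum_{a,b}M[a,b]\prod_{j'}\norm{C^{(a,b)}_{j'}}\,\norm{x(a,\cdot)}\,\norm{y(b,\cdot)}$, and the outer norm finishes the argument without any peeling.

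\textbf{The gap is in Step~1.} This is exactly where the generalized-search hypothesis has to do its work, and your lower bound $\norm{\bar\Gamma_j}\ge\norm{\Gamma_j}$ is false. Take $\Phi(x_1x_2)=x_1$ on $\{0,1\}^2$ with labels $(\sigma,v)\mapsto\sigma v$.
\begin{itemize}
\item This is a generalized search function with two variants.
\item Every nonzero adversary matrix for it is optimal, since $\Gamma\circ\Delta_1=\Gamma$.
\item Let $\Gamma_j$ be the single edge $\{00,11\}$ with weight $1$. The output swap sends it to the edge $\{10,01\}$.
\item So $\bar\Gamma_j$ is two disjoint edges of weight $1/2$, and $\norm{\bar\Gamma_j}=1/2<1$.
\end{itemize}
Your Cauchy--Schwarz step only gives $\langle w,\bar\Gamma_j w\rangle\ge\mathrm{avg}_P\langle u_P,\bar\Gamma_j u_P\rangle=\mathrm{avg}_{P,Q}\langle u_P,\Gamma_j^Q u_P\rangle$. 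The cross terms with $P\ne Q$ (zero here) are uncontrolled. Because the spectral ratio is not concave in $\Gamma$, plain averaging gives no guarantee that the ratio survives: the numerator can be diluted while the distinguisher norms need not shrink in proportion.

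\textbf{How to repair Step~1.} Use your own ingredients, but average the Perron-weighted matrices instead of the $\Gamma_j^P$ themselves. Choose $u_P(\sigma,v)=u(P\sigma,v)$, so that $w(\sigma,v)$ does not depend on $\sigma$. Write $D_P=\mathrm{diag}(u_P)$ and $D_w=\mathrm{diag}(w)$, let $D_w^{+}$ be the pseudo-inverse of $D_w$, and let $\delta:=\max_i\norm{\Gamma_j\circ\Delta_i}$. Set
\[
\tilde\Gamma_j=D_w^{+}\Bigl(\mathrm{avg}_P\,D_P\Gamma_j^P D_P\Bigr)D_w^{+}.
\]
Since each $u_P$ vanishes off the support of $w$, we get $\langle w,\tilde\Gamma_j w\rangle=\mathrm{avg}_P\langle u_P,\Gamma_j^P u_P\rangle=\norm{\Gamma_j}$. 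For the distinguishers, $\Gamma_j^P\circ\Delta_i\preceq\delta I$ and $\mathrm{avg}_P D_P^2=D_w^2$ give
\[
\tilde\Gamma_j\circ\Delta_i=D_w^{+}\Bigl(\mathrm{avg}_P\,D_P(\Gamma_j^P\circ\Delta_i)D_P\Bigr)D_w^{+}\preceq\delta\,D_w^{+}D_w^{2}D_w^{+}\preceq\delta I .
\]
This matrix is nonnegative and symmetric, so its norm is its top eigenvalue, and hence $\norm{\tilde\Gamma_j\circ\Delta_i}\le\delta$.

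Reindexing the average shows that $\tilde\Gamma_j$ is invariant under output permutations. It therefore has the form $[\sigma_1\ne\sigma_2]\,G_j[v_1,v_2]$ with $G_j$ symmetric, and its ratio is at least $\SA(\Phi_j)$. Equivalently, you can symmetrize the convex reformulation in the variables $W=\Gamma\circ uu^\top$, $q=u\circ u$. With this $G_j$, your Steps~2 and~3 go through unchanged.
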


\subsubsection{Adversary Bounds via Query Reductions} \label{subsubsec:reduction}

After constructing an abstract composite problem, we must embed it into a grid lattice and construct an adversary matrix for the resulting family of hard instances. The following lemma shows that the adversary matrix and its associated lower bound can be readily transferred through such an embedding.

\begin{lemma}[Adversary Reduction Lemma] \label{lem:adversary-reduction}
Let $\Sigma_1,\Pi_1,\Sigma_2,\Pi_2$ be finite alphabets, and let $I_1,I_2$ be finite index sets. Consider two functions $\Phi_1\colon S_1\rightarrow\Pi_1$ and $\Phi_2\colon S_2\rightarrow\Pi_2$, where $S_1\subseteq\Sigma_1^{I_1}$ and $S_2\subseteq\Sigma_2^{I_2}$. Let $\mathcal{F}_1\subseteq S_1$ be a family of hard instances for $\Phi_1$, and suppose that a nonnegative adversary matrix $\Gamma$ for $\Phi_1|_{\mathcal{F}_1}$ yields a spectral lower bound of $\lambda / \delta$.

Suppose that there exists an injective map $\iota\colon\mathcal{F}_1\rightarrow S_2$ satisfying the following conditions:
\begin{enumerate}
\item The map $\iota$ preserves equality of outputs: for every $f_1,f_2\in\mathcal{F}_1$,
\begin{align}
\Phi_2(\iota(f_1))=\Phi_2(\iota(f_2))
\quad\Longrightarrow\quad
\Phi_1(f_1)=\Phi_1(f_2). \label{eq:iota-preserve-equality}
\end{align}

\item Each query position in the new problem depends, for distinguishing inputs, on a constant number of query positions in the original problem. Specifically, there exist a positive integer $c$ and maps $q_1,\ldots,q_c\colon I_2\rightarrow I_1$ such that, for every $z\in I_2$,
\begin{align}
\forall f_1,f_2\in\mathcal{F}_1,\quad
\iota(f_1)(z)\ne\iota(f_2)(z)
\quad\Rightarrow\quad
\exists j\in\{1,\ldots,c\},\;
f_1(q_j(z))\ne f_2(q_j(z)). \label{eq:difference-constant-dependence}
\end{align}
\end{enumerate}

Then, letting $\mathcal{F}_2=\iota(\mathcal{F}_1)$, there exists a nonnegative adversary matrix $\Gamma'$ for $\Phi_2|_{\mathcal{F}_2}$ that yields a spectral lower bound of $\lambda / (c\delta)$.
\end{lemma}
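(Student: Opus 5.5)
The plan is to define $\Gamma'$ by pushing $\Gamma$ forward along $\iota$: for $f_1,f_2\in\mathcal{F}_1$ set $\Gamma'[\iota(f_1),\iota(f_2)]:=\Gamma[f_1,f_2]$. Because $\iota$ is injective, this defines a matrix indexed by $\mathcal{F}_2$. It is a simultaneous permutation of the rows and columns of $\Gamma$. Hence $\Gamma'$ is symmetric and entrywise nonnegative, and $\norm{\Gamma'}=\norm{\Gamma}\geq\lambda$.

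The adversary condition \eqref{eq:nonnegative-adversary-condition} for $\Phi_2|_{\mathcal{F}_2}$ follows by contraposition. Suppose $\Phi_2(\iota(f_1))=\Phi_2(\iota(f_2))$. Then \eqref{eq:iota-preserve-equality} gives $\Phi_1(f_1)=\Phi_1(f_2)$, and since $\Gamma$ is an adversary matrix for $\Phi_1|_{\mathcal{F}_1}$, we get $\Gamma[f_1,f_2]=0$.

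The remaining step, and the only one requiring an estimate, is bounding $\norm{\Gamma'\circ\Delta'_z}$ for each $z\in I_2$, where $\Delta'_z$ is the distinguisher matrix of $\Phi_2|_{\mathcal{F}_2}$. Pulled back through $\iota$, $\Delta'_z$ becomes the $0/1$ matrix $D_z[f_1,f_2]=[\iota(f_1)(z)\neq\iota(f_2)(z)]$. By \eqref{eq:difference-constant-dependence}, $D_z\leq\sum_{j=1}^c\Delta_{q_j(z)}$ entrywise, where $\Delta_i$ are the distinguisher matrices for $\Phi_1|_{\mathcal{F}_1}$. Multiplying by $\Gamma$ entrywise gives
\begin{align}
0\leq\Gamma\circ D_z\leq\sum_{j=1}^c\Gamma\circ\Delta_{q_j(z)}\quad\text{entrywise}.
\end{align}
For nonnegative matrices, the spectral norm is monotone under entrywise domination. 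This holds because $\norm{M}=\max_{u,v\geq0,\ \norm{u}=\norm{v}=1}u^{\top}Mv$ when $M\geq0$, as replacing $u,v$ by their entrywise absolute values does not decrease the bilinear form. Combining this with the triangle inequality,
\begin{align}
\norm{\Gamma'\circ\Delta'_z}=\norm{\Gamma\circ D_z}\leq\sum_{j=1}^c\norm{\Gamma\circ\Delta_{q_j(z)}}\leq c\,\delta.
\end{align}

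The subtle point I expect is ensuring the monotonicity argument is legitimate: it needs both nonnegativity of $\Gamma$ and domination rather than equality. An indicator of a union is not the sum of indicators, which is exactly why we get the factor $c$ and not better. Taking the maximum over $z$ shows that $\Gamma'$ yields a spectral lower bound of $\lambda/(c\delta)$.
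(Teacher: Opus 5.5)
Your proposal is correct and matches the paper's proof: you relabel $\Gamma$ along the injective map $\iota$, which preserves symmetry, nonnegativity, the adversary condition (via \eqref{eq:iota-preserve-equality}) and the spectral norm. You then bound $\norm{\Gamma'\circ\Delta'_z}$ using the entrywise domination $\Delta'_z\leq\sum_{j=1}^c\Delta_{q_j(z)}$, monotonicity of the spectral norm for nonnegative matrices, and the triangle inequality. The paper uses monotonicity without proof; your justification of it via nonnegative test vectors $u,v$ is a welcome addition.
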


\begin{proof}
Since $\iota$ is injective, we have $|\mathcal{F}_2|=|\mathcal{F}_1|$. We obtain $\Gamma'$ simply by relabeling the rows and columns of $\Gamma$. Specifically, define
\begin{align}
\Gamma'[\iota(f_1),\iota(f_2)]
=
\Gamma[f_1,f_2].
\end{align}

First, $\Gamma'$ is indeed a nonnegative adversary matrix for $\Phi_2|_{\mathcal{F}_2}$. Relabeling preserves the nonnegativity and symmetry of $\Gamma$, while condition~\eqref{eq:iota-preserve-equality} ensures that $\Gamma'$ also satisfies condition~\eqref{eq:nonnegative-adversary-condition}. Moreover, relabeling preserves the spectral norm, so
$\norm{\Gamma'}=\norm{\Gamma}\geq\lambda$.

The upper bound on the denominator follows from condition~\eqref{eq:difference-constant-dependence}. Let $\Delta_i^{(1)}$ and $\Delta_z^{(2)}$ denote the distinguisher matrices for $\Phi_1|_{\mathcal{F}_1}$ and $\Phi_2|_{\mathcal{F}_2}$, respectively. Identifying the rows and columns indexed by $\mathcal{F}_2$ with those indexed by $\mathcal{F}_1$ via $\iota$, this condition gives, for every $z\in I_2$,
\begin{align}
\Delta_z^{(2)}
\leq
\sum_{j=1}^c \Delta_{q_j(z)}^{(1)},
\end{align}
where $\leq$ denotes entrywise comparison. By the monotonicity of the spectral norm under entrywise domination of nonnegative matrices and the triangle inequality, we obtain
\begin{align}
\norm{\Gamma'\circ\Delta_z^{(2)}}
\leq
\sum_{j=1}^c
\norm{\Gamma\circ\Delta_{q_j(z)}^{(1)}}
\leq c\delta.
\end{align}

Therefore, $\Gamma'$ yields a spectral lower bound of $\lambda / (c\delta)$ for $\Phi_2|_{\mathcal{F}_2}$.
\end{proof}

\section{Dimensional Hardness Bootstrapping Theorem} \label{sec:dimensional-bootstrapping}

In this section, we prove the following dimensional hardness bootstrapping theorem: Given a hard instance family for a low-dimensional Tarski fixed-point problem, increasing the dimension by $t$ while keeping the grid width unchanged amplifies the lower bound by a factor of $\Omega(t)$. 

\dimensionalbootstrapping*

In Section~\ref{subsec:chain-hard-instance}, we present a chain-like hard instance construction that can be embedded into $[n]^{tk}$. In Section~\ref{subsec:dimensional-spectral-analysis}, we construct the corresponding adversary matrix and carry out its spectral analysis. Applying the theorem requires that all possible fixed points of the hard instance family form an antichain. General hard instance families need not satisfy this condition; in particular, the herringbone family used in \cite{phillips2026quantum} does not. In Section~\ref{subsec:conjugate-transformation}, we present a transformation that lifts the possible fixed points of an arbitrary hard instance family with unique fixed points onto an antichain.

\subsection{Constructing a Chain of Hard Instance} \label{subsec:chain-hard-instance}

Let $\mathcal{F}$ be a family of hard instances for $\Tarski(n,k)$ such that every function has a unique fixed point and all possible fixed points lie in an antichain $A$. We arrange $t$ hard instances $f_1,\ldots,f_t\in\mathcal{F}$ in a chain so that, intuitively, any algorithm that finds the fixed point must solve the constituent problems one after another.

Specifically, consider a query $(x_1,\ldots,x_t)$ for which the first $i$ blocks are fixed points of their respective functions, namely $f_1(x_1)=x_1,\ldots,f_i(x_i)=x_i$, while the next block is not, namely $f_{i+1}(x_{i+1})\ne x_{i+1}$. We return $x_1,\ldots,x_i$ in the first $i$ blocks and $f_{i+1}(x_{i+1})$ in the $(i+1)$-st block. In the remaining blocks, we reveal no information about $f_{i+2},\ldots,f_t$; instead, we return padding values $\bot=\mathbf{0}$ or $\top=(n-1)\mathbf{1}$ that can be computed from the available information, as illustrated in Figure~\ref{fig:chain-visualization}.

\begin{figure}[htbp]
    \centering
    \includegraphics[width=\linewidth]{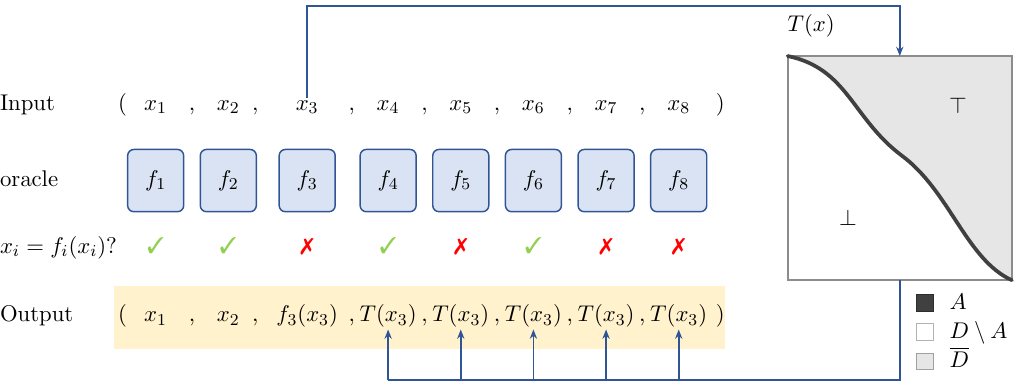}
    \caption{Visualization of the chain construction.}
    \label{fig:chain-visualization}
\end{figure}

Formally, let $D:=\{y\in[n]^k : \exists x\in A,\ y\preceq x\}$ be the downward closure of the antichain $A$ containing all possible fixed points. Define the padding function $T\colon[n]^k\rightarrow[n]^k$ by
\begin{align}
T(x) :=
\begin{cases}
\bot, & x\in D, \\
\top, & x\notin D.
\end{cases}
\label{eq:t-padding}
\end{align}

It is immediate from the definition of $D$ that $T$ is monotone.

For a tuple of functions $\boldsymbol{f}=(f_1,\ldots,f_t)$ and a tuple of query points $\boldsymbol{x}=(x_1,\ldots,x_t)$, define the first index at which the queried point is not a fixed point of the corresponding function:
\begin{align}
\sigma(\boldsymbol{f},\boldsymbol{x}) :=
\begin{cases}
t+1, & \forall i\in\{1,\ldots,t\},\ f_i(x_i)=x_i, \\
\min\{i\in\{1,\ldots,t\} : f_i(x_i)\ne x_i\},
& \text{otherwise.}
\end{cases}
\label{eq:sigma-first-unfixed}
\end{align}

The following chain constructor transforms this tuple of functions into a hard instance for $\Tarski(n,tk)$.

\begin{definition}[Chain constructor]
Define $\mathcal{C}_t\colon\mathcal{F}^t\rightarrow([n]^{kt}\rightarrow[n]^{kt})$ as follows. Given $\boldsymbol{f}=(f_1,\ldots,f_t)\in\mathcal{F}^t$ and a query point $\boldsymbol{x}=(x_1,\ldots,x_t)\in[n]^{kt}$, with each $x_i\in[n]^k$, the $i$-th $k$-dimensional output block is defined, for every $i\in\{1,\ldots,t\}$, by
\begin{align}
\bigl(\mathcal{C}_t(\boldsymbol{f})(\boldsymbol{x})\bigr)_i :=
\begin{cases}
x_i, & i<\sigma(\boldsymbol{f},\boldsymbol{x}), \\
f_i(x_i), & i=\sigma(\boldsymbol{f},\boldsymbol{x}), \\
T(x_{\sigma(\boldsymbol{f},\boldsymbol{x})}),
& i>\sigma(\boldsymbol{f},\boldsymbol{x}).
\end{cases}
\end{align}
Here, $T$ and $\sigma$ are defined in \eqref{eq:t-padding} and \eqref{eq:sigma-first-unfixed}, respectively.
\end{definition}

We next show that the chain constructor is injective and preserves both the uniqueness of the fixed point and monotonicity. Consequently, $\mathcal{C}_t(\mathcal{F}^t)$ is indeed a hard instance family for $\Tarski(n,tk)$. The antichain assumption on $A$ allows us to apply the following auxiliary lemma.

\begin{lemma} \label{lem:antichain-property}
Let $(\mathcal{L},\preceq)$ be a partially ordered set, and let $A$ be an antichain in $\mathcal{L}$. Consider its downward closure
$D:=\{y\in\mathcal{L} : \exists x\in A,\ y\preceq x\}$.
Then, for any $x\prec y$, where $x\prec y$ means $x\preceq y$ and $x\ne y$, the following statements hold:
\begin{enumerate}
\item If $x\in A$, then $y\in\overline{D}$, where $\overline{D}:=\mathcal{L}\setminus D$ is the complement of $D$ in $\mathcal{L}$.
\item If $y\in A$, then $x\in D\setminus A$.
\end{enumerate}
\end{lemma}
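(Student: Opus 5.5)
Both statements follow directly from the definitions of an antichain and of the downward closure $D$. We argue each by contradiction, using only transitivity and antisymmetry of $\preceq$ together with the antichain property \eqref{eq:antichain-definition}.

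For the first statement, suppose $x\in A$, $x\prec y$, and, aiming for a contradiction, $y\in D$. By definition of $D$ there is some $z\in A$ with $y\preceq z$. Transitivity gives $x\preceq y\preceq z$, hence $x\preceq z$ with both $x,z\in A$, so the antichain property forces $x=z$. Then $x\preceq y\preceq x$, and antisymmetry yields $y=x$, contradicting $x\ne y$. Therefore $y\notin D$, that is, $y\in\overline{D}$.

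For the second statement, suppose $y\in A$ and $x\prec y$. Since $x\preceq y$ with $y\in A$, we have $x\in D$ immediately by the definition of $D$. It remains to show $x\notin A$. If $x\in A$, then $x\preceq y$ with $x,y\in A$, and the antichain property gives $x=y$, contradicting $x\prec y$. Hence $x\in D\setminus A$.

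There is no real obstacle here. The only point requiring care is in the first statement: we must route through the element $z\in A$ witnessing $y\in D$, and use antisymmetry to collapse $x\preceq y\preceq x$ back into equality.
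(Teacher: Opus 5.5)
Your proposal is correct and follows the paper's proof essentially verbatim: contradiction via a witness $z\in A$ with $y\preceq z$ for the first statement, and the definition of $D$ plus the antichain property for the second. The one difference is that you spell out the antisymmetry step ($x=z$ forces $y=x$). The paper compresses this into ``by transitivity, $x\prec z$.''
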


\begin{proof}
For the first statement, suppose for contradiction that $y\in D$. Then there exists $z\in A$ such that $y\preceq z$. By transitivity, $x\prec z$. Since both $x$ and $z$ belong to $A$, this contradicts the antichain property of $A$. Therefore, $y\in\overline{D}$.

For the second statement, the definition of $D$ immediately gives $x\in D$. If $x\in A$ as well, then $x\prec y$ contradicts the antichain property of $A$. Hence, $x\in D\setminus A$.
\end{proof}

We now establish the properties of the chain constructor.

\begin{lemma} \label{lem:chain-constructor-property}
Let $\mathcal{F}$ be a family of hard instances for $\Tarski(n,k)$ such that every $f\in\mathcal{F}$ has a unique fixed point $f^*$ and all possible fixed points lie in an antichain $A$. For a positive integer $t$, let $\mathcal{C}_t$ be the associated chain constructor. Then the following statements hold:
\begin{enumerate}
\item The map $\mathcal{C}_t$ is injective.
\item For every $\boldsymbol{f}=(f_1,\ldots,f_t)\in\mathcal{F}^t$, the function $\mathcal{C}_t(\boldsymbol{f})$ is monotone.
\item For every $\boldsymbol{f}=(f_1,\ldots,f_t)\in\mathcal{F}^t$, the function $\mathcal{C}_t(\boldsymbol{f})$ has a unique fixed point, namely $(f_1^*,\ldots,f_t^*)$.
\end{enumerate}
\end{lemma}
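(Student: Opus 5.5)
The plan is to verify the three properties in order, using only the definitions of $\sigma$, $T$, and $\mathcal{C}_t$, the uniqueness of each $f_i^*$, and Lemma~\ref{lem:antichain-property}. The repeatedly used fact is that, by uniqueness, $f_i(x_i)=x_i$ holds if and only if $x_i=f_i^*\in A$.

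\emph{Fixed points.} First, $(f_1^*,\ldots,f_t^*)$ is a fixed point. Indeed, $\sigma=t+1$ there, so every output block equals the input block. Conversely, suppose $\sigma(\boldsymbol{f},\boldsymbol{x})=s\le t$. Then output block $s$ is $f_s(x_s)\ne x_s$, so $\boldsymbol{x}$ is not fixed. Hence every fixed point has $\sigma=t+1$. This means each $x_i$ is a fixed point of $f_i$, so $x_i=f_i^*$ and the fixed point is unique.

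\emph{Injectivity.} Let $\boldsymbol{f}\ne\boldsymbol{f}'$, and let $i$ be the smallest index with $f_i\ne f_i'$. Pick $x_i$ with $f_i(x_i)\ne f_i'(x_i)$. Set $x_j=f_j^*=f_j'^*$ for $j<i$, and choose the remaining blocks arbitrarily. Under both functions the first $i-1$ blocks are fixed, so both values of $\sigma$ are at least $i$. At most one of $f_i, f_i'$ fixes $x_i$. If neither does, output block $i$ is $f_i(x_i)$ versus $f_i'(x_i)$, which differ. If exactly one does, say $f_i$, then output block $i$ is $x_i=f_i(x_i)$ versus $f_i'(x_i)\ne x_i$. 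In either case the two outputs differ.

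\emph{Monotonicity.} This is the main case analysis. Take $\boldsymbol{x}\preceq\boldsymbol{y}$ and let $a=\sigma(\boldsymbol{f},\boldsymbol{x})$ and $b=\sigma(\boldsymbol{f},\boldsymbol{y})$. We compare the outputs block by block.
\begin{itemize}
\item[(i)] Case $a=b$. Blocks before $a$ are $x_j\preceq y_j$. Block $a$ uses monotonicity of $f_a$. Later blocks use monotonicity of $T$.
\item[(ii)] Case $a<b$. Blocks $j<a$ are again $x_j\preceq y_j$. Since $a<b$, block $a$ of $\boldsymbol{y}$ is fixed, so $y_a=f_a^*$. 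Then $f_a(x_a)\preceq f_a(f_a^*)=y_a$, which handles block $a$. Moreover $x_a\preceq f_a^*\in A$ gives $x_a\in D$, so $T(x_a)=\bot$. Hence every block $j>a$ of the output at $\boldsymbol{x}$ is $\bot$, which is below anything.
\item[(iii)] Case $a>b$. Here $x_b=f_b^*\in A$, while $y_b$ is not fixed, so $y_b\ne x_b$. Thus $x_b\prec y_b$, and Lemma~\ref{lem:antichain-property}(1) gives $y_b\notin D$, so $T(y_b)=\top$. Block $b$ compares $x_b=f_b(x_b)$ with $f_b(y_b)$, which is fine by monotonicity of $f_b$. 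Blocks $j<b$ are $x_j\preceq y_j$. Blocks $j>b$ at $\boldsymbol{y}$ equal $\top$.
\end{itemize}

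The only delicate step is the mixed cases (ii) and (iii), where the padding must be forced to be $\bot$ or $\top$. This is exactly where the antichain hypothesis, through membership in $D$ and Lemma~\ref{lem:antichain-property}, is needed.
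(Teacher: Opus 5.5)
Your proposal is correct and takes essentially the same approach as the paper. It uses the same smallest-differing-index witness for injectivity and the same first-unfixed-block argument for uniqueness. For monotonicity it uses the same three-way split according to which of $\sigma(\boldsymbol{f},\boldsymbol{x})$ and $\sigma(\boldsymbol{f},\boldsymbol{y})$ is smaller, with the antichain forcing the padding to $\bot$ or $\top$; the paper phrases this via $\ell=\min$ and which of $x_\ell,y_\ell$ is fixed. Your case (ii) needs only $x_a\in D$, which follows directly from downward closure, where the paper invokes the second part of the antichain lemma.
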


\begin{proof}
For the first statement, consider distinct tuples $\boldsymbol{f}^{(1)},\boldsymbol{f}^{(2)}\in\mathcal{F}^t$. Let $i$ be the smallest index such that $f_i^{(1)}\ne f_i^{(2)}$, and choose a query point $z\in[n]^k$ such that $f_i^{(1)}(z)\ne f_i^{(2)}(z)$. Define $\boldsymbol{x}$ by
\begin{align}
x_j :=
\begin{cases}
(f_j^{(1)})^*, & j<i, \\
z, & j=i, \\
\bot, & j>i.
\end{cases}
\end{align}
For every $j<i$, we have $f_j^{(1)}=f_j^{(2)}$, so $x_j$ is a fixed point of both functions. By the definition of $\mathcal{C}_t$, for each $a\in\{1,2\}$, the $i$-th output block is $\bigl(\mathcal{C}_t(\boldsymbol{f}^{(a)})(\boldsymbol{x})\bigr)_i = f_i^{(a)}(z)$, regardless of whether $z$ is a fixed point of $f_i^{(a)}$. Thus, $\mathcal{C}_t(\boldsymbol{f}^{(1)})(\boldsymbol{x}) \ne \mathcal{C}_t(\boldsymbol{f}^{(2)})(\boldsymbol{x})$, which proves injectivity.

For the second statement, fix $\boldsymbol{f}\in\mathcal{F}^t$ and query points $\boldsymbol{x},\boldsymbol{y}\in[n]^{kt}$ satisfying $\boldsymbol{x}\preceq\boldsymbol{y}$. Let $\ell := \min\{\sigma(\boldsymbol{f},\boldsymbol{x}), \sigma(\boldsymbol{f},\boldsymbol{y})\}$. For every $i<\ell$, both $x_i$ and $y_i$ are the unique fixed point of $f_i$. Hence, $\bigl(\mathcal{C}_t(\boldsymbol{f})(\boldsymbol{x})\bigr)_i=\bigl(\mathcal{C}_t(\boldsymbol{f})(\boldsymbol{y})\bigr)_i=f_i^*$. We consider the following cases:
\begin{enumerate}
\item If $\ell=t+1$, then the definition of $\sigma$ gives $\boldsymbol{x}=\boldsymbol{y}=\mathcal{C}_t(\boldsymbol{f})(\boldsymbol{x})=\mathcal{C}_t(\boldsymbol{f})(\boldsymbol{y})=(f_1^*,\ldots,f_t^*)$. Thus, monotonicity holds.

\item Suppose that $x_\ell$ is a fixed point of $f_\ell$ but $y_\ell$ is not. Then $x_\ell\prec y_\ell$ and $x_\ell\in A$. By Lemma~\ref{lem:antichain-property}, we have $y_\ell\in\overline{D}$. Monotonicity of $f_\ell$ gives $\bigl(\mathcal{C}_t(\boldsymbol{f})(\boldsymbol{x})\bigr)_\ell = f_\ell(x_\ell) \preceq f_\ell(y_\ell) = \bigl(\mathcal{C}_t(\boldsymbol{f})(\boldsymbol{y})\bigr)_\ell$.
For every $i>\ell$, we have $\bigl(\mathcal{C}_t(\boldsymbol{f})(\boldsymbol{x})\bigr)_i \preceq \top = T(y_\ell) = \bigl(\mathcal{C}_t(\boldsymbol{f})(\boldsymbol{y})\bigr)_i$.
Therefore, monotonicity holds.

\item Suppose that $x_\ell$ is not a fixed point of $f_\ell$ but $y_\ell$ is. Then $x_\ell\prec y_\ell$ and $y_\ell\in A$. By Lemma~\ref{lem:antichain-property}, we have $x_\ell\in D\setminus A$. Monotonicity of $f_\ell$ gives $\bigl(\mathcal{C}_t(\boldsymbol{f})(\boldsymbol{x})\bigr)_\ell = f_\ell(x_\ell) \preceq f_\ell(y_\ell) = \bigl(\mathcal{C}_t(\boldsymbol{f})(\boldsymbol{y})\bigr)_\ell$.
For every $i>\ell$, we have $\bigl(\mathcal{C}_t(\boldsymbol{f})(\boldsymbol{x})\bigr)_i = T(x_\ell) = \bot \preceq \bigl(\mathcal{C}_t(\boldsymbol{f})(\boldsymbol{y})\bigr)_i$.
Therefore, monotonicity holds.

\item Suppose that neither $x_\ell$ nor $y_\ell$ is a fixed point of $f_\ell$. Then
$\sigma(\boldsymbol{f},\boldsymbol{x})
=
\sigma(\boldsymbol{f},\boldsymbol{y})
=
\ell$.
The $\ell$-th output blocks are ordered by the monotonicity of $f_\ell$. Moreover, $T$ is monotone because $D$ is downward closed. Thus, for every $i>\ell$, $\bigl(\mathcal{C}_t(\boldsymbol{f})(\boldsymbol{x})\bigr)_i = T(x_\ell) \preceq T(y_\ell) = \bigl(\mathcal{C}_t(\boldsymbol{f})(\boldsymbol{y})\bigr)_i$.
Therefore, monotonicity holds.

\item When $\ell\leq t$, it is impossible for both $x_\ell$ and $y_\ell$ to be fixed points of $f_\ell$, since this would contradict the definition of $\ell$.
\end{enumerate}
These cases establish
$\mathcal{C}_t(\boldsymbol{f})(\boldsymbol{x})
\preceq
\mathcal{C}_t(\boldsymbol{f})(\boldsymbol{y})$.

For the third statement, the definition immediately shows that $(f_1^*,\ldots,f_t^*)$ is a fixed point of $\mathcal{C}_t(\boldsymbol{f})$. Now consider any $\boldsymbol{x}\ne(f_1^*,\ldots,f_t^*)$, and let $i$ be the smallest index such that $x_i\ne f_i^*$. Since $f_i$ has a unique fixed point, $f_i(x_i)\ne x_i$. Therefore, $\sigma(\boldsymbol{f},\boldsymbol{x})=i$, and $\bigl(\mathcal{C}_t(\boldsymbol{f})(\boldsymbol{x})\bigr)_i = f_i(x_i) \ne x_i$.
Hence, $\boldsymbol{x}$ is not a fixed point, proving uniqueness.
\end{proof}

\subsection{Adversary Matrix Construction and Spectral Analysis} \label{subsec:dimensional-spectral-analysis}

In fact, the construction above already yields a bootstrapping of classical deterministic query complexity. We go further by explicitly constructing an adversary matrix to establish a quantum query lower bound.

Let $\mathcal{F}$ be a hard instance family satisfying the assumptions of Theorem~\ref{thm:dimensional-hardness-bootstrapping}, and let $\Gamma_0$ be a nonnegative adversary matrix for $\Tarski(n, k)|_{\mathcal{F}}$ that yields a spectral lower bound of $\lambda / \delta$. Since every entry of $\Gamma_0$ is nonnegative, the Perron--Frobenius theorem guarantees a nonnegative unit eigenvector $v$ satisfying $\Gamma_0 v = \norm{\Gamma_0} v$.

By Lemma~\ref{lem:chain-constructor-property}, $\mathcal{C}_t$ is injective. Thus, each hard instance in $\mathcal{C}_t(\mathcal{F}^t)$ can be uniquely indexed by a tuple $(f_1, \ldots, f_t)\in\mathcal{F}^t$. Given $t$ matrices $M_1, \ldots, M_t$, each of size $|\mathcal{F}|\times|\mathcal{F}|$, we use their tensor product to define a matrix of size $|\mathcal{C}_t(\mathcal{F}^t)|\times|\mathcal{C}_t(\mathcal{F}^t)|$: for any two tuples $\boldsymbol{f}^{(1)}, \boldsymbol{f}^{(2)}\in\mathcal{F}^t$,
\begin{align}
\bigl(M_1\otimes\cdots\otimes M_t\bigr)[\boldsymbol{f}^{(1)}, \boldsymbol{f}^{(2)}]
=
\prod_{i=1}^t M_i[f_i^{(1)}, f_i^{(2)}].
\end{align}

For $i=1,\ldots,t$, define
\begin{align}
\Gamma_i = I^{\otimes(i-1)}\otimes\Gamma_0\otimes J^{\otimes(t-i)},
\end{align}
where $I$ is the $|\mathcal{F}|\times|\mathcal{F}|$ identity matrix and $J=vv^\top$. Define
\begin{align}
    \Gamma = \sum_{i = 1}^t \Gamma_i. \label{eq:chain-adversary-construction}
\end{align}

We show that $\Gamma$ is indeed a nonnegative adversary matrix.

\begin{lemma} \label{lem:chain-adversary-validity}
The matrix $\Gamma$ satisfies the following properties: for any $\boldsymbol{f}^{(1)}, \boldsymbol{f}^{(2)}\in\mathcal{F}^t$,
\begin{enumerate}
\item $\Gamma[\boldsymbol{f}^{(1)}, \boldsymbol{f}^{(2)}] = \Gamma[\boldsymbol{f}^{(2)}, \boldsymbol{f}^{(1)}] \geq 0$;
\item If $\boldsymbol{f}^{(1)*} = \boldsymbol{f}^{(2)*}$, then $\Gamma[\boldsymbol{f}^{(1)}, \boldsymbol{f}^{(2)}] = 0$.
\end{enumerate}
Therefore, $\Gamma$ is a nonnegative adversary matrix for $\Tarski(n, tk)|_{\mathcal{C}_t(\mathcal{F}^t)}$.
\end{lemma}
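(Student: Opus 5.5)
We verify the two properties directly from the definition $\Gamma=\sum_{i=1}^t \Gamma_i$ with $\Gamma_i = I^{\otimes(i-1)}\otimes\Gamma_0\otimes J^{\otimes(t-i)}$. For each entry,
\begin{align}
\Gamma_i[\boldsymbol{f}^{(1)},\boldsymbol{f}^{(2)}]
=
\Bigl(\prod_{j<i}\mathbb{1}[f_j^{(1)}=f_j^{(2)}]\Bigr)\,
\Gamma_0[f_i^{(1)},f_i^{(2)}]\,
\prod_{j>i} v[f_j^{(1)}]\,v[f_j^{(2)}].
\end{align}

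\textbf{Symmetry and nonnegativity.} The matrix $I$ is symmetric with entries in $\{0,1\}$. The matrix $\Gamma_0$ is symmetric and nonnegative because it is a nonnegative adversary matrix. The matrix $J=vv^\top$ is symmetric and nonnegative because $v$ is the nonnegative Perron--Frobenius eigenvector. Each factor in the product above is therefore nonnegative and invariant under swapping $\boldsymbol{f}^{(1)}$ and $\boldsymbol{f}^{(2)}$. Hence each $\Gamma_i$ is symmetric and entrywise nonnegative, and so is their sum $\Gamma$.

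\textbf{The adversary condition.} By Lemma~\ref{lem:chain-constructor-property}, the unique fixed point of $\mathcal{C}_t(\boldsymbol{f})$ is $\boldsymbol{f}^*=(f_1^*,\ldots,f_t^*)$. Uniqueness matters here: it means the output of $\Tarski(n,tk)$ on $\mathcal{C}_t(\boldsymbol{f})$ is exactly this tuple. Suppose $\boldsymbol{f}^{(1)*}=\boldsymbol{f}^{(2)*}$, so that $(f_i^{(1)})^*=(f_i^{(2)})^*$ for every $i$. Then $f_i^{(1)}$ and $f_i^{(2)}$ have the same output for $\Tarski(n,k)|_{\mathcal{F}}$, and condition~\eqref{eq:nonnegative-adversary-condition} for $\Gamma_0$ gives $\Gamma_0[f_i^{(1)},f_i^{(2)}]=0$. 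This factor appears in the entry of $\Gamma_i$, so every $\Gamma_i$ vanishes at $[\boldsymbol{f}^{(1)},\boldsymbol{f}^{(2)}]$, and hence so does $\Gamma$.

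Combining the two properties, $\Gamma$ is a nonnegative adversary matrix for $\Tarski(n,tk)|_{\mathcal{C}_t(\mathcal{F}^t)}$. Here we index rows and columns by tuples, which is legitimate because $\mathcal{C}_t$ is injective.

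There is no real obstacle in this argument. The only point requiring care is that the output of the composite instance is determined by the tuple of individual fixed points; this follows from part 3 of Lemma~\ref{lem:chain-constructor-property}.
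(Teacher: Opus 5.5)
Your proposal is correct and follows essentially the same argument as the paper. You get symmetry and nonnegativity entrywise from the factors $I$, $\Gamma_0$, and $J=vv^\top$, and you get the adversary condition from the vanishing $\Gamma_0[f_i^{(1)},f_i^{(2)}]$ factor in every $\Gamma_i$; your explicit appeals to uniqueness of the fixed point and to injectivity of $\mathcal{C}_t$ simply make precise what the paper leaves implicit.
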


\begin{proof}
The first property follows immediately from the nonnegativity and symmetry of $I$, $J$, and $\Gamma_0$.

For the second property, $\boldsymbol{f}^{(1)*} = \boldsymbol{f}^{(2)*}$ implies that $(f_i^{(1)})^* = (f_i^{(2)})^*$ for every $i=1,\ldots,t$. Since $\Gamma_0$ is a nonnegative adversary matrix, we have $\Gamma_0[f_i^{(1)}, f_i^{(2)}]=0$. Hence,
\begin{align}
\Gamma_i[\boldsymbol{f}^{(1)}, \boldsymbol{f}^{(2)}]
=
\prod_{j=1}^{i-1} I[f_j^{(1)}, f_j^{(2)}]
\cdot \Gamma_0[f_i^{(1)}, f_i^{(2)}]
\cdot \prod_{j=i+1}^{t} J[f_j^{(1)}, f_j^{(2)}]
=
0.
\end{align}
Since $\Gamma=\sum_{i=1}^t\Gamma_i$, it follows that $\Gamma[\boldsymbol{f}^{(1)}, \boldsymbol{f}^{(2)}]=0$. This establishes the second property.
\end{proof}

We obtain the lower bound on the numerator directly by applying $\Gamma$ to $v^{\otimes t}$:

\begin{lemma} \label{lem:chain-numerator}
$\norm{\Gamma} \geq t\lambda$.
\end{lemma}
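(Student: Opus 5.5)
We will show that $v^{\otimes t}$ is an eigenvector of every $\Gamma_i$ with eigenvalue $\norm{\Gamma_0}$; summing over $i$ and using $\norm{\Gamma_0}\geq\lambda$ then gives the bound. Since $\Gamma$ is symmetric with nonnegative entries, its spectral norm is at least $\langle u,\Gamma u\rangle$ for any unit vector $u$. We take $u=v^{\otimes t}$, where $v$ is the nonnegative Perron--Frobenius unit eigenvector of $\Gamma_0$ fixed above. This vector has unit norm, because $\norm{v^{\otimes t}}=\norm{v}^t=1$.

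Next we compute the action of each $\Gamma_i$ on $u$ using the mixed-product property of the tensor product:
\begin{align}
\Gamma_i v^{\otimes t}
=
(Iv)^{\otimes(i-1)}\otimes(\Gamma_0 v)\otimes(Jv)^{\otimes(t-i)}.
\end{align}
Each factor is a multiple of $v$:
\begin{enumerate}
\item $Iv=v$;
\item $\Gamma_0 v=\norm{\Gamma_0}v$, by the choice of $v$;
\item $Jv=vv^\top v=v$, since $v^\top v=1$.
\end{enumerate}
Hence $\Gamma_i v^{\otimes t}=\norm{\Gamma_0}\,v^{\otimes t}$ for every $i$.

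Summing over $i=1,\ldots,t$ gives $\Gamma v^{\otimes t}=t\norm{\Gamma_0}\,v^{\otimes t}$. Since $v^{\otimes t}$ is a unit vector, this yields
\begin{align}
\norm{\Gamma}\geq\norm{\Gamma v^{\otimes t}}=t\norm{\Gamma_0}\geq t\lambda,
\end{align}
where the last inequality is the hypothesis $\norm{\Gamma_0}\geq\lambda$.

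There is no real obstacle. The only point to check is that the tensor-product indexing of $\mathcal{F}^t$ defined above agrees with the standard Kronecker product, so that the mixed-product identity applies. This holds by the entrywise definition $\prod_i M_i[f_i^{(1)},f_i^{(2)}]$.
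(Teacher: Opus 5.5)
Your proof is correct and follows the paper's argument essentially verbatim. You apply $\Gamma$ to $v^{\otimes t}$ and use the mixed-product property with $Iv=v$, $\Gamma_0 v=\norm{\Gamma_0}v$, and $Jv=v$. Summing over $i$ then gives $\norm{\Gamma}\geq\norm{\Gamma v^{\otimes t}}=t\norm{\Gamma_0}\geq t\lambda$.
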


\begin{proof}
Since $v$ is a unit Perron vector of $\Gamma_0$, a straightforward calculation gives, for every $i=1,\ldots,t$,
\begin{align}
\Gamma_i v^{\otimes t}
=
v^{\otimes(i-1)} \otimes (\Gamma_0 v) \otimes (vv^\top v)^{\otimes(t-i)}
=
\norm{\Gamma_0} v^{\otimes t}.
\end{align}
Therefore, since $\Gamma=\sum_{i=1}^t\Gamma_i$ and $\norm{v^{\otimes t}}=1$,
\begin{align}
\norm{\Gamma}
\geq
\norm{\Gamma v^{\otimes t}}
=
\norm{t\norm{\Gamma_0}v^{\otimes t}}
=
t\norm{\Gamma_0}
\geq
t\lambda.
\end{align}
\end{proof}

To establish the upper bound on the denominator, we need the following auxiliary lemma. It generalizes \cite[Lemma~2.3]{li2026quantum}, and its proof follows essentially the same argument.

\begin{lemma} \label{lem:almost-orthogonal}
Let $A_1,\ldots,A_m\in\R^{n\times n}$ be real symmetric idempotent matrices, and let $B=\sum_{i=1}^m A_i$. Suppose that there exists a sequence of positive numbers $a_0,a_1,\ldots$ satisfying the following conditions:
\begin{enumerate}
\item The series $\sum_{i=0}^{\infty} a_i$ converges.
\item For every $i,j\in\{1,\ldots,m\}$, we have $\norm{A_iA_j}\leq a_{|i-j|}$.
\end{enumerate}
Then
\begin{align}
\norm{B}\leq a_0+2\sum_{i=1}^{\infty}a_i.
\end{align}
\end{lemma}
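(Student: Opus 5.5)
We prove the bound through $\norm{B}^2=\norm{B^\top B}=\norm{B^2}$, which holds since $B$ is real symmetric. In fact we bound every even power $\norm{B}^{2p}=\norm{B^{2p}}$ and let $p\to\infty$. This is the standard Cotlar--Stein almost-orthogonality argument. Expanding $B^{2p}=\sum_{i_1,\ldots,i_{2p}}A_{i_1}\cdots A_{i_{2p}}$, we estimate each word of length $2p$ in two ways. Because each $A_i$ is a symmetric idempotent, that is, an orthogonal projection, we have $\norm{A_i}\leq 1$ and $A_i=A_i^2$. The hypothesis $\norm{A_iA_j}\leq a_{|i-j|}$ controls adjacent pairs.

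The first step is to pair the factors as $(A_{i_1}A_{i_2})(A_{i_3}A_{i_4})\cdots$, which gives $\norm{\cdot}\leq\prod_{r}a_{|i_{2r-1}-i_{2r}|}$. We then pair them the other way, as $A_{i_1}(A_{i_2}A_{i_3})\cdots(A_{i_{2p-2}}A_{i_{2p-1}})A_{i_{2p}}$, using $\norm{A_{i_1}},\norm{A_{i_{2p}}}\leq 1$. This gives $\norm{\cdot}\leq\prod_{r}a_{|i_{2r}-i_{2r+1}|}$.

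Taking the geometric mean of the two estimates bounds each word by
\[
\prod_{s=1}^{2p-1}a_{|i_s-i_{s+1}|}^{1/2}.
\]
We sum this over $i_1,\ldots,i_{2p}$ by fixing $i_1$, which has $m$ choices, and summing successively over each later index. Each successive sum contributes at most
\[
\sum_{d\in\mathbb{Z}}a_{|d|}^{1/2}.
\]
This yields $\norm{B}^{2p}\leq m\bigl(\sum_{d\in\mathbb{Z}}a_{|d|}^{1/2}\bigr)^{2p-1}$, and letting $p\to\infty$ gives $\norm{B}\leq\sum_{d}a_{|d|}^{1/2}$.

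The main obstacle is that this standard argument produces square roots: it gives $\sum_d\sqrt{a_{|d|}}$ rather than the claimed $a_0+2\sum_{i\geq1}a_i$, and the square-root series need not even converge. To obtain the stated, linear-in-$a$ form, we plan to exploit idempotence more strongly. Write $A_{i}A_{j}=A_iA_iA_jA_j$, and estimate a word by grouping every adjacent pair simultaneously: we can write
\[
A_{i_1}\cdots A_{i_{q}}=(A_{i_1}A_{i_2})(A_{i_2}A_{i_3})\cdots(A_{i_{q-1}}A_{i_q})
\]
whenever consecutive factors are duplicated. Since $A_{i_s}=A_{i_s}^2$, the duplicated word equals the original. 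Hence
\[
\norm{A_{i_1}\cdots A_{i_q}}\leq\prod_{s=1}^{q-1}a_{|i_s-i_{s+1}|}
\]
with no square roots.

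We then repeat the counting: $\norm{B}^q=\norm{B^q}\leq m\bigl(a_0+2\sum_{i\geq1}a_i\bigr)^{q-1}$. Taking $q$-th roots and letting $q\to\infty$ removes the factor $m$ and yields
\[
\norm{B}\leq a_0+2\sum_{i=1}^\infty a_i.
\]
Convergence of $\sum a_i$ ensures that the right-hand side is finite. The only delicate points are checking that the duplication identity is valid, which needs only $A_i^2=A_i$, and that at each step the inner sum over the next index is bounded by the two-sided series. Both are routine, so this second route is the proof we commit to.
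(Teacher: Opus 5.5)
Your final argument is correct, but it takes a different route from the paper's.

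The paper factors $B=\mathcal{A}^\top\mathcal{A}$ with $\mathcal{A}x=(A_1x,\ldots,A_mx)$. Symmetry and idempotence enter here, through $A_i^\top A_i=A_i$. The paper then passes to $\|B\|=\|\mathcal{A}\mathcal{A}^\top\|$, whose $(i,j)$ block is $A_iA_j$, and bounds this block matrix by the scalar matrix $K_{ij}=\|A_iA_j\|$. It finishes with the row-sum bound on $\|K\|$, a one-shot argument with no limit.

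You instead use a moment method. The telescoping identity $A_{i_1}\cdots A_{i_q}=(A_{i_1}A_{i_2})(A_{i_2}A_{i_3})\cdots(A_{i_{q-1}}A_{i_q})$ is valid by $A_i^2=A_i$, and it gives chain bounds without square roots. Summing over paths gives $\|B\|^q=\|B^q\|\le mS^{q-1}$ with $S=a_0+2\sum_{d\ge1}a_d$, and letting $q\to\infty$ removes the factor $m$.

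Both proofs use idempotence for the same purpose, namely to avoid the square-root loss of Cotlar--Stein: the paper through the Gram factorization, you by duplicating factors.

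What each route buys:
\begin{itemize}
\item Your version needs only submultiplicativity and $\|B^q\|=\|B\|^q$. The latter is the only place symmetry is used; for non-symmetric idempotents the same computation would still bound the spectral radius.
\item Your version also avoids the block-matrix norm comparison that the paper invokes without proof.
\item The paper's version is shorter and finitary.
\end{itemize}

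If you keep $\|A_{i_s}A_{i_{s+1}}\|$ instead of replacing it by $a_{|i_s-i_{s+1}|}$, your path sum is exactly $\mathbf{1}^\top K^{q-1}\mathbf{1}\le m\|K\|^{q-1}$. So your argument recovers the same intermediate bound $\|B\|\le\|K\|$ as the paper.

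Your opening Cotlar--Stein paragraph is correct but superfluous and can be dropped.
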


\begin{proof}
Consider the linear map $\mathcal{A}\colon\R^n\rightarrow(\R^n)^m$ given by $\mathcal{A}x=(A_1x,\ldots,A_mx)$. Symmetry and idempotence imply $\mathcal{A}^\top\mathcal{A}=B$, so $\norm{B}=\norm{\mathcal{A}\mathcal{A}^\top}$.

The $(i,j)$-block of $\mathcal{A}\mathcal{A}^\top$ is $A_iA_j$. Let $K\in\R^{m\times m}$ have entries $K_{ij}=\norm{A_iA_j}$. Since $(A_iA_j)^\top=A_jA_i$, the matrix $K$ is symmetric. Comparing the norm of a block matrix with the norm of the matrix of its block norms gives
\begin{align}
\norm{B}
=
\norm{\mathcal{A}\mathcal{A}^\top}
\leq
\norm{K}
\leq
\max_{1\leq i\leq m}\sum_{j=1}^m K_{ij}
\leq
a_0+2\sum_{d=1}^{\infty}a_d.
\end{align}
\end{proof}

\begin{lemma} \label{lem:chain-denominator}
For every $\boldsymbol{x}\in[n]^{tk}$, we have $\norm{\Gamma\circ\Delta_{\boldsymbol{x}}}\leq C\delta$, where $C=5+2\sqrt{2}$.
\end{lemma}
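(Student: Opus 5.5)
The plan is to fix $\boldsymbol{x}=(x_1,\ldots,x_t)$ and first dominate $\Gamma\circ\Delta_{\boldsymbol{x}}$ entrywise by a sum of structured tensor-product matrices. Then I would bound the norm of that sum by an almost-orthogonality argument in the spirit of Lemma~\ref{lem:almost-orthogonal}. Since all matrices involved are nonnegative, entrywise domination can only increase the spectral norm, so it suffices to bound the dominating matrix.

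\textbf{Step 1 (pairs that can contribute to $\Gamma_i\circ\Delta_{\boldsymbol{x}}$).} An entry $\Gamma_i[\boldsymbol{f},\boldsymbol{g}]$ is nonzero only if $f_j=g_j$ for all $j<i$.
\begin{itemize}
\item If some $j<i$ has $f_j(x_j)\ne x_j$, then $\sigma(\boldsymbol{f},\boldsymbol{x})=\sigma(\boldsymbol{g},\boldsymbol{x})<i$. Both outputs then depend only on the common functions $f_1,\ldots,f_{i-1}$ and on $\boldsymbol{x}$, so $\Delta_{\boldsymbol{x}}$ vanishes on that entry.
\item Hence only pairs with $f_j^*=x_j$ for all $j<i$ survive.
\end{itemize}

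\textbf{Step 2 (the $i$-th block).} Among the surviving pairs, look at block $i$.
\begin{itemize}
\item If $x_i$ is the fixed point of both $f_i$ and $g_i$, then $f_i^*=g_i^*$ and $\Gamma_0[f_i,g_i]=0$.
\item If $x_i$ is the fixed point of exactly one of them, then $f_i(x_i)\ne g_i(x_i)$.
\item If it is the fixed point of neither, both outputs are $(\ldots,f_i(x_i),T(x_i),\ldots)$ versus $(\ldots,g_i(x_i),T(x_i),\ldots)$, which differ only if $f_i(x_i)\ne g_i(x_i)$.
\end{itemize}
Let $P_x$ denote the diagonal projector onto $\{f\in\mathcal{F}: f^*=x\}$ and set $G_i:=\Gamma_0\circ\Delta_{x_i}$. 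The three cases give the entrywise domination
\begin{align}
\Gamma\circ\Delta_{\boldsymbol{x}}\le \sum_{i=1}^t M_i,\qquad M_i:=P_{x_1}\otimes\cdots\otimes P_{x_{i-1}}\otimes G_i\otimes J^{\otimes(t-i)},
\end{align}
where $\norm{G_i}\le\delta$ by hypothesis.

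\textbf{Step 3 (mass bound on fixed-point classes).} The key quantitative input is that the Perron vector cannot concentrate on one fixed-point class: $p_x:=\norm{P_xv}^2\le 1/2$. To prove it, write $v=v_c+v_o$ with $v_c=P_xv$. Since $\Gamma_0$ vanishes on the block of instances sharing the fixed point $x$, we have $v_c^\top\Gamma_0v_c=0$. Hence
\begin{align}
\norm{\Gamma_0}\,p_x=v_c^\top\Gamma_0 v=v_c^\top\Gamma_0 v_o\le\norm{\Gamma_0}\sqrt{p_x(1-p_x)},
\end{align}
which gives $p_x\le 1/2$. Consequently $\norm{JP_x}=\norm{P_xJ}=\norm{P_xv}\le 1/\sqrt2$.

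\textbf{Step 4 (decay of cross terms).} For $i<j$, compute $M_i^\top M_j$ and $M_iM_j^\top$ factor by factor.
\begin{itemize}
\item Coordinates $l$ with $i<l<j$ contribute factors $JP_{x_l}$, each of norm at most $1/\sqrt2$.
\item Coordinate $i$ contributes $G_i^\top P_{x_i}$ or $G_iP_{x_i}$.
\item Coordinate $j$ contributes a factor $JG_j$, together with $J$'s afterwards and projectors before.
\end{itemize}
This yields $\norm{M_i^\top M_j},\norm{M_iM_j^\top}\le\delta^2 2^{-(|i-j|-1)/2}$ roughly, with the diagonal terms bounded by $\delta^2$.

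\textbf{Step 5 (summing the series).} The $M_i$ are not idempotent, so Lemma~\ref{lem:almost-orthogonal} does not apply verbatim. I would redo its block-norm argument in Cotlar--Stein form: factor each $M_i=L_iR_i$ with $L_i,R_i$ obtained by splitting $G_i$ through its polar decomposition, so that $\norm{L_i},\norm{R_i}\le\sqrt\delta$. Then bound $\norm{\sum_i L_iR_i}\le\norm{[L_1\cdots L_t]}\,\norm{[R_1;\ldots;R_t]}$, and control each of these norms by a Gram-matrix row-sum estimate that uses the geometric decay from Step~4. Summing the geometric series with ratio $1/\sqrt2$ gives
\begin{align}
2\sum_{d\ge1}2^{-(d-1)/2}=2+2\sqrt2,
\end{align}
and the extra $3$ in $C=5+2\sqrt2$ should come from the diagonal and nearest-neighbour terms, where one loses the $1/\sqrt2$ factor.

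\textbf{Main obstacle.} I expect the bookkeeping of Step~5 to be the main difficulty: obtaining exactly the constant $5+2\sqrt2$ without losing a square root in the decay rate. A naive Cotlar--Stein bound would only give ratio $2^{-1/4}$, which still produces a constant, just a worse one. Should the constant not match, I would first try to prove the qualitative $O(\delta)$ bound this way, and then refine the nearest-neighbour terms separately.
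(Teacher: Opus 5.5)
Your proposal is correct, and Steps 1--4 match the paper's argument. The paper proves the mask identity with equality, $\Gamma_i\circ\Delta_{\boldsymbol{x}}=P_1\otimes\cdots\otimes P_{i-1}\otimes(\Gamma_0\circ\Delta_{x_i})\otimes J^{\otimes(t-i)}$; your entrywise domination suffices. It also derives $\norm{JP_i}=\norm{P_iv}\le 1/\sqrt2$ from $P_i\Gamma_0P_i=0$, as in your Step 3. (Set aside $\Gamma_0=0$ before dividing by $\norm{\Gamma_0}$.)

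The genuine difference is Step 5. The paper handles non-idempotence with projection surrogates $E_i=P_1\otimes\cdots\otimes P_{i-1}\otimes I\otimes J^{\otimes(t-i)}$. Since $Q_i=E_iQ_iE_i$ and $\norm{Q_i}\le\delta$, a Perron vector $z$ of $Q$ gives $\norm{Q}\le\delta\, z^\top(\sum_iE_i)z\le\delta\norm{E}$. Lemma~\ref{lem:almost-orthogonal} then applies to the idempotents $E_i$, using $\norm{E_rE_s}\le 2^{-(s-r-1)/2}$.

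Your factorization $M_i=L_iR_i$ is the same mechanism unpacked: $L_i=E_iL_i$, $R_i=R_iE_i$, and $\delta E_i-R_i^\top R_i$ is positive semidefinite. It loses nothing, provided you run the factor-by-factor computation on the Gram blocks $R_iR_j^\top$ and $L_i^\top L_j$, not on $M_i^\top M_j$. These blocks carry the factors $JP_{x_l}$, $i<l<j$, at full strength, with a $\sqrt\delta$ at each end. Hence
\begin{align}
\norm{R_iR_j^\top},\ \norm{L_i^\top L_j}\le\delta\,2^{-(|i-j|-1)/2},
\end{align}
and no square root of the decay rate is taken. The ratio $2^{-1/4}$ arises only if you feed the Step~4 bounds on $M_i^\top M_j$ into Cotlar--Stein.

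The paper's route is shorter because it reuses the idempotent lemma. Yours is self-contained and avoids the Perron-vector step for $Q$.

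Your anticipated ``main obstacle'' comes from an arithmetic slip: $\sum_{d\ge1}2^{-(d-1)/2}=2+\sqrt2$, not $1+\sqrt2$. The block row sums are therefore at most $\delta\bigl(1+2(2+\sqrt2)\bigr)=(5+2\sqrt2)\delta$. The $1$ is the diagonal, and the nearest-neighbour terms are already the $d=1$ term of the series, so there is no ``extra $3$'' to account for. You get $\norm{\sum_iM_i}\le\norm{[L_1\cdots L_t]}\,\norm{[R_1;\ldots;R_t]}\le(5+2\sqrt2)\delta$ exactly, with no separate refinement of nearest neighbours.
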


\begin{proof}
If $\Gamma_0=0$, the claim is trivial, so assume that $\Gamma_0\ne0$.

We begin by computing $\Gamma_i\circ\Delta_{\boldsymbol{x}}$ for $i\in\{1,\ldots,t\}$. Specifically, for $\boldsymbol{f}^{(1)},\boldsymbol{f}^{(2)}\in\mathcal{F}^t$, we examine the value of $V=(\Gamma_i\circ\Delta_{\boldsymbol{x}})[\boldsymbol{f}^{(1)},\boldsymbol{f}^{(2)}]$ in the following cases.

If there exists $j<i$ such that $f_j^{(1)}\ne f_j^{(2)}$, then $V=0$, since the first $i-1$ tensor factors of $\Gamma_i$ are identity matrices.

Otherwise, suppose that $f_j^{(1)}=f_j^{(2)}$ for every $j<i$. If some such common function does not fix $x_j$, let $j<i$ be the smallest index with this property. Then $\mathcal{C}_t(\boldsymbol{f}^{(1)})(\boldsymbol{x})=\mathcal{C}_t(\boldsymbol{f}^{(2)})(\boldsymbol{x})=(x_1,\ldots,x_{j-1},f_j^{(1)}(x_j),T(x_j),\ldots,T(x_j))$. Hence, $\Delta_{\boldsymbol{x}}[\boldsymbol{f}^{(1)},\boldsymbol{f}^{(2)}]=0$, and thus $V=0$.

Now suppose further that, for every $j<i$, the common function $f_j^{(1)}=f_j^{(2)}$ fixes $x_j$. The following case analysis shows that, on the support of $\Gamma_i$, the chain-query mask agrees with the local-query mask at $x_i$:
\begin{enumerate}
\item Suppose that $x_i$ is a fixed point of exactly one of $f_i^{(1)}$ and $f_i^{(2)}$. Then exactly one of the $i$-th output blocks of $\mathcal{C}_t(\boldsymbol{f}^{(1)})(\boldsymbol{x})$ and $\mathcal{C}_t(\boldsymbol{f}^{(2)})(\boldsymbol{x})$ equals $x_i$, so these blocks differ. Likewise, exactly one of $f_i^{(1)}(x_i)$ and $f_i^{(2)}(x_i)$ equals $x_i$.

\item Suppose that $x_i$ is a fixed point of neither function. All preceding output blocks coincide, since the preceding query points are fixed points. All subsequent padding blocks also coincide, since they depend only on $x_i$. Thus, $\mathcal{C}_t(\boldsymbol{f}^{(1)})(\boldsymbol{x})\ne\mathcal{C}_t(\boldsymbol{f}^{(2)})(\boldsymbol{x})$ if and only if $f_i^{(1)}(x_i)=(\mathcal{C}_t(\boldsymbol{f}^{(1)})(\boldsymbol{x}))_i\ne(\mathcal{C}_t(\boldsymbol{f}^{(2)})(\boldsymbol{x}))_i=f_i^{(2)}(x_i)$.

\item Suppose that $x_i$ is a fixed point of both functions. Then $\Gamma_0[f_i^{(1)},f_i^{(2)}]=0$, and consequently $V=0$.
\end{enumerate}

Combining these cases, we obtain
\begin{align}
(\Gamma_i\circ\Delta_{\boldsymbol{x}})[\boldsymbol{f}^{(1)},\boldsymbol{f}^{(2)}]
=
\mathbf{1}\left\{
\bigwedge_{j=1}^{i-1}
\bigl(f_j^{(1)}=f_j^{(2)}\wedge f_j^{(1)}(x_j)=x_j\bigr)
\;\wedge\;
f_i^{(1)}(x_i)\ne f_i^{(2)}(x_i)
\right\}
\Gamma_i[\boldsymbol{f}^{(1)},\boldsymbol{f}^{(2)}].
\end{align}
Here, $\mathbf{1}\{P\}$ denotes the indicator of a proposition $P$, which equals $1$ if and only if $P$ is true. The identity $\mathbf{1}\{P\wedge Q\}=\mathbf{1}\{P\}\mathbf{1}\{Q\}$ allows us to distribute the conditions in this conjunction among the corresponding tensor factors. For $i=1,\ldots,t$, define
\begin{align}
P_i[f,g]
=
\mathbf{1}\{f=g\wedge f(x_i)=x_i\},
\quad
\widetilde{\Gamma}_i=\Gamma_0\circ\Delta_{x_i},
\end{align}
where $f,g\in\mathcal{F}$. We then obtain
\begin{align}
\Gamma_i\circ\Delta_{\boldsymbol{x}}
=
Q_i
:=
P_1\otimes\cdots\otimes P_{i-1}
\otimes\widetilde{\Gamma}_i
\otimes J^{\otimes(t-i)}.
\end{align}

Let $Q=\sum_{i=1}^t Q_i$. Then $Q=\Gamma\circ\Delta_{\boldsymbol{x}}$, so it suffices to bound the spectral norm of $Q$. To apply Lemma~\ref{lem:almost-orthogonal}, we first associate orthogonal projection surrogates with $Q$ and the $Q_i$. Define
\begin{align}
E_i
=
P_1\otimes\cdots\otimes P_{i-1}
\otimes I
\otimes J^{\otimes(t-i)},
\quad
E=\sum_{i=1}^t E_i.
\end{align}
Since $P_i$, $I$, and $J$ are real symmetric idempotent matrices, each $E_i$ is also real symmetric and idempotent. Moreover, $Q_i=E_i^\top Q_iE_i$ and $\norm{Q_i}\leq\delta$. Taking a unit Perron vector $z$ of $Q$, we have
$$
\begin{aligned}
\norm{Q}
&=
\sum_{i=1}^t z^\top Q_i z
=
\sum_{i=1}^t(E_i z)^\top Q_i(E_i z)\\
&\leq
\delta\sum_{i=1}^t\norm{E_i z}^2
=
\delta z^\top\left(\sum_{i=1}^t E_i\right)z
\leq
\delta\norm{E}.
\end{aligned}
$$

Having established this relation, we turn to the spectral norm of $E$. Assume that $r\leq s$; the other case reduces to this one by taking transposes. If $r=s$, then $\norm{E_r^2}=\norm{E_r}\leq1$. If $r<s$, then
\begin{align}
\norm{E_rE_s}
=
\prod_{j=1}^{r}\norm{P_j}
\cdot
\prod_{j=r+1}^{s-1}\norm{JP_j}
\cdot
\norm{J}^{t-s+1}
\leq
\prod_{j=r+1}^{s-1}\norm{JP_j}.
\end{align}

The quantity $\norm{JP_i}$ has a simple closed form:
\begin{align}
\norm{JP_i}
=
\norm{P_iJ}
=
\max_{w:\norm{w}=1}\norm{P_i vv^\top w}
=
\norm{P_i v}.
\end{align}
Since $P_i\Gamma_0P_i=0$, we have
\begin{align}
\norm{P_i v}
=
\frac{\norm{P_i\Gamma_0v}}{\norm{\Gamma_0}}
=
\frac{\norm{P_i\Gamma_0(I-P_i)v}}{\norm{\Gamma_0}}
\leq
\norm{P_i}\norm{(I-P_i)v}
\leq
\norm{(I-P_i)v}.
\end{align}

The vectors $P_i v$ and $(I-P_i)v$ are orthogonal, and their squared norms sum to $1$. Since $\norm{JP_i}=\norm{P_i v}$ is the smaller of the two norms, it follows that $\norm{JP_i}\leq1/\sqrt{2}$. Hence, $\norm{E_rE_s}\leq(1/\sqrt{2})^{s-r-1}$.

Taking $a_0=1$ and $a_i=(1/\sqrt{2})^{i-1}$ for $i\geq1$, we apply Lemma~\ref{lem:almost-orthogonal} to obtain
\begin{align}
\norm{\Gamma\circ\Delta_{\boldsymbol{x}}}
=
\norm{Q}
\leq
\delta\norm{E}
\leq
\delta\left(1+2\sum_{i=1}^{\infty}a_i\right)
=
C\delta,
\end{align}
where $C=5+2\sqrt{2}$.
\end{proof}

Combining the preceding results, we can now prove Theorem~\ref{thm:dimensional-hardness-bootstrapping}.

\begin{proof}[Proof of Theorem~\ref{thm:dimensional-hardness-bootstrapping}]
Applying the chain constructor $\mathcal{C}_t$ to tuples in $\mathcal{F}^t$ yields the hard instance family $\mathcal{C}_t(\mathcal{F}^t)$. By Lemma~\ref{lem:chain-constructor-property}, this family consists of valid instances of $\Tarski(n, tk)$. Let $\Gamma$ be the matrix defined in \eqref{eq:chain-adversary-construction}. Lemma~\ref{lem:chain-adversary-validity} establishes that $\Gamma$ is a valid nonnegative adversary matrix for this family. Lemmas~\ref{lem:chain-numerator} and~\ref{lem:chain-denominator} provide a lower bound on the numerator and an upper bound on the denominator, respectively. Thus, $\Gamma$ yields a spectral lower bound of $t\lambda/(C\delta)$, where $C=5+2\sqrt{2}$.
\end{proof}

\subsection{Enforcing the Antichain Condition by Conjugation Transformation} \label{subsec:conjugate-transformation}

The antichain condition on the possible fixed points is essential to the preceding construction: it is used to establish the monotonicity of the functions produced by the chain constructor. However, existing hard instance families, including one-dimensional families that are at least as hard as binary search and the two-dimensional herringbone functions from \cite{phillips2026quantum}, do not satisfy this condition. We now give a transformation that lifts the possible fixed points of any hard instance family with unique fixed points to an antichain. 

For any $x\in[n]^k$, define its conjugate by $\overline{x}=(n-1)\mathbf{1}-x$. We define the following antichain constructor.

\begin{definition}[Antichain constructor] \label{def:antichain-constructor}
Define $\mathcal{B}\colon\mathcal{F}\rightarrow([n]^{2k}\rightarrow[n]^{2k})$ as follows: for every $f\in\mathcal{F}$,
\begin{align}
\mathcal{B}(f)(x,y)=(f(x),\overline{f(\overline{y})}).
\end{align}
\end{definition}

The antichain constructor has the following properties.

\begin{lemma}\label{lem:antichain-constructor-property}
Let $\mathcal{F}$ be a hard instance family for $\Tarski(n,k)$ in which every function has a unique fixed point. Then the following statements hold:
\begin{enumerate}
\item Every function in $\mathcal{B}(\mathcal{F})$ is monotone and has a unique fixed point, and the set of all possible fixed points forms an antichain.
\item $\mathcal{B}$ is injective and preserves equality of answers.
\item Let $q_1(x,y)=x$ and $q_2(x,y)=\overline{y}$. For any $f_1,f_2\in\mathcal{F}$ and any query point $(x,y)\in[n]^{2k}$,
\begin{align}
\mathcal{B}(f_1)(x,y)\ne\mathcal{B}(f_2)(x,y)
\quad\Rightarrow\quad
\exists j\in\{1,2\},\quad
f_1(q_j(x,y))\ne f_2(q_j(x,y)).
\end{align}
\end{enumerate}
\end{lemma}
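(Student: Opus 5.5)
The plan is to verify the three statements directly from Definition~\ref{def:antichain-constructor}, using the fact that conjugation $x\mapsto\overline{x}$ is an order-reversing involution on $[n]^k$.

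\textbf{Monotonicity, uniqueness, antichain.} For monotonicity, take $(x,y)\preceq(x',y')$. The first block satisfies $f(x)\preceq f(x')$ by monotonicity of $f$. For the second block, $y\preceq y'$ gives $\overline{y'}\preceq\overline{y}$, hence $f(\overline{y'})\preceq f(\overline{y})$, and conjugating reverses the order again to give $\overline{f(\overline{y})}\preceq\overline{f(\overline{y'})}$.

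For uniqueness, $(x,y)$ is fixed iff $f(x)=x$ and $f(\overline{y})=\overline{y}$. By uniqueness of $f^*$, this forces $x=f^*$ and $y=\overline{f^*}$, so the unique fixed point is $\mathcal{B}(f)^*=(f^*,\overline{f^*})$.

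For the antichain property, suppose $(f_1^*,\overline{f_1^*})\preceq(f_2^*,\overline{f_2^*})$. Then $f_1^*\preceq f_2^*$, and $\overline{f_1^*}\preceq\overline{f_2^*}$ gives $f_2^*\preceq f_1^*$. Hence $f_1^*=f_2^*$ and the two points coincide, so the possible fixed points form an antichain.

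\textbf{Injectivity and equality of answers.} For injectivity, if $f_1\ne f_2$, pick $z$ with $f_1(z)\ne f_2(z)$. Then the first block of $\mathcal{B}(f_a)(z,y)$ differs for any $y$. For preservation of answers, the map $f^*\mapsto(f^*,\overline{f^*})$ is injective. Hence $\mathcal{B}(f_1)^*=\mathcal{B}(f_2)^*$ implies $f_1^*=f_2^*$, which is condition~\eqref{eq:iota-preserve-equality} of Lemma~\ref{lem:adversary-reduction}.

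\textbf{Dependence on query positions.} If $\mathcal{B}(f_1)(x,y)\ne\mathcal{B}(f_2)(x,y)$, then either the first blocks differ, i.e.\ $f_1(x)\ne f_2(x)$, or the second blocks differ. In the latter case, since conjugation is injective, $f_1(\overline{y})\ne f_2(\overline{y})$. These are exactly the queries $q_1(x,y)$ and $q_2(x,y)$.

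There is no real obstacle here. The only point needing care is tracking the double order reversal in the second block, both for monotonicity and for the antichain argument.
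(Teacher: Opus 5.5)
Your proposal is correct and follows the paper's proof essentially step for step. The one variation is in the antichain claim: the paper notes that every possible fixed point $(f^*,\overline{f^*})$ lies on the level set $\{z\in[n]^{2k}:\sum_{i=1}^{2k}z_i=k(n-1)\}$, which is an antichain, whereas you derive $f_1^*\preceq f_2^*$ and $f_2^*\preceq f_1^*$ directly from the order reversal, and both arguments work equally well. (A tiny nitpick: in the last part, passing from $\overline{f_1(\overline{y})}\ne\overline{f_2(\overline{y})}$ to $f_1(\overline{y})\ne f_2(\overline{y})$ only uses that conjugation is a well-defined map, not that it is injective.)
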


\begin{proof}
For the first statement, we first establish monotonicity. Consider any $(x_1,y_1)\preceq(x_2,y_2)$, so that $x_1\preceq x_2$ and $y_1\preceq y_2$. For the first output block, $x_1\preceq x_2$ implies $f(x_1)\preceq f(x_2)$. For the second output block, $y_1\preceq y_2$ successively implies $\overline{y_2}\preceq\overline{y_1}$, $f(\overline{y_2})\preceq f(\overline{y_1})$, and $\overline{f(\overline{y_1})}\preceq\overline{f(\overline{y_2})}$. Thus, $\mathcal{B}(f)$ is monotone. Moreover, its unique fixed point is $(f^*,\overline{f^*})$. All possible fixed points belong to the antichain $\{z\in[n]^{2k}:\sum_{i=1}^{2k}z_i=k(n-1)\}$, and hence themselves form an antichain.

The second statement follows immediately from the definition and the form of the fixed points.

The third statement follows directly from the definition.
\end{proof}

The preceding lemma shows that the antichain constructor is an injective map satisfying the hypotheses of Lemma~\ref{lem:adversary-reduction}. Consequently, we obtain the following theorem.

\antichainlifting*

\begin{proof}
This follows immediately from Lemmas~\ref{lem:antichain-constructor-property} and~\ref{lem:adversary-reduction}.
\end{proof}

\section{Logarithmic Hardness Bootstrapping Theorem} \label{sec:logarithmic-bootstrapping}

Intuitively, solving $\Tarski$ amounts to performing binary search with $\Tarski^*$ as a subroutine. In certain settings, however, $\Tarski^*$ is no easier than $\Tarski$. We therefore use it as an intermediate problem and apply Theorem \ref{thm:generalized-search-composition} to prove the following theorem.

\logarithmicbootstrapping*

The proof proceeds in three steps. In Section~\ref{subsec:slice-transform}, we construct from $\mathcal{F}$ a $\Tarski^*$ problem in which an arbitrary sign $\sigma$ can be hidden at the fixed point. In Section~\ref{subsec:composition}, we show that this problem is a generalized search problem, allowing us to compose it with ordered search using Theorem \ref{thm:generalized-search-composition}. In Section~\ref{subsec:embedding}, we embed the resulting abstract problem into a staircase function on a grid of width $nm$ and dimension $2k$.

The antichain condition is essential to this argument: it allows the sign hidden at the fixed point to be chosen arbitrarily. However, the embedding in the third step does not preserve this condition. To accumulate logarithmic factors through repeated bootstrapping, we must therefore apply the antichain constructor (Definition~\ref{def:antichain-constructor}) after the embedding. Thus, gaining an additional $\log m$ factor increases the dimension from $k$ to $4k$ and the grid width from $n$ to $nm$. The resulting dimension $4k$ is smaller than the dimensions $4k+2$ and $4k+1$ in \cite{chen2026quasifpt} and \cite{liu2026dimension}, respectively.

\subsection{\texorpdfstring
  {A Hidden-Sign Transformation from {\mdseries\textsc{Tarski}} to {\mdseries\textsc{Tarski}}$^*$}
  {A Hidden-Sign Transformation from Tarski to Tarski*}} \label{subsec:slice-transform}

Recall that an instance of $\Tarski^*(n,k)$ consists of a
monotone function $f\colon[n]^k\to[n]^k$ and a monotone
sign function $\sigma\colon[n]^k\to\{-1,0,1\}$.
Since every fixed point of $f$ is a solution to
$\Tarski^*$, whenever the solution is unique, it must
coincide with the unique fixed point $f^*$ of $f$.
Restricting to such instances, we define the
\emph{hidden-sign problem} as determining the sign
at this fixed point:
\begin{align}
\Tarski^H(n,k)\colon(f,\sigma)\mapsto\sigma(f^*).
\end{align}

In this subsection, we prove the following lemma.

\begin{lemma}
Let $\mathcal{F}$ be a hard instance family for $\Tarski(n,k)$ in which every function has a unique fixed point and all possible fixed points lie on an antichain $A$. Suppose that there exists a nonnegative adversary matrix $\Gamma_0$ for $\Tarski(n,k)|_{\mathcal{F}}$ that yields a spectral lower bound of $\lambda/\delta$. Then there exists a hard instance family $\mathcal{F}'$ for $\Tarski^H(n,k)$, together with a corresponding nonnegative adversary matrix $\Gamma$ that yields a spectral lower bound of $\lambda/\delta$.
\end{lemma}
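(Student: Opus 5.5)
Our plan is to attach to each $f\in\mathcal{F}$ a sign function $\sigma_{a,s}$ hiding an arbitrary sign $s\in\{-1,0,1\}$ at $f^*$, and then transfer $\Gamma_0$ by a relabeling-style argument close to Lemma~\ref{lem:adversary-reduction}. Using the downward closure $D$ of $A$, we define for each $a\in A$ and $s\in\{-1,0,1\}$
\begin{align}
\sigma_{a,s}(x)=
\begin{cases}
-1, & x\in D\setminus\{a\},\\
s, & x=a,\\
1, & x\notin D.
\end{cases}
\end{align}
Monotonicity follows from Lemma~\ref{lem:antichain-property}. If $x\prec a$, then $x\in D$, so $\sigma_{a,s}(x)=-1\le s$. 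If $a\prec y$, then $y\in\overline{D}$, so $\sigma_{a,s}(y)=1\ge s$. Since $D$ is downward closed, no other comparable pair can have $\sigma$ decrease. We set
\begin{align}
\mathcal{F}'=\{(f,\sigma_{f^*,s}) : f\in\mathcal{F},\ s\in\{-1,0,1\}\},
\end{align}
so that $\Tarski^H(f,\sigma_{f^*,s})=s$. We must also check that the $\Tarski^*$ solution is unique, so that $\Tarski^H$ is well defined. This is the step I expect to be the main obstacle. A point $x\ne f^*$ with $x\preceq f(x)$ and $\sigma(x)\ge0$ must lie outside $D$, since $\sigma=-1$ on $D\setminus\{f^*\}$. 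Excluding such points seems to require a property of $\mathcal{F}$ such as $f(x)\not\succeq x$ off $D$. I would first try to obtain this by replacing $\mathcal{F}$ with $\mathcal{B}(\mathcal{F})$ or a similar normalization. If that fails, I would instead define $\Tarski^H$ directly as $(f,\sigma)\mapsto\sigma(f^*)$ on these instances, which the paper's restriction seems to allow.

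For the adversary matrix, the natural choice is
\begin{align}
\Gamma=\Gamma_0\otimes(\mathbf{J}_3-I_3)/c,
\end{align}
where $\mathbf{J}_3$ is the all-ones $3\times3$ matrix, or alternatively $\Gamma_0\otimes \mathbf{J}_3$ restricted to pairs with different signs. However, pairs with the same $f$ and different $s$ are distinguished only by querying $\sigma(f^*)$, and those entries would be zero under $\Gamma_0\otimes(\cdot)$. That is acceptable for an adversary matrix, but the output condition then needs $\Gamma[(f,s),(g,s')]=0$ whenever $s=s'$. So I would take
\begin{align}
\Gamma[(f,s),(g,s')]=\Gamma_0[f,g]\cdot\mathbf{1}\{s\ne s'\},
\end{align}
that is, $\Gamma=\Gamma_0\otimes(\mathbf{J}_3-I_3)$. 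Its norm is $2\norm{\Gamma_0}\ge2\lambda$.

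For the denominator, consider the two kinds of queries separately. A query to $f$ at $x$ gives
\begin{align}
\Gamma\circ\Delta=(\Gamma_0\circ\Delta_x)\otimes(\mathbf{J}_3-I_3),
\end{align}
with norm at most $2\delta$. A query to $\sigma$ at $x$ distinguishes $(f,s)$ from $(g,s')$ only if $x\in\{f^*,g^*\}$. Here I would use that $\Gamma_0[f,g]\ne0$ forces $f^*\ne g^*$. Since $\sigma_{a,s}(x)\in\{-1,1\}$ is determined by whether $x\in D$ when $x\ne a$, a difference at $x=f^*$ means $f(x)\ne g(x)$, because $x$ is fixed by $f$ and not by $g$. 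So this case is dominated by $(\Gamma_0\circ\Delta_x)\otimes(\mathbf{J}_3-I_3)$, plus a symmetric term for $x=g^*$, giving norm at most $4\delta$. This yields a bound of about $\lambda/(2\delta)$. To reach exactly $\lambda/\delta$, I would refine the estimate by noting that $\sigma$ at $x$ and $f$ at $x$ are distinguished on the same set of pairs, so the dominating matrix is $\Gamma_0\circ\Delta_x\otimes(\mathbf{J}_3-I_3)$ itself, with norm at most $2\delta$. The ratio is then $2\lambda/(2\delta)=\lambda/\delta$.
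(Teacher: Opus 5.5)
\textbf{Verdict: there is a genuine gap, although it is small and easily closed.}

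Your construction matches the paper's. Your $\sigma_{f^*,s}$ is the paper's $\mathcal{H}_s$, and $\Gamma_0\otimes(\mathbf{J}_3-I_3)$ is the paper's $K_3\otimes\Gamma_0$. Your monotonicity check and your numerator and denominator estimates are sound. One small correction: on the support of $\Gamma$, the pairs distinguished by $\sigma$ at $x$ form a \emph{subset} of those distinguished by $f$ at $x$, not the same set. Entrywise domination is all you need, and it gives $2\delta$ directly, so the intermediate $4\delta$ estimate is unnecessary. It also shows that the combined query $(f(x),\sigma(x))$, which is the paper's oracle model, has the same mask as the $f$-query on that support.

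The gap is the step you flag as the main obstacle: you never show that $f^*$ is the unique $\Tarski^*$ solution of $(f,\sigma_{f^*,s})$. This is required, because the paper defines $\Tarski^H$ only on $\Tarski^*$ instances whose solution is unique. Neither of your fallbacks works.
\begin{itemize}
\item Passing to $\mathcal{B}(\mathcal{F})$ produces instances of $\Tarski^H(n,2k)$, not $\Tarski^H(n,k)$, and halves the bound. It is also pointless, since $\mathcal{F}$ already has the antichain property.
\item Redefining $\Tarski^H$ on arbitrary instances changes the problem the lemma is about.
\end{itemize}

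The missing idea is that the property you thought had to be imposed holds automatically. Take $x\ne f^*$; by uniqueness $x\ne f(x)$.
\begin{itemize}
\item Suppose $x\preceq f(x)$, hence $x\prec f(x)$. Then $f$ maps the interval $\{y : y\succeq x\}$ into itself, since $y\succeq x$ gives $f(y)\succeq f(x)\succeq x$. Tarski's theorem gives a fixed point in this interval, which by uniqueness is $f^*$. So $x\prec f^*$, and Lemma~\ref{lem:antichain-property} gives $x\in D\setminus A$. Hence $\sigma(x)=-1$, and $x$ satisfies neither solution condition.
\item You also omitted the symmetric direction. If $x\succ f(x)$, the same argument on $\{y : y\preceq x\}$ gives $x\succ f^*$. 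Then $x\in\overline{D}$ and $\sigma(x)=1$, which violates $\sigma(x)\le 0$.
\item If $x$ and $f(x)$ are incomparable, neither condition can hold.
\end{itemize}
With this argument added, your proof coincides with the paper's.
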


To this end, we define the following hidden-sign constructor, which hides a sign $s\in\{-1,0,1\}$.

\begin{definition}[Hidden-sign constructor]\label{def:slice-constructor}
Let $\mathcal{F}$ be a hard instance family for $\Tarski(n,k)$ in which every function has a unique fixed point and all possible fixed points lie on an antichain $A$. For each $s\in\{-1,0,1\}$, define $\mathcal{H}_s\colon\mathcal{F}\rightarrow(([n]^k\rightarrow[n]^k)\times([n]^k\rightarrow\{-1,0,1\}))$ as follows: for every $f\in\mathcal{F}$,
\begin{align}
\mathcal{H}_s(f)=(f,\sigma_s),\quad
\text{where }\sigma_s(x)=
\begin{cases}
s, & x=f^*,\\
-1, & x\ne f^*\wedge x\in D,\\
1, & x\in\overline{D}.
\end{cases}
\end{align}
Here, $D=\{y\in[n]^k:\exists x\in A,\ y\preceq x\}$ is the downward closure of $A$.
\end{definition}

Regardless of the sign to be hidden, the following lemma ensures that $\mathcal{H}_s(\mathcal{F})$ consists of valid instances of $\Tarski^H$.

\begin{lemma}
Let $\mathcal{F}$ be a hard instance family for $\Tarski(n,k)$ in which every function has a unique fixed point and all possible fixed points lie on an antichain $A$. Then, for every $s\in\{-1,0,1\}$, the following statements hold:
\begin{enumerate}
\item $\mathcal{H}_s$ is injective.
\item For every $f\in\mathcal{F}$, the sign function $\sigma_s$ is monotone.
\item For every $f\in\mathcal{F}$, $f^*$ is the unique solution to the $\Tarski^*$ instance specified by $(f,\sigma_s)$.
\end{enumerate}
\end{lemma}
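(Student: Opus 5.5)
The three statements are checked directly from Definition~\ref{def:slice-constructor}. The only ingredient beyond unfolding definitions is Lemma~\ref{lem:antichain-property}, applied to the antichain $A$ and its downward closure $D$.

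\emph{Injectivity.} The first component of $\mathcal{H}_s(f)$ is $f$ itself. Hence $\mathcal{H}_s(f_1)=\mathcal{H}_s(f_2)$ forces $f_1=f_2$, so $\mathcal{H}_s$ is injective.

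\emph{Monotonicity of $\sigma_s$.} Take $x\prec y$; the case $x=y$ is trivial. I will show $\sigma_s(x)\le\sigma_s(y)$ by splitting into cases according to where $x$ and $y$ lie among the three regions $\{f^*\}$, $D\setminus\{f^*\}$ and $\overline{D}$. Note that $f^*\in A\subseteq D$.
\begin{enumerate}
\item If $\sigma_s(x)=-1$ or $\sigma_s(y)=1$, the inequality holds trivially.
\item If $x=f^*$, then $x\in A$. Lemma~\ref{lem:antichain-property} gives $y\in\overline{D}$, so $\sigma_s(y)=1\ge s$.
\item If $y=f^*$, then $y\in A$. Lemma~\ref{lem:antichain-property} gives $x\in D\setminus A$, so $x\ne f^*$ and $\sigma_s(x)=-1\le s$.
\item If $x\in\overline{D}$, then $y\in\overline{D}$, because $D$ is downward closed. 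Hence $\sigma_s(x)=\sigma_s(y)=1$.
\end{enumerate}
These cases are exhaustive. The one place needing care is the pairs involving $f^*$: the value $s$ is arbitrary, so $f^*$ must be strictly separated from everything comparable to it. This is exactly where the antichain hypothesis enters, through Lemma~\ref{lem:antichain-property}.

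\emph{Uniqueness of the solution.} First, $f^*$ is a solution: $f(f^*)=f^*$ makes both $x\preceq f(x)$ and $x\succeq f(x)$ hold, so whichever sign $s$ takes, one of the two disjuncts is satisfied.

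Conversely, let $x\ne f^*$ be any point. I split on the region containing $x$.
\begin{enumerate}
\item Suppose $x\in D\setminus\{f^*\}$, so $\sigma_s(x)=-1$. Then $x$ is a solution only if $x\succeq f(x)$. In that case the iteration $f(x)\succeq f(f(x))\succeq\cdots$ descends to a fixed point below $x$, and by uniqueness that fixed point is $f^*$. So $f^*\preceq x$, and $f^*\ne x$ gives $f^*\prec x$. Lemma~\ref{lem:antichain-property} then places $x$ in $\overline{D}$, a contradiction.
\item Suppose $x\in\overline{D}$, so $\sigma_s(x)=1$. Then $x$ is a solution only if $x\preceq f(x)$. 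Iterating upward reaches a fixed point above $x$, so $x\preceq f^*\in A$. This gives $x\in D$, a contradiction.
\end{enumerate}
Hence $f^*$ is the unique solution.
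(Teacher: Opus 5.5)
Your proof is correct and follows essentially the same argument as the paper: injectivity from the first component, monotonicity of $\sigma_s$ via the downward closure of $D$ together with Lemma~\ref{lem:antichain-property} at $f^*$, and uniqueness by locating $f^*$ relative to $x$ and showing that $\sigma_s(x)$ has the wrong sign. The only cosmetic difference is in the uniqueness part: the paper splits on whether $x\prec f(x)$, $x\succ f(x)$, or the two are incomparable, and invokes Tarski's theorem on the interval above or below $x$; you instead split on the region containing $x$ and use monotone iteration, which is the same reasoning organized contrapositively.
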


\begin{proof}
The first statement is immediate by considering the first component.

For the second statement, away from $f^*$, $\sigma_s$ equals $-1$ on $D$ and $1$ on $\overline{D}$, which is consistent with monotonicity because $D$ is downward closed. By Lemma~\ref{lem:antichain-property}, every point strictly below $f^*$ lies in $D\setminus A$, whereas every point strictly above $f^*$ lies in $\overline{D}$. Thus, assigning any value $s\in\{-1,0,1\}$ at $f^*$ preserves monotonicity. Moreover, the other points of $A$ are incomparable with $f^*$, so their values impose no additional constraints on this assignment.

For the third statement, clearly $f^*$ is a solution to the $\Tarski^*$ instance for every choice of $s$. Now consider any $x\ne f^*$. By uniqueness, $x\ne f(x)$. If $x$ and $f(x)$ are incomparable, then $x$ satisfies neither of the two conditions defining a $\Tarski^*$ solution. Otherwise, we exclude the following two cases:
\begin{enumerate}
\item Suppose that $x\prec f(x)$. Since $f$ is monotone, it maps the interval $\{y\in[n]^k:y\succeq x\}$ into itself. By Tarski's fixed-point theorem, $f$ has a fixed point in this interval, which must be $f^*$ by uniqueness. Hence, $x\prec f^*$. Lemma~\ref{lem:antichain-property} gives $x\in D\setminus A$, so $\sigma_s(x)=-1$. Therefore, $x$ is not a solution to the $\Tarski^*$ instance.
\item Suppose that $x\succ f(x)$. Since $f$ is monotone, it maps the interval $\{y\in[n]^k:y\preceq x\}$ into itself. By Tarski's fixed-point theorem, $f$ has a fixed point in this interval, which must be $f^*$ by uniqueness. Hence, $x\succ f^*$. Lemma~\ref{lem:antichain-property} gives $x\in\overline{D}$, so $\sigma_s(x)=1$. Therefore, $x$ is not a solution to the $\Tarski^*$ instance.
\end{enumerate}
Thus, $f^*$ is the unique solution.
\end{proof}

Let 
\begin{align}
\mathcal{F}'=\mathcal{H}_{-1}(\mathcal{F})\sqcup\mathcal{H}_{0}(\mathcal{F})\sqcup\mathcal{H}_{1}(\mathcal{F}). \label{eq:tarski-h-construction}
\end{align} 
The preceding lemma shows that $\mathcal{F}'$ consists of valid instances of $\Tarski^H$. Every element of $\mathcal{F}'$ can be uniquely labeled by a pair $(s,f)$, where $s\in\{-1,0,1\}$ and $f\in\mathcal{F}$. We therefore construct the following adversary matrix. Let $K_3$ be the adjacency matrix of a triangle:
\begin{align}
\Gamma=K_3\otimes\Gamma_0,\quad
\text{where }K_3=
\begin{pmatrix}
0 & 1 & 1\\
1 & 0 & 1\\
1 & 1 & 0
\end{pmatrix}.
\end{align}

We show that $\Gamma$ scales both the numerator and the denominator in the adversary ratio by a factor of $\norm{K_3}=2$, and hence still yields a spectral lower bound of $\lambda/\delta$.

\begin{lemma} \label{lem:tarski-h-lowerbound}
$\Gamma$ is a nonnegative adversary matrix for $\Tarski^H(n, k)|_{\mathcal{F}'}$ and yields a spectral lower bound of $\lambda/\delta$.
\end{lemma}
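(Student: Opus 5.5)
We check the three ingredients: validity of $\Gamma=K_3\otimes\Gamma_0$ as an adversary matrix, a lower bound on $\norm{\Gamma}$, and an upper bound on each $\norm{\Gamma\circ\Delta_z}$. The index set of $\Tarski^H(n,k)$ consists of the query points $x\in[n]^k$, and each query returns the pair $(f(x),\sigma_s(x))$. Rows and columns of $\Gamma$ are labeled by pairs $(s,f)$, with $\Gamma[(s_1,f_1),(s_2,f_2)]=K_3[s_1,s_2]\,\Gamma_0[f_1,f_2]$.

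\textbf{Validity.} Nonnegativity and symmetry are inherited from $K_3$ and $\Gamma_0$. If two instances have the same answer, then $\sigma_{s_1}(f_1^*)=\sigma_{s_2}(f_2^*)$, which means $s_1=s_2$, so $K_3[s_1,s_2]=0$. Hence the entry vanishes.

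\textbf{Numerator.} The norm of a tensor product is multiplicative, so $\norm{\Gamma}=\norm{K_3}\,\norm{\Gamma_0}=2\norm{\Gamma_0}\geq 2\lambda$.

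\textbf{Denominator.} This is the only step with content. Fix a query point $x$. We claim that on the support of $\Gamma$,
\begin{align}
\Delta_x[(s_1,f_1),(s_2,f_2)]\leq \Delta^{(0)}_x[f_1,f_2],
\end{align}
where $\Delta^{(0)}_x$ is the distinguisher matrix of the original problem. To prove the claim, suppose $f_1(x)=f_2(x)$; we must show $\sigma_{s_1}(x)=\sigma_{s_2}(x)$.

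First, for $x\neq f_1^*,f_2^*$, the sign is $\sigma(x)=-1$ on $D$ and $+1$ on $\overline{D}$. This value depends only on $x$, so it agrees for both instances.

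Second, suppose $x$ is a fixed point of, say, $f_1$, so $x=f_1^*$. Then $f_2(x)=f_1(x)=x$, and by uniqueness $x=f_2^*$. Now $\Gamma_0[f_1,f_2]=0$, because the two functions have the same fixed point. So the entry lies outside the support, where the sign mismatch does not matter.

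Therefore $\Gamma\circ\Delta_x\leq K_3\otimes(\Gamma_0\circ\Delta^{(0)}_x)$ entrywise. Applying the monotonicity of the spectral norm for nonnegative matrices under entrywise domination, as in Lemma~\ref{lem:adversary-reduction}, gives
\begin{align}
\norm{\Gamma\circ\Delta_x}\leq 2\norm{\Gamma_0\circ\Delta^{(0)}_x}\leq 2\delta.
\end{align}

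The ratio is therefore at least $2\lambda/(2\delta)=\lambda/\delta$. The main point of the argument is that the hidden sign never leaks except at a common fixed point, and those entries are already killed by $\Gamma_0$.
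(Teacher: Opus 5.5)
Your proof is correct and follows essentially the same argument as the paper: validity comes from the zero diagonal of $K_3$, the numerator from $\|K_3\|=2$, and the denominator from the observation that the hidden sign is distinguishable only at a common fixed point, where $\Gamma_0$ already vanishes. The only cosmetic difference is that you prove the entrywise domination $\Gamma\circ\Delta_x\leq K_3\otimes(\Gamma_0\circ\Delta^{(0)}_x)$ and then use monotonicity of the spectral norm, whereas the paper states this as an equality; the reverse inequality is trivial because the oracle answer contains $f(x)$.
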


\begin{proof}
The symmetry and nonnegativity of $\Gamma$ follow immediately from those of $\Gamma_0$ and $K_3$. Since the diagonal entries of $K_3$ are zero, $s_1=s_2$ implies $\Gamma[(s_1,f_1),(s_2,f_2)]=0$. Thus, $\Gamma$ is a nonnegative adversary matrix.

For the lower bound on the numerator, it suffices to observe that
\begin{align}
\norm{\Gamma}
=\norm{K_3\otimes\Gamma_0}
=\norm{K_3}\cdot\norm{\Gamma_0}
\geq 2\lambda.
\end{align}

For the upper bound on the denominator, let $\Delta'_x$ and $\Delta_x$ denote the distinguisher matrices at $x$ for the new and original problems, respectively. We show that $\Gamma\circ\Delta'_x=K_3\otimes(\Gamma_0\circ\Delta_x)$ for every $x\in[n]^k$.

Consider two inputs $(s_1,f_1)$ and $(s_2,f_2)$. When $s_1\ne s_2$, if $x$ is a common fixed point of $f_1$ and $f_2$, then $\mathcal{H}_{s_1}(f_1)(x)\ne\mathcal{H}_{s_2}(f_2)(x)$, but $\Gamma_0[f_1,f_2]=0$. Otherwise, $\mathcal{H}_{s_1}(f_1)(x)\ne\mathcal{H}_{s_2}(f_2)(x)$ if and only if $f_1(x)\ne f_2(x)$. On the other hand, when $s_1=s_2$, we have $K_3[s_1,s_2]=0$. These observations establish the claimed matrix identity. Therefore,
\begin{align}
\norm{\Gamma\circ\Delta'_x}
=\norm{K_3\otimes(\Gamma_0\circ\Delta_x)}
=2\norm{\Gamma_0\circ\Delta_x}
\leq 2\delta.
\end{align}

Consequently, $\Gamma$ yields a spectral lower bound of $(2\lambda)/(2\delta)=\lambda/\delta$, proving the lemma.
\end{proof}

\subsection{\texorpdfstring
  {Composing {\mdseries\textsc{Tarski}}$^*$ with Ordered Search}
  {Composing Tarski* with Ordered Search}} \label{subsec:composition}

We next compose $\Tarski^H$ with $\OS_m$. We first need to show that $\Tarski^H|_{\mathcal{F}'}$ is a generalized search problem. This follows directly from the definition of $\mathcal{F}'$.

\begin{lemma}
$\Tarski^H(n, k)|_{\mathcal{F}'}$ is a generalized search problem.
\end{lemma}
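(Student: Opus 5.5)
We verify the two conditions of Definition~\ref{def:generalized-search} directly, using the labeling of $\mathcal{F}'$ by pairs $(s,f)$ with $s\in\{-1,0,1\}$ and $f\in\mathcal{F}$ from \eqref{eq:tarski-h-construction}. Here $\Pi=\{-1,0,1\}$ plays the role of the output alphabet, and $\mathcal{F}$ itself (or any fixed bijection $\mathcal{F}\to[|\mathcal{F}|]$) serves as the set of variants.

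\textbf{Condition 1.} The preimage of each output $s$ under $\Tarski^H(n,k)|_{\mathcal{F}'}$ is exactly $\mathcal{H}_s(\mathcal{F})$. Indeed, $\mathcal{H}_s(f)=(f,\sigma_s)$ has unique $\Tarski^*$ solution $f^*$ by the preceding lemma, and $\sigma_s(f^*)=s$. The family $\mathcal{F}'$ is a disjoint union of the three images, and each $\mathcal{H}_s$ is injective. Hence every preimage has size exactly $|\mathcal{F}|$, and the input labeled $(s,f)$ is $\mathcal{H}_s(f)$, with output $s$.

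\textbf{Condition 2.} A query at $x\in[n]^k$ returns the pair $(f(x),\sigma_s(x))$, where $\sigma_s$ is the sign function of $\mathcal{H}_s(f)$. The component $f(x)$ does not depend on $s$. The component $\sigma_s(x)$ depends on $s$ only when $x=f^*$; otherwise it equals $-1$ or $1$ according to whether $x\in D\setminus\{f^*\}$ or $x\in\overline{D}$, which is determined by $f$ alone. So for inputs $(s_1,f_1)$ and $(s_2,f_2)$, we check that the truth of ``$\mathcal{H}_{s_1}(f_1)(x)=\mathcal{H}_{s_2}(f_2)(x)$'' depends only on $f_1$, $f_2$, $x$, and whether $s_1=s_2$. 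There are three cases.
\begin{enumerate}
\item If $f_1(x)\ne f_2(x)$, the answers differ regardless of the signs.
\item If $f_1(x)=f_2(x)$ and $x$ is a fixed point of neither function, then $\sigma_{s_1}(x)$ and $\sigma_{s_2}(x)$ are the $s$-independent values determined by $x\in D$ versus $x\in\overline{D}$. These values coincide, since the set $D$ is common to both.
\item If $f_1(x)=f_2(x)$ and $x$ is a fixed point of some $f_a$, then $f_1(x)=f_2(x)=x$. So $x$ is a fixed point of both, and by uniqueness $x=f_1^*=f_2^*$. The signs are then $s_1$ and $s_2$, and they agree if and only if $s_1=s_2$.
\end{enumerate}
In every case, equality of the answers is determined by $(f_1,f_2,x)$ and the predicate $[s_1=s_2]$, which is exactly condition~2.

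The only point needing care is the mixed situation in the third case, where $x$ could a priori be the fixed point of exactly one function while $f_1(x)=f_2(x)$. This is ruled out because $f_1(x)=f_2(x)=x$ forces $x$ to fix both functions. With that observation the lemma is immediate.
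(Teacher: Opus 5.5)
Your proof is correct and follows essentially the same argument as the paper: the same $(s,f)$ labeling giving exactly $|\mathcal{F}|$ preimages per sign, and the same three-way case analysis on the oracle answers at $x$. The only difference is presentation: you split first on whether $f_1(x)=f_2(x)$, whereas the paper splits on how many of $f_1,f_2$ fix $x$.
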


\begin{proof}
By the definition of $\mathcal{F}'$, each of the output symbols $-1,0,1$ has exactly $|\mathcal{F}|$ preimages. For any query point $x$ and two inputs labeled by $(s_1,f_1),(s_2,f_2)\in\mathcal{F}'$, we distinguish the following cases according to whether $x$ is a fixed point of $f_1$ and $f_2$. This criterion depends only on $x,f_1,f_2$.
\begin{enumerate}
\item If $x$ is a fixed point of neither $f_1$ nor $f_2$, then $\sigma_{s_1}(x)=\sigma_{s_2}(x)$. Thus, the oracle answers differ if and only if $f_1(x)\ne f_2(x)$, which, for this fixed query, depends only on $f_1,f_2$.
\item If $x$ is a fixed point of exactly one of $f_1$ and $f_2$, then $f_1(x)\ne f_2(x)$. Hence, the oracle answers always differ, regardless of $s_1,s_2$.
\item If $x$ is a fixed point of both $f_1$ and $f_2$, then $f_1(x)=f_2(x)=x$, and $\sigma_{s_1}(x)\ne\sigma_{s_2}(x)$ if and only if $s_1\ne s_2$. Thus, whether the oracle answers differ depends only on whether $s_1=s_2$, not on their specific values.
\end{enumerate}
Therefore, for each query point $x$, equality of the oracle answers depends only on $f_1,f_2$ and whether $s_1=s_2$. It follows that $\Tarski^H(n, k)|_{\mathcal{F}'}$ is a generalized search problem.
\end{proof}

This satisfies the hypotheses of Theorem~\ref{thm:generalized-search-composition}. Consequently, we obtain the following lemma.

\begin{lemma} \label{lem:abstract-composition}
There exists an absolute constant $C>0$ such that, for every integer $m\geq 2$, the composition of $\OS_m$ with $m$ copies of $\Tarski^H(n,k)|_{\mathcal{F}'}$ admits a nonnegative adversary matrix $\Gamma$ witnessing the lower bound
\begin{align}
\SA\bigl(\OS_m\circ(
\Tarski^H(n,k)|_{\mathcal{F}'},\ldots,
\Tarski^H(n,k)|_{\mathcal{F}'}
)\bigr)
\geq \SA(\OS_m) \cdot
\SA\bigl(\Tarski^H(n,k)|_{\mathcal{F}'}\bigr).
\end{align}
\end{lemma}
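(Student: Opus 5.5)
This follows by instantiating Theorem~\ref{thm:generalized-search-composition} with outer function $\Psi=\OS_m$ and index set $J=\{1,\ldots,m\}$. Each inner function is $\Phi_j=\Tarski^H(n,k)|_{\mathcal{F}'}$, so $\Pi=\{-1,0,1\}$. The alphabet $\Sigma=[n]^k\times\{-1,0,1\}$ records the pair $(f(x),\sigma(x))$ returned by a query, and the index set is $I_j=[n]^k$.

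First we check that the composition is well typed in the sense of Definition~\ref{def:composite-function}. The domain $T$ of $\OS_m$ is exactly the set of ordered-search strings $(f_i(1),\ldots,f_i(m))$ with $i\in\{1,\ldots,m\}$. Each of these strings takes values in $\Pi$, and every symbol of $\Pi$ has nonempty preimage under $\Tarski^H|_{\mathcal{F}'}$, since $\mathcal{F}'$ contains $\mathcal{H}_s(\mathcal{F})$ for every $s$. Hence the composite domain $S_{\mathrm{comp}}=\bigcup_{y\in T}\prod_j\Phi_j^{-1}(y_j)$ is nonempty and well defined.

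By the preceding lemma, each $\Phi_j$ is a generalized search function with $|\mathcal{F}|$ variants. The labeling required by Definition~\ref{def:generalized-search} is $(s,f)\mapsto(\pi,j)=(s,f)$. The case analysis given there shows that equality at a query $x$ depends only on $f_1$, $f_2$ and whether $s_1=s_2$. This is precisely condition~2 of the definition, and $J$ is nonempty since $m\ge 2$.

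Theorem~\ref{thm:generalized-search-composition} then gives
\[
\SA\bigl(\OS_m\circ(\Phi_j)_j\bigr)\ \ge\ \SA(\OS_m)\cdot\min_j\SA(\Phi_j)=\SA(\OS_m)\cdot\SA\bigl(\Tarski^H(n,k)|_{\mathcal{F}'}\bigr).
\]
The witnessing adversary matrix is the explicit one of \cite[Definition~7]{phillips2026quantum}. It is built from an optimal adversary matrix for $\OS_m$ and, for the inner functions, from $\Gamma=K_3\otimes\Gamma_0$ of Lemma~\ref{lem:tarski-h-lowerbound}. The statement's absolute constant $C$ plays no role in this inequality; it only enters later, when Lemma~\ref{lem:ordered-search-lowerbound} is invoked to turn $\SA(\OS_m)$ into $(\ln m-2)/\pi$.

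The only genuine point to watch is that the generalized-search property must hold uniformly over all $m$ copies, with a common number of variants. This is automatic here because the copies are identical. The remaining checks are bookkeeping: Phillips's framework assumes a single query alphabet $\Sigma$, and our queries return pairs, so we must confirm that this fits. Since all inner problems share the same alphabet, this is fine, and the conclusion yields a combined bound of at least $\SA(\OS_m)\lambda/\delta$.
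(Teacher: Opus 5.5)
Your proposal is correct and matches the paper's proof, which is the single step of invoking Theorem~\ref{thm:generalized-search-composition}, with the preceding lemma certifying that $\Tarski^H(n,k)|_{\mathcal{F}'}$ is a generalized search problem. Your extra bookkeeping is accurate: the composite domain is well defined, the inner copies share a common query alphabet, and the constant $C$ plays no role. One small point: plugging in $K_3\otimes\Gamma_0$ as the inner matrix witnesses $\SA(\OS_m)\lambda/\delta$ rather than the product with $\SA(\Tarski^H(n,k)|_{\mathcal{F}'})$ itself, which is harmless since only the former is used later.
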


\begin{proof}
Applying Theorem~\ref{thm:generalized-search-composition} yields the claimed inequality.
\end{proof}

We denote the domain of the composit problem by $\mathcal{D}$, and label an element in $\mathcal{D}$ by $(s_1, f_1), \ldots, (s_m, f_m)$.

\subsection{Embedding of the Composed Problem} \label{subsec:embedding}

Finally, we embed the hard instances constructed above into a grid lattice. By Definition~\ref{def:composite-function}, query positions of the composed problem can be labeled by pairs $(i,x)\in\{1,\ldots,m\}\times[n]^k$. We therefore consider the following set:
\begin{align}
Q=\{1,\ldots,m\}\times[n]^k
=\bigsqcup_{i=1}^m P_i,\quad
\text{where }P_i=\{(i,x):x\in[n]^k\}.
\end{align}

Equip $Q$ with the partial order $\preceq$ defined as follows: for any $(i,x),(j,y)\in Q$, we have $(i,x)\preceq(j,y)$ if and only if $i<j$, or $i=j$ and $x\preceq y$. Although $Q$ is an abstract lattice, it admits an order-preserving embedding into the grid lattice $[nm]^k$. Consider the maps $e\colon Q\rightarrow[nm]^k$ and $e^\dagger\colon[nm]^k\rightarrow Q$ defined by
\begin{equation}
\begin{aligned}
e(i,x)&=n(i-1)\mathbf{1}+x,\\
a(z)&=\max_{j\in\{1,\ldots,k\}}
\left\lfloor\frac{z_j}{n}\right\rfloor,\\
e^\dagger(z)&=
\bigl(a(z)+1,\,
\max(z,na(z)\mathbf{1})-na(z)\mathbf{1}\bigr).
\end{aligned}
\end{equation}
Here, the $\max$ in the third line is taken entrywise. Both maps are clearly order-preserving, and $e^\dagger(e(i,x))=(i,x)$ for every $(i,x)\in Q$.

\subsubsection{Reducing the Embedding to a Staircase}

It turns out that directly embedding the composed problem into $Q$ is difficult. We therefore consider a staircase $S$ in $Q^2$, equipped with the product order: for any $(x_1,y_1),(x_2,y_2)\in Q^2$, we have $(x_1,y_1)\preceq(x_2,y_2)$ if and only if $x_1\preceq x_2\wedge y_1\preceq y_2$. The lattice $Q^2$ admits an order-preserving embedding into $[nm]^{2k}$. Define
\begin{align}
S=\bigcup_{i=1}^m D_i
\cup\bigcup_{i=1}^{m-1}B_i,
\end{align}
where $D_i=P_i\times P_i$ is called a \emph{main block}, and $B_i=P_i\times P_{i+1}$ is called a \emph{bridge}. The following retraction $\rho\colon Q^2\rightarrow S$ maps points outside the staircase onto it and fixes every point of the staircase:
\begin{align}
\rho((a,x),(b,y))=
\begin{cases}
((a,x),(a,\bot)), & b<a,\\
((a,x),(b,y)), & b=a\vee b=a+1,\\
((b-1,\bot),(b,y)), & b>a+1.
\end{cases}
\end{align}
The monotonicity of $\rho$ follows from a routine case analysis, which we omit. We illustrate $e, e^\dagger$ and $\rho$ in Figure~\ref{fig:embedding-maps}.

\begin{figure}[htbp]
    \centering
    \includegraphics[width=\linewidth]{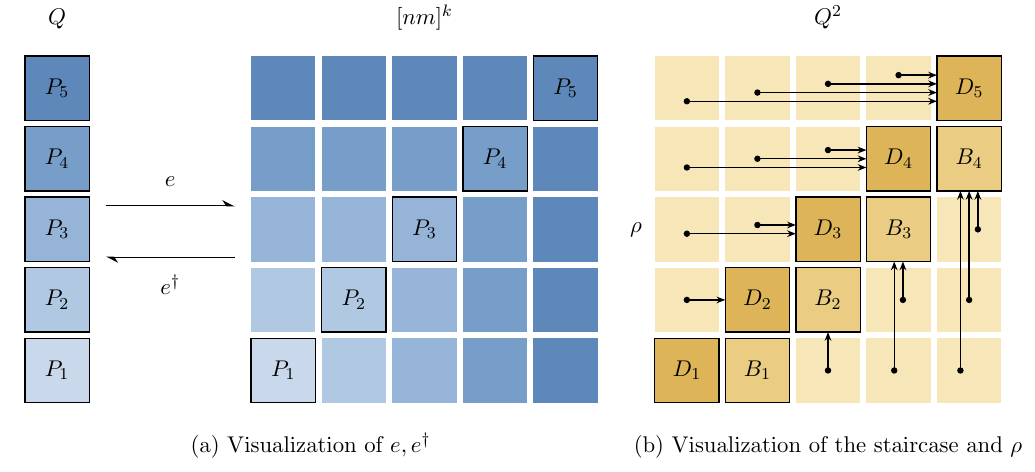}
    \caption{Visualization of the embeddings: $e, e^\dagger, \rho$ and the staircase.}
    \label{fig:embedding-maps}
\end{figure}

Our task thus reduces to constructing a staircase constructor $\hat{\mathcal{E}}\colon\mathcal{D}\rightarrow(S\rightarrow S)$ that produces a monotone self-map of $S$ for each input. The order-preserving maps and retractions above then ensure that the \emph{embedding constructor}
\begin{align}
\mathcal{E}\colon(\boldsymbol{s},\boldsymbol{f})
\mapsto
(e\times e)\circ
\hat{\mathcal{E}}(\boldsymbol{s},\boldsymbol{f})
\circ\rho\circ
(e^\dagger\times e^\dagger)
\end{align}
transforms each instance of the abstract composed problem into an instance of $\Tarski(nm,2k)$.

\subsubsection{The Staircase Hard Instance Constructor}

We now construct the staircase constructor $\hat{\mathcal{E}}$ described above. This construction also explains our choice of the staircase $S\subseteq Q^2$ as its domain.

Consider a hard instance $(\boldsymbol{s},\boldsymbol{f})\in\mathcal{D}$ of the abstract composed problem. Here, the ordered search promise gives $s_i=+1$ before the hidden position, $s_i=0$ at the hidden position, and $s_i=-1$ after it. We use the sign functions of the inner $\Tarski^*$ instances to determine whether a point is mapped to the preceding block, the following block, or its current block. The staircase domain ensures that, whenever a point moves to an adjacent block, one of its two components $(i_t,x_t)$, with $t\in\{1,2\}$, retains its block index $i_t$, while the other block index changes from $i_{3-t}$ to $i_{3-t}\pm1$. We map the component whose block index remains unchanged to $(i_t,f_{i_t}(x_t))$, thereby revealing information about the corresponding inner function and using its monotonicity. The component that changes blocks is mapped to $(i_{3-t}+1,\bot)$ when moving forward, or to $(i_{3-t}-1,\top)$ when moving backward.

\begin{remark}
The extremal padding values $\top$ and $\bot$ are used to preserve monotonicity across blocks. Indeed, the order on $Q$ is lexicographic: whenever $i_1<i_2$, we have $(i_1,x_1)\preceq(i_2,x_2)$, regardless of $x_1,x_2$. Thus, comparability of inputs in different blocks imposes no relation on their local coordinates. Padding the coordinate part of a transferred component with the appropriate extremal element ensures that the required inequalities still hold when the output block indices coincide.
\end{remark}

Specifically, we define the transition directions on $D_i$ and $B_i$ by
\begin{align}
\varphi_i(x,y)
&=
\begin{cases}
+1, & \sigma_{s_i}(x)=+1,\\
-1, & \sigma_{s_i}(x)<+1\wedge\sigma_{s_i}(y)=-1,\\
0, & \text{otherwise},
\end{cases}\\
\psi_i(x,y)
&=
\begin{cases}
-1, & \sigma_{s_i}(x)<+1,\\
0, & \sigma_{s_i}(x)=+1\wedge\sigma_{s_{i+1}}(y)=-1,\\
+1, & \sigma_{s_i}(x)=+1\wedge\sigma_{s_{i+1}}(y)>-1.
\end{cases}
\end{align}
These transition rules are chosen to ensure that: (1) The transition directions are accurate when $x, y$ are both fixed points; (2) both $\varphi_i$ and $\psi_i$ are monotone; (3) comparable points in adjacent blocks cannot cross their shared boundary in opposite directions; and (4) under the ordered search promise on $\boldsymbol{s}$, a point formed by the fixed points of $f_i$ and $f_{i+1}$ cannot remain in $B_i$. We now define the staircase constructor.

\begin{definition}[Staircase constructor]\label{def:staircase-constructor}
Given $(\boldsymbol{s},\boldsymbol{f})\in\mathcal{D}$, write $F=\hat{\mathcal{E}}(\boldsymbol{s},\boldsymbol{f})$. For a point $((i,x),(i,y))\in D_i$, let $s=\varphi_i(x,y)$ and define
\begin{align}
F((i,x),(i,y))
=
\begin{cases}
((i-1,\top),(i,f_i(y))), & s=-1\wedge i>1,\\
((i,f_i(x)),(i+1,\bot)), & s=+1\wedge i<m,\\
((i,f_i(x)),(i,f_i(y))), & \text{otherwise}.
\end{cases}
\end{align}

For a point $((i,x),(i+1,y))\in B_i$, let $s=\psi_i(x,y)$ and define
\begin{align}
F((i,x),(i+1,y))
=
\begin{cases}
((i,f_i(x)),(i,\top)), & s=-1,\\
((i,f_i(x)),(i+1,f_{i+1}(y))), & s=0,\\
((i+1,\bot),(i+1,f_{i+1}(y))), & s=+1.
\end{cases}
\end{align}
\end{definition}

The following lemma establishes the properties of $\hat{\mathcal{E}}$.

\begin{lemma} \label{lem:staircase-constructor-property}
The following statements hold for the staircase constructor $\hat{\mathcal{E}}$:
\begin{enumerate}
\item Both $\hat{\mathcal{E}}$ and $\mathcal{E}$ are injective.
\item For every $(\boldsymbol{s},\boldsymbol{f})\in\mathcal{D}$, the function $\hat{\mathcal{E}}(\boldsymbol{s},\boldsymbol{f})$ is monotone.
\item For every $(\boldsymbol{s},\boldsymbol{f})\in\mathcal{D}$, the function $\hat{\mathcal{E}}(\boldsymbol{s},\boldsymbol{f})$ has a unique fixed point, namely $((i, f_i^*), (i, f_i^*))$ where $s_i = 0$.
\item Consider any two instances $(\boldsymbol{s}^{(1)},\boldsymbol{f}^{(1)}),(\boldsymbol{s}^{(2)},\boldsymbol{f}^{(2)})\in\mathcal{D}$ and any query point $u=((i,x),(j,y))\in S$, where $j=i$ or $j=i+1$. Define $q_1(u)=(i,x)$ and $q_2(u)=(j,y)$. For $a\in\{1,2\}$, let $h^{(a)}$ denote the oracle function specified by $(\boldsymbol{s}^{(a)},\boldsymbol{f}^{(a)})$, so that $h^{(a)}(r,z)=\mathcal{H}_{s_r^{(a)}}(f_r^{(a)})(z)$. Then
\begin{align}
\hat{\mathcal{E}}(\boldsymbol{s}^{(1)},\boldsymbol{f}^{(1)})(u)
\ne
\hat{\mathcal{E}}(\boldsymbol{s}^{(2)},\boldsymbol{f}^{(2)})(u)
\quad\Rightarrow\quad
\exists \ell\in\{1,2\},\quad
h^{(1)}(q_\ell(u))\ne h^{(2)}(q_\ell(u)).
\end{align}
\end{enumerate}
\end{lemma}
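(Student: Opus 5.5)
The four statements will be checked in the order 3, 4, 2, 1. The uniqueness of the fixed point and the constant-dependence property only need reading off the definitions. Monotonicity is the long case analysis. Injectivity uses the fixed-point analysis.

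\textbf{Fixed point (statement 3).} A point of $D_i$ with $\varphi_i(x,y)=\pm1$ moves to a different block, except in the boundary cases $i=1$ or $i=m$. Suppose we are in the ``otherwise'' case, so the output is $((i,f_i(x)),(i,f_i(y)))$. Being fixed then forces $f_i(x)=x$ and $f_i(y)=y$, so $x=y=f_i^*$ by uniqueness.

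\begin{enumerate}
\item If $s_i=0$: we have $\sigma_{s_i}(f_i^*)=0$, so $\varphi_i=0$, and the point is indeed fixed.
\item If $s_i=+1$: then $\varphi_i=+1$. By the ordered-search promise, $i<m$, so the point moves to block $i+1$ and is not fixed.
\item If $s_i=-1$: then $\varphi_i=-1$. By the promise, $i>1$, so the point moves to block $i-1$ and is not fixed.
\end{enumerate}

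On a bridge $B_i$, only $\psi_i=0$ keeps both block indices. Being fixed then requires $x=f_i^*$ and $y=f_{i+1}^*$, together with $\sigma_{s_i}(f_i^*)=+1$ and $\sigma_{s_{i+1}}(f_{i+1}^*)=-1$. This means $s_i=+1$ and $s_{i+1}=-1$, which contradicts the promise that a $0$ occurs between the $+1$'s and the $-1$'s.

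\textbf{Constant dependence (statement 4).} For $u\in D_i$ with $q_1(u)=(i,x)$ and $q_2(u)=(i,y)$, the value $F(u)$ is a function of $\sigma_{s_i}(x)$, $\sigma_{s_i}(y)$, $f_i(x)$ and $f_i(y)$. These are exactly the oracle answers $h(i,x)$ and $h(i,y)$. On $B_i$, the value $F(u)$ is similarly a function of $h(i,x)$ and $h(i+1,y)$. If both answers agree for the two instances, the outputs agree, which is the contrapositive of the claim.

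\textbf{Injectivity (statement 1).} Take distinct instances and the first block $i$ where the pairs $(s_i,f_i)$ differ.

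\begin{enumerate}
\item Suppose $f^{(1)}_i\neq f^{(2)}_i$, witnessed by a point $z$. Query $((i,z),(i,z))$. If the transition directions $\varphi_i$ agree, then some component equals $(i,f^{(a)}_i(z))$ in both instances, so the outputs differ. If the directions differ, the block indices of the outputs differ, again with care at $i=1$ and $i=m$.
\item Suppose $f_i$ agrees but $s_i$ differs. Query $((i,f_i^*),(i,f_i^*))$. Different values $\sigma_{s_i}(f_i^*)$ give different $\varphi_i$. The promise restricts $s_1\in\{1,0\}$ and $s_m\in\{0,-1\}$, and $m\geq2$, so the boundary ``otherwise'' clause never collapses two distinct directions.
\end{enumerate}

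Injectivity of $\mathcal{E}$ then follows: $e\times e$ is injective, and $\rho\circ(e^\dagger\times e^\dagger)$ is surjective onto $S$ because it fixes $(e\times e)(S)$ pointwise.

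\textbf{Monotonicity (statement 2) — the main obstacle.} Take $u\preceq u'$ in $S$. I will first note two facts. First, $F$ changes each block index by at most one. Second, any component entering a new block is padded with $\bot$ when moving up and $\top$ when moving down.

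If the pieces of $u$ and $u'$ are at least two steps apart in the staircase order, the output block indices are weakly ordered. Where they coincide, the padding values make the local comparison hold, since $\bot$ is minimal and $\top$ is maximal.

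The substantive cases are the following:
\begin{enumerate}
\item both points in the same piece;
\item $D_i$ versus $B_i$;
\item $B_i$ versus $D_{i+1}$;
\item $B_{i-1}$ versus $D_i$.
\end{enumerate}

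Within one piece, $\sigma_{s_i}$ is monotone, so $\varphi_i$ and $\psi_i$ are monotone in $(x,y)$. It then suffices to check pairs of directions $d\leq d'$ componentwise, using monotonicity of $f_i$ and the $\top/\bot$ padding.

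For adjacent pieces, I will use the design property that comparable points cannot cross their shared boundary in opposite directions. For example, take $u\in D_i$ with $\varphi_i=+1$, so $\sigma_{s_i}(x)=+1$, and $u'=((i,x'),(i+1,y'))\in B_i$ with $x\preceq x'$. Then $\sigma_{s_i}(x')=+1$, so $\psi_i\neq-1$. The remaining combinations are compared componentwise in the same way.

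This adjacent-piece bookkeeping, including the boundary clauses at $i=1$ and $i=m$, is where I expect the work to lie.
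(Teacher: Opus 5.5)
Your proposal is correct and matches the paper's proof for statements 1, 3 and 4. It also matches the paper for the same-piece and adjacent-piece cases of monotonicity, where in both arguments the key point is that opposing boundary crossings would violate monotonicity of $\sigma_{s_i}$ (for $D_i$ versus $B_i$) or of $\sigma_{s_{i+1}}$ (for $B_i$ versus $D_{i+1}$).

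The one genuine difference is the non-adjacent case. The paper inducts on the distance in the sequence $D_1,B_1,D_2,\ldots,D_m$. It inserts intermediate points such as $((i,x_1),(i+1,\bot))$ and $((i+1,\bot),(i+1,\bot))$, or $((i+1,\bot),(i+1,y_1))$ when starting in a bridge, and chains adjacent comparisons by transitivity. You instead compare $F(u)$ and $F(u')$ directly.

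Your direct route avoids the induction, but its justification needs one more ingredient. The fact that each block index changes by at most one does not by itself make the output block indices weakly ordered. For $u\in D_i$ and $u'\in D_{i+1}$, a component could a priori rise from $i$ to $i+1$ in $F(u)$ while falling from $i+1$ to $i$ in $F(u')$.

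What rules this out is the piece-level fact that $F$ maps each piece into itself or an adjacent piece. From $D_i$, only the first component can fall and only the second can rise; from $B_i$ it is the reverse. Hence if $u$ lies in piece $p$ and $u'$ in piece $p'\geq p+2$, then $F(u)$ lies in a piece $\leq p+1\leq p'-1$, which is at most the piece of $F(u')$. Block-index pairs are nondecreasing along the sequence, so the output block indices are weakly ordered.

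Moreover, $p'\geq p+2$ forces both input block indices of $u'$ to exceed those of $u$. So any coincidence of output block indices in a component forces either $F(u)$ to have entered that block from below (carrying $\bot$) or $F(u')$ to have entered it from above (carrying $\top$). That is exactly your padding argument.

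With this fact added, your direct argument is complete. The paper's induction avoids this structural bookkeeping by reusing the adjacent-case analysis, while yours gives a one-shot comparison without transitivity chains.
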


\begin{proof}
For the first statement, we show that the input can be recovered from the resulting staircase function. For any $i$ and $z\in[n]^k$, query the diagonal point $((i,z),(i,z))\in D_i$. Regardless of the transition direction, at least one output component remains in block $i$, and its local coordinate is $f_i(z)$. Thus, all functions $f_i$ can be recovered. Next, query $((i,f_i^*),(i,f_i^*))$. At this point, $\varphi_i(f_i^*,f_i^*)=s_i$. The ordered search promise ensures that $s_1\ne-1$ and $s_m\ne+1$, so the boundary conditions do not obscure the transition direction. The output block indices therefore determine $s_i$. Hence, $\hat{\mathcal{E}}$ is injective. Since $\hat{\mathcal{E}}$ can be recovered from $\mathcal{E}$ by restricting to $(e\times e)(S)$ and applying $e^\dagger\times e^\dagger$, $\mathcal{E}$ is also injective.

For the second statement, the monotonicity of the sign functions $\sigma_{s_i}$ immediately implies that of $\varphi_i$ and $\psi_i$. Fix $(\boldsymbol{s},\boldsymbol{f})\in\mathcal{D}$ and write $F=\hat{\mathcal{E}}(\boldsymbol{s},\boldsymbol{f})$. Consider two points $u=((i_1,x_1),(j_1,y_1))\preceq v=((i_2,x_2),(j_2,y_2))$, where $j_a=i_a$ or $j_a=i_a+1$ for $a\in\{1,2\}$. We distinguish the following cases:
\begin{enumerate}
\item Both points lie in the same block. Thus, $i_1=i_2=i$, $j_1=j_2=j$, $x_1\preceq x_2$, and $y_1\preceq y_2$. Monotonicity follows from that of $f_i,f_j$ and the relevant transition function $\varphi_i$ or $\psi_i$, together with the extremal padding values.

\item The points lie in adjacent blocks. The only potentially problematic case is that of opposing cross-block transitions. In all other cases, monotonicity follows from the corresponding componentwise comparisons. Figure~\ref{fig:adjacent-block-monotonicity} illustrates some representative cases; the remaining cases require only straightforward case analysis, which we omit.
\begin{enumerate}
\item The first point lies in $D_i$ and the second in $B_i$. Here, $i_1=i_2=i$, $j_1=i$, $j_2=i+1$, and $x_1\preceq x_2$. The only potentially incompatible case is when the two points cross their shared boundary in opposite directions. By the definitions of $\varphi_i$ and $\psi_i$, this requires $\sigma_{s_i}(x_1)=+1$ and $\sigma_{s_i}(x_2)<+1$, contradicting the monotonicity of $\sigma_{s_i}$. All remaining cases are ordered by the monotonicity of $f_i$ and the extremal padding values.

\item The first point lies in $B_i$ and the second in $D_{i+1}$. Here, $i_1=i$, $i_2=i+1$, $j_1=j_2=i+1$, and $y_1\preceq y_2$. Again, the only potentially incompatible case is when the two points cross their shared boundary in opposite directions. By the definitions of $\varphi_{i+1}$ and $\psi_i$, this requires $\sigma_{s_{i+1}}(y_1)>-1$ and $\sigma_{s_{i+1}}(y_2)=-1$, contradicting the monotonicity of $\sigma_{s_{i+1}}$. All remaining cases are ordered by the monotonicity of $f_{i+1}$ and the extremal padding values.
\end{enumerate}

\item The points lie in nonadjacent blocks. We induct on the distance between their blocks in the sequence $D_1,B_1,D_2,\ldots,B_{m-1},D_m$. If $u\in D_i$, then
\begin{align}
u
\preceq ((i,x_1),(i+1,\bot))
\preceq ((i+1,\bot),(i+1,\bot))
\preceq v.
\end{align}
The first two comparisons involve adjacent blocks, while the last comparison has a smaller block distance. Thus, the preceding cases and the induction hypothesis give $F(u)\preceq F(v)$. If $u\in B_i$, take the intermediate point $w=((i+1,\bot),(i+1,y_1))\in D_{i+1}$. Then $u\preceq w\preceq v$, and the same inductive argument applies.
\end{enumerate}
This establishes monotonicity. 

\begin{figure}[t]
    \centering
    \includegraphics[width=\linewidth]{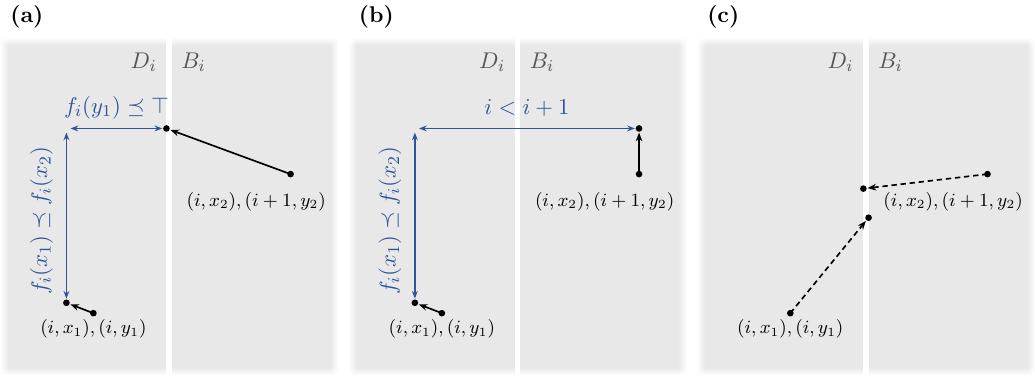}
    \caption{Three representative cases in the monotonicity analysis for adjacent blocks. Consider $u=((i,x_1),(i,y_1))\in D_i$ and $v=((i,x_2),(i+1,y_2))\in B_i$, where $x_1\preceq x_2$ and no order relation between $y_1$ and $y_2$ is assumed. (a) When $u$ remains in $D_i$ and $v$ is mapped from $B_i$ into $D_i$, the order is preserved. (b) When both points remain in their respective blocks, the order is also preserved. (c) Opposing cross-block transitions are the only potentially nonmonotone configuration, but this configuration cannot occur.}
    \label{fig:adjacent-block-monotonicity}
\end{figure}

For the third statement, let $i^*$ be the unique index such that $s_{i^*}=0$. We show that the unique fixed point is $((i^*,f_{i^*}^*),(i^*,f_{i^*}^*))$.

First, no fixed point can lie in a bridge. A fixed point in $B_i$ would require $x=f_i^*$, $y=f_{i+1}^*$, and $\psi_i(x,y)=0$. By the definition of $\psi_i$, this would imply $\sigma_{s_i}(f_i^*)=s_i=+1$ and $\sigma_{s_{i+1}}(f_{i+1}^*)=s_{i+1}=-1$, contradicting the ordered search promise.

Next, any fixed point in a main block $D_i$ must satisfy $x=y=f_i^*$. Suppose that $s_i\ne0$. A fixed point could then arise only in one of the following three cases:
\begin{enumerate}
\item $i=1$ and $x=y=f_1^*$. Since $s_1\ne0$, the ordered search promise gives $s_1=+1$. Consequently, $((1,f_1^*),(1,f_1^*))$ is mapped into $B_1$ and cannot be a fixed point.
\item $i=m$ and $x=y=f_m^*$. Since $s_m\ne0$, the ordered search promise gives $s_m=-1$. Consequently, $((m,f_m^*),(m,f_m^*))$ is mapped into $B_{m-1}$ and cannot be a fixed point.
\item $1<i<m$, $x=y=f_i^*$, and $\varphi_i(x,y)=0$. The last condition implies $\sigma_{s_i}(x)=s_i<+1$ and $\sigma_{s_i}(y)=s_i>-1$. Hence, $s_i=0$, contradicting the assumption.
\end{enumerate}
Therefore, a fixed point can lie only in $D_{i^*}$, where it must be $((i^*,f_{i^*}^*),(i^*,f_{i^*}^*))$. This point is indeed fixed, since $\varphi_{i^*}(f_{i^*}^*,f_{i^*}^*)=0$. Thus, the fixed point is unique.

The fourth statement follows directly from the definition of the staircase constructor: its value at $u$ is determined entirely by the two oracle answers at $q_1(u)$ and $q_2(u)$.
\end{proof}

\subsubsection{Putting Everything Together}

The properties established in the preceding lemma, together with those of $e^\dagger$, $\rho$, and $e$, immediately carry over to $\mathcal{E}$. Consequently, $\mathcal{E}(\mathcal{D})$ is a valid hard instance family for $\Tarski(nm,2k)$.

\begin{proof}[Proof of Theorem~\ref{thm:logarithm-bootstrapping}]
Starting from $\mathcal{F}$ and a spectral lower bound of $\lambda/\delta$, we proceed as follows:
\begin{enumerate}
\item Apply the construction in \eqref{eq:tarski-h-construction} to obtain a hard instance family $\mathcal{F}'$ for which $\Tarski^H(n,k)|_{\mathcal{F}'}$ is a generalized search problem. By Lemma~\ref{lem:tarski-h-lowerbound}, there exists a nonnegative adversary matrix that yields a spectral lower bound of $\lambda/\delta$.
\item Apply Lemma~\ref{lem:abstract-composition} to obtain the domain $\mathcal{D}$ of the abstract composed problem \begin{align}
\OS_m\circ(\Tarski^H(n,k)|_{\mathcal{F}'},\ldots,\Tarski^H(n,k)|_{\mathcal{F}'}),
\end{align} 
with $m$ copies of the inner problem, together with a nonnegative adversary matrix that yields a spectral lower bound of $\SA(\OS_m)\lambda/\delta$.
\item Apply the embedding constructor $\mathcal{E}$ to obtain the hard instance family $\mathcal{E}(\mathcal{D})$ for $\Tarski(nm,2k)$. By Lemma~\ref{lem:staircase-constructor-property}, each query to an embedded instance depends on only two queries to the original instance. Moreover, its fixed point determines the ordered search answer. Lemma~\ref{lem:adversary-reduction} therefore gives a nonnegative adversary matrix that yields a spectral lower bound of $\SA(\OS_m)\lambda/(2\delta)$.
\end{enumerate}
\end{proof}

\section{Consequences of Hardness Bootstrapping} \label{sec:consequences}

Finally, we apply our two hardness bootstrapping theorems to derive the following corollaries, which answer several previously open questions. All our lower bounds hold in the quantum query model and therefore also apply to both deterministic and randomized classical algorithms.

\subsection{The Bootstrapping Scheme}

In general, we apply the bootstrapping theorems according to the following scheme. Starting from a unique-fixed-point hard instance family for the two-dimensional problem $\Tarski(n_0,2)$, where $n_0\geq2$, we choose an integer $m\geq2$ and a nonnegative integer $r$ specifying the parameter and the number of applications of logarithmic bootstrapping, respectively. We work with spectral adversary bounds throughout the construction.
\begin{enumerate}
\item Apply the antichain constructor, doubling the dimension and dividing the lower bound by $2$.
\item Repeat the following steps $r$ times:
\begin{enumerate}
\item Apply the logarithmic hardness bootstrapping theorem, multiplying the width by $m$, the dimension by $2$, and the lower bound by $\SA(\OS_m)/2$.
\item Apply the antichain constructor, doubling the dimension and dividing the lower bound by $2$, thereby restoring the antichain condition.
\end{enumerate}
\item The current dimension is $d_r=4^{r+1}$. For a target dimension $k$, set $t=\lfloor k/d_r\rfloor$ and apply the dimensional hardness bootstrapping theorem, multiplying the lower bound by $t/(5+2\sqrt{2})$. If necessary, pad the remaining dimensions to reach $k$.
\end{enumerate}

The resulting width is $n_0m^r$. Thus, for target parameters $n\geq n_0$ and $k\geq4$, the parameters must satisfy
\begin{align}
0\leq r\leq
\min\left\{
\left\lfloor\log_4(k/4)\right\rfloor,\,
\left\lfloor\log_m(n/n_0)\right\rfloor
\right\}.
\label{eq:bootstrapping-parameter-budget}
\end{align}
Any remaining width can also be padded without loss in the adversary bound.

If the initial spectral lower bound is $L_0$, then the bound before dimensional bootstrapping is $(L_0/2)(\SA(\OS_m)/4)^r$. Since $k\geq d_r$, we have $t=\lfloor k/d_r\rfloor\geq k/(2d_r)$. Thus, extracting a linear factor of $k$ incurs an additional factor of $4^{-r}$. Combining this scheme with the $\Omega(\log^2 n_0)$ spectral adversary lower bound for the two-dimensional problem, we obtain, for every fixed $\epsilon\in[0,1/2)$, a constant $R>0$ such that
\begin{align}
Q_\epsilon(\Tarski(n,k))
\geq
Rk\log^2 n_0
\left(\frac{\SA(\OS_m)}{16}\right)^r.
\label{eq:bootstrapping-lowerbound}
\end{align}
Here, $R$ absorbs the bounded-error factor $(1-2\sqrt{\epsilon(1-\epsilon)})/2$, the loss from rounding $t$, and the other fixed constants. In particular, it is independent of $n,k,n_0,m,r$.

We first examine several special cases of this scheme.

For constant dimensions, we use a variant that omits both the final application of the antichain constructor and the dimensional bootstrapping theorem. For $r\geq1$, the resulting dimension is $2\cdot4^r$. Fix a constant dimension $k\geq8$. Set $r=\lfloor\log_4(k/2)\rfloor$ and $n_0=m=\lfloor n^{1/(r+1)}\rfloor$. For sufficiently large $n$, these choices satisfy $2\cdot4^r\leq k$ and $n_0m^r\leq n$, and yield a quantum query lower bound of $\Omega_{k,\epsilon}(\log^{r+2} n)$. For $2\leq k<8$, we instead pad the original two-dimensional family without applying either bootstrapping theorem. Together, these bounds immediately yield:

\begin{corollary}
For every fixed integer $k\geq2$ and every fixed $\epsilon\in[0,1/2)$, we have $Q_\epsilon(\Tarski(n,k))=\log^{\Omega(\log k)} n$.
\end{corollary}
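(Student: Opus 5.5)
The plan is to instantiate the constant-dimension variant of the bootstrapping scheme described just above and track the exponent of $\log n$. The starting point is a unique-fixed-point hard instance family for $\Tarski(n_0,2)$ with an $\Omega(\log^2 n_0)$ spectral adversary bound, namely the herringbone family of \cite{phillips2026quantum}; I assume it comes with unique fixed points, which the scheme presupposes.

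\textbf{Case $2\le k<8$.} The lower bound $\Omega(\log^2 n)$ follows by padding the two-dimensional family to dimension $k$. The padding leaves the extra coordinates fixed at a constant value, so each query to the padded instance depends on one query to the original. Lemma~\ref{lem:adversary-reduction} then applies with $c=1$. Since $\log(k)$ is bounded in this range, $\log^2 n=\log^{\Omega(\log k)} n$ trivially.

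\textbf{Case $k\ge 8$.} Set $r=\lfloor\log_4(k/2)\rfloor\ge1$ and $n_0=m=\lfloor n^{1/(r+1)}\rfloor$, so that $n_0m^r\le n$ and $2\cdot 4^r\le k$. Then carry out the following steps.
\begin{enumerate}
\item Apply Theorem~\ref{thm:antichain-lifting} once, reaching dimension $4$ with bound $\Omega(\log^2 n_0)$.
\item Alternate Theorem~\ref{thm:logarithm-bootstrapping} with Theorem~\ref{thm:antichain-lifting}, omitting the final antichain step. Each logarithmic step multiplies the bound by $\SA(\OS_m)/2\ge(\ln m-2)/(2\pi)$ by Lemma~\ref{lem:ordered-search-lowerbound}, and each antichain step loses a factor $2$.
\item Conclude that the final family lives in $[n_0m^r]^{2\cdot4^r}$ and carries a spectral bound of order $\log^2 n_0\,\bigl((\ln m-2)/(8\pi)\bigr)^{r}$, up to a constant.
\end{enumerate}
With $\log m=\Theta(\log n/(r+1))$ and $r$ depending only on $k$, this bound is $\Omega_k(\log^{r+2}n)$ for large $n$, because $\ln m-2\ge \tfrac12\ln m$ eventually. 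Pad the width up to $n$ and the dimension up to $k$, again via Lemma~\ref{lem:adversary-reduction}. The nonnegative adversary theorem then converts this into a bound on $Q_\epsilon$, and $r+2=\Theta(\log k)$ gives the claim.

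The step needing the most care is the padding. It must preserve monotonicity and unique fixed points. Dimension padding can set the extra coordinates to $0$. For width padding, I would compose with the retraction $z\mapsto\min(z,(n'-1)\mathbf1)$. This retraction is monotone, keeps every fixed point inside the original grid, and sends each query to a single original query. Everything else is bookkeeping of constants, which is absorbed because $k$ and $\epsilon$ are fixed. Small $n$ are handled by the trivial bound $1$, absorbed into the constant implicit in $\log^{\Omega(\log k)} n$.
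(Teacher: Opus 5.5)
Your proposal is correct and follows the paper's argument essentially verbatim: the same case split at $k=8$, the same choices $r=\lfloor\log_4(k/2)\rfloor$ and $n_0=m=\lfloor n^{1/(r+1)}\rfloor$, the same variant of the scheme that omits the final antichain step and the dimensional bootstrapping, and plain padding of the two-dimensional family for $2\le k<8$. Your explicit padding maps (zeroing the extra coordinates, and precomposing with $z\mapsto\min(z,(n'-1)\mathbf{1})$, each with $c=1$ in Lemma~\ref{lem:adversary-reduction}) fill in details the paper leaves implicit. Your per-round constant $8\pi$ is looser than the actual $4\pi$, but this is harmless because $r$ depends only on $k$.
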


For $k\geq2$, we have $Q_\epsilon(\Tarski(n,k))=\Omega(\log^2 n)$; for $k\geq8$, we have $Q_\epsilon(\Tarski(n,k))=\Omega(\log^3 n)$; and so on: each fourfold increase in the dimension yields an additional $\log n$ factor in the lower bound. This rules out an upper bound of the form $f(k)\log^2 n$. Moreover, together with the upper bounds in \cite{chen2026quasifpt}, it shows that for $k=8,9,10$, the query complexity is tight up to a constant factor, namely $\Theta(\log^3 n)$.

Setting $r=0$ and $n_0=n$ immediately yields the following lower bound, capturing the joint dependence on $k$ and $n$:

\begin{corollary}[A Dimension-Linear Lower Bound $\Omega(k\log^2 n)$]
For every fixed $\epsilon\in[0,1/2)$ and all integers $k\geq4$ and $n\geq2$,
\begin{align}
Q_\epsilon(\Tarski(n,k))=\Omega(k\log^2 n).
\end{align}
\end{corollary}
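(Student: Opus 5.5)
The plan is to instantiate the bootstrapping scheme with $r=0$ and $n_0=n$. This skips logarithmic bootstrapping entirely and uses only the antichain constructor followed by dimensional bootstrapping.

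Start from a unique-fixed-point hard family for $\Tarski(n,2)$ carrying a nonnegative adversary matrix with spectral bound $L_0=\Omega(\log^2 n)$. I take this to be the herringbone family of \cite{phillips2026quantum}, whose lower bound is established via the generalized-search composition of nested ordered search and is therefore a spectral bound. For small $n$ (say $n<e^3$, where Lemma~\ref{lem:ordered-search-lowerbound} is vacuous), the claim $\Omega(k\log^2 n)=\Omega(k)$ still needs a separate source. I would take it from the $n=2$ case, e.g.\ the hypercube bound or simply the dimensional bootstrapping applied to any nontrivial one-dimensional family, after checking that it embeds in any width $n\geq 2$. The uniqueness of the fixed point should be checked from the herringbone construction; if it fails, restrict to a subfamily where it holds.

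Apply Theorem~\ref{thm:antichain-lifting} to obtain a family for $\Tarski(n,4)$ with unique fixed points lying on an antichain, with spectral bound at least $L_0/2$. Then apply Theorem~\ref{thm:dimensional-hardness-bootstrapping} with $t=\lfloor k/4\rfloor\geq 1$, valid since $k\geq 4$. This gives $\SA(\Tarski(n,4t))\geq tL_0/(2C)$ with $C=5+2\sqrt{2}$. Since $t\geq k/8$, this is $\Omega(k\log^2 n)$.

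Next pad from dimension $4t$ to $k$. Extend each instance $f$ to $g(x,w)=(f(x),\bot)$ on the extra $k-4t$ coordinates. This $g$ is monotone and its unique fixed point is $(f^*,\bot)$. Each query of $g$ depends on a single query of $f$, so Lemma~\ref{lem:adversary-reduction} with $c=1$ preserves the bound.

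Finally, the nonnegative adversary theorem converts the spectral bound into $Q_\epsilon(\Tarski(n,k))\geq \frac{1-2\sqrt{\epsilon(1-\epsilon)}}{2}\cdot\Omega(k\log^2 n)$. The prefactor is a positive constant for fixed $\epsilon<1/2$.

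The main obstacle is the base case: confirming that the two-dimensional $\Omega(\log^2 n)$ bound is witnessed by a nonnegative spectral adversary matrix on a unique-fixed-point family, uniformly in $n\geq 2$. The remaining steps are direct applications of the stated theorems.
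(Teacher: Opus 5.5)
Your proposal is correct and is the paper's own route: the corollary is the bootstrapping scheme with $r=0$ and $n_0=n$. That is, you lift the two-dimensional $\Omega(\log^2 n)$ spectral family to an antichain family in dimension $4$, apply dimensional bootstrapping with $t=\lfloor k/4\rfloor\geq k/8$, and pad; your explicit padding map $g(x,w)=(f(x),\bot)$ via Lemma~\ref{lem:adversary-reduction} with $c=1$ spells out a step the paper leaves implicit. One small fix to your fallback for bounded $n$: the hypercube bound is only classical, and a nontrivial one-dimensional family cannot satisfy the antichain condition directly (antichains in $[n]$ are singletons), so pass it through $\mathcal{B}$ first, or simply note that any unique-fixed-point family with two distinct fixed points already has spectral bound at least $1$, which the same pipeline then amplifies to $\Omega(k)$.
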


The earlier classical dimension-sensitive lower bounds of \cite{branzei2025tarski,branzei2025randomized} together give $\Omega(k+k\log^2 n/\log k)$, while \cite{li2026quantum} established the dimension-sensitive quantum lower bound $\Omega(k\log n)$. Our bound improves or matches these earlier bounds throughout the parameter range $n\geq2$ and $k\geq4$.

\subsection{Polynomial-in-Width Lower Bounds in High Dimensions}
\label{subsec:polynomial-width-tradeoff}

Whenever the dimension is
at least a fixed positive power of the width, our bootstrapping theorems yield
a lower bound that is linear in the dimension and polynomial in the width.

Fix an integer $b\geq203$, and set $n_0=2$, $m=2^b$, and
$r=\lfloor\log_m(n/2)\rfloor-1$. For sufficiently large $n$, we have $r\geq0$,
and the bootstrapping scheme produces a hard instance family of width
$2m^r\leq n$ and dimension
\begin{align}
d_r=4^{r+1}\leq(n/2)^{2/b}.
\label{eq:tradeoff-dimension-budget}
\end{align}
Each round of logarithmic bootstrapping followed by antichain lifting
multiplies the spectral lower bound by $\SA(\OS_m)/4$ and the dimension by $4$.
Thus, after applying dimensional bootstrapping to reach dimension $k\geq d_r$,
the resulting quantum query lower bound is
\begin{align}
Q_\epsilon(\Tarski(n,k))
\geq Rk\left(\frac{\SA(\OS_m)}{16}\right)^r,
\label{eq:tradeoff-bootstrap-bound}
\end{align}
where $R>0$ depends only on the fixed error parameter
$\epsilon\in[0,1/2)$.

By Lemma~\ref{lem:ordered-search-lowerbound}, define
\begin{align}
A_b:=\frac{b\ln2-2}{16\pi}>1,
\qquad
\delta_b:=\frac{\log A_b}{b}.
\label{eq:tradeoff-exponent}
\end{align}
Since $r\geq\log_m(n/2)-2$, we obtain
\begin{align}
Q_\epsilon(\Tarski(n,k))
\geq RkA_b^r
\geq RA_b^{-2}k(n/2)^{\delta_b}
=\Omega_{b,\epsilon}(kn^{\delta_b}).
\end{align}

\begin{corollary}
\label{cor:polynomial-width-tradeoff}
For every fixed $\alpha>0$, let
\begin{align}
b(\alpha):=\max\left\{203,\left\lceil\frac{2}{\alpha}\right\rceil\right\},
\qquad
\delta(\alpha):=
\frac{1}{b(\alpha)}
\log\left(\frac{b(\alpha)\ln2-2}{16\pi}\right).
\label{eq:dimension-exponent-tradeoff}
\end{align}
For every fixed $\epsilon\in[0,1/2)$ and all sufficiently large $n$, if
$k\geq n^\alpha$, then
\begin{align}
Q_\epsilon(\Tarski(n,k))
=\Omega_{\alpha,\epsilon}(kn^{\delta(\alpha)}).
\end{align}
Moreover, $\delta(\alpha)>0$ for every $\alpha>0$.
\end{corollary}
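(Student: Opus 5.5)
The plan is to read Corollary~\ref{cor:polynomial-width-tradeoff} off the computation that precedes it, with $b=b(\alpha)$, and then to check that all parameter constraints hold. Fix $\alpha>0$ and let $b=b(\alpha)\geq203$. Set $n_0=2$, $m=2^b$ and $r=\lfloor\log_m(n/2)\rfloor-1$. Then we run the bootstrapping scheme.

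\textbf{Step 1 (the starting family).} Start from the two-dimensional unique-fixed-point family for $\Tarski(2,2)$. Its spectral bound $L_0$ is a positive constant; in the worst case, pad a one-dimensional binary-search family, which has a positive adversary bound. Since $k\log^2 n_0$ is then of constant order, \eqref{eq:bootstrapping-lowerbound} specializes to \eqref{eq:tradeoff-bootstrap-bound}.

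\textbf{Step 2 (parameter budget \eqref{eq:bootstrapping-parameter-budget}).} For the width constraint, note that $r\leq\log_m(n/2)$ gives $2m^r\leq n$. For the dimension constraint, $r+1\leq\log_m(n/2)=\log_2(n/2)/b$ gives
\[
d_r=4^{r+1}\leq(n/2)^{2/b}.
\]
Since $b\geq\lceil 2/\alpha\rceil\geq 2/\alpha$, we get $(n/2)^{2/b}\leq n^{\alpha}\leq k$. So $r\leq\lfloor\log_4(k/4)\rfloor$ holds, and $k\geq d_r$, so the dimensional theorem applies with $t=\lfloor k/d_r\rfloor\geq k/(2d_r)$. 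For sufficiently large $n$ we also have $r\geq0$.

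\textbf{Step 3 (the exponent).} By Lemma~\ref{lem:ordered-search-lowerbound},
\[
\SA(\OS_m)/16\geq(b\ln2-2)/(16\pi)=A_b.
\]
At $b=203$ this equals $(140.7-2)/50.27\approx2.76>1$, and the numerator grows in $b$, so $A_b>1$ for all $b\geq203$. This already gives $\delta(\alpha)=\log(A_b)/b>0$, which is the ``moreover'' clause. Next, $r\geq\log_m(n/2)-2$ gives
\[
A_b^r\geq A_b^{-2}(n/2)^{\log_2(A_b)/b}.
\]
Reading $\log$ as $\log_2$, this equals $A_b^{-2}(n/2)^{\delta_b}$. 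Since $A_b$ and $2^{\delta_b}$ depend only on $\alpha$, the final bound is $\Omega_{\alpha,\epsilon}(kn^{\delta(\alpha)})$.

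\textbf{Main obstacle.} The delicate point is bookkeeping, not any new idea. The dimension budget must hold with $d_r\leq k$ using only $k\geq n^\alpha$; this is exactly why $b\geq2/\alpha$ is required. The constant $R$ must stay independent of $n$ and $k$ even though $m$ is a large constant depending on $\alpha$, which is fine because $\alpha$ is fixed. Finally, the base of the logarithm in $\delta_b$ must be consistent with $m=2^b$.
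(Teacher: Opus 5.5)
Your proposal is correct and matches the paper's argument: the same parameters $n_0=2$, $m=2^b$, $r=\lfloor\log_m(n/2)\rfloor-1$, the same dimension budget $d_r\leq(n/2)^{2/b}\leq n^\alpha\leq k$ obtained from $b\geq 2/\alpha$, and the same estimate $A_b^r\geq A_b^{-2}(n/2)^{\delta_b}$ with $\log=\log_2$. One small wording slip: in Step 1 it is $\log^2 n_0$, not $k\log^2 n_0$, that is of constant order.
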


\begin{proof}
The choice of $b=b(\alpha)$ ensures that $2/b\leq\alpha$. Hence
\eqref{eq:tradeoff-dimension-budget} gives $d_r\leq n^\alpha\leq k$,
and the preceding lower bound applies.
\end{proof}

\begin{figure}[htbp]
\centering
\includegraphics[width=0.7\textwidth]{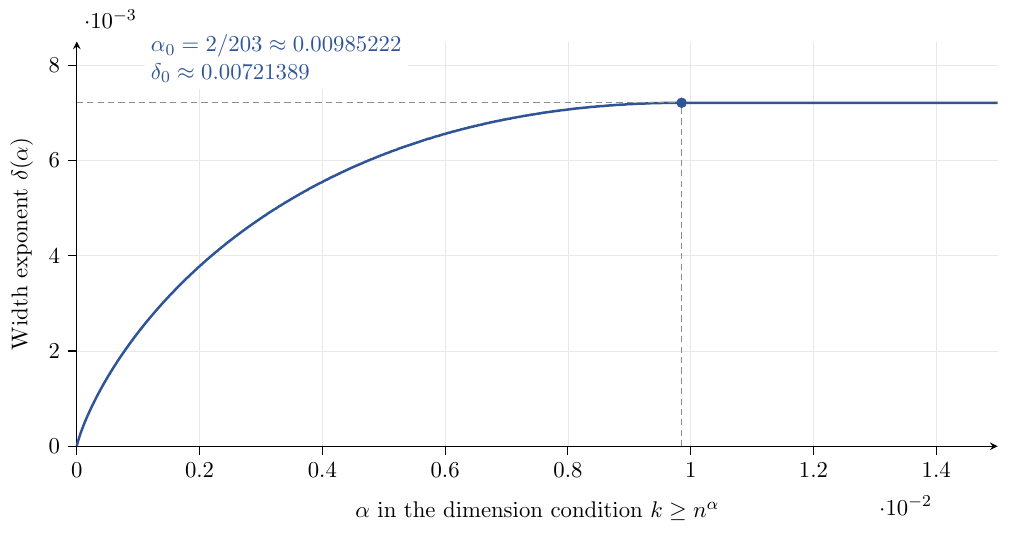}
\caption{The trade-off between the dimension threshold $k\geq n^\alpha$
and the width exponent in the lower bound
$\Omega(kn^{\delta(\alpha)})$.}
\label{fig:polynomial-width-tradeoff}
\end{figure}

The trade-off in
Figure~\ref{fig:polynomial-width-tradeoff} reaches its largest exponent at
\begin{align}
\alpha_0=\frac{2}{203}\approx0.00985221675,
\qquad
\delta_0=\frac{1}{203}
\log\left(\frac{203\ln2-2}{16\pi}\right)
\approx0.00721389254,
\end{align}
and remains at this value for all $\alpha\geq\alpha_0$.
In particular, for $k\geq n^{2/203}$, we obtain
\begin{align}
Q_\epsilon(\Tarski(n,k))=\Omega(kn^{0.00721389}).
\end{align}
This also implies the simpler bound $\Omega(kn^{0.0072})$.
More generally, the trade-off establishes simultaneous linear dependence on
$k$ and polynomial dependence on $n$ throughout every regime $k\geq n^\alpha$
with fixed $\alpha>0$.

\subsection{Superpolynomial Lower Bounds at Exponential Width}

In this subsection, we consider exponentially wide grids, namely $n=2^{\Theta(k)}$. It suffices to analyze $n=2^{\ell k}$ for a fixed constant $\ell>0$, with integer rounding understood. For sufficiently large $k$, choose the following parameters:
\begin{enumerate}
\item $r=\lfloor\log_4(k/4)\rfloor$;
\item $b=\lfloor\ell k/(r+2)\rfloor$, with $n_0=2^{2b}$ and $m=2^b$.
\end{enumerate}

These parameters satisfy the required constraints:
\begin{align}
4\cdot4^r\leq k,\qquad
n_0m^r=2^{b(r+2)}\leq2^{\ell k}=n.
\end{align}

Since $\log n_0=2b$ and $\log m=b$, the bootstrapping scheme gives
\begin{align}
Q_\epsilon(\Tarski(2^{\ell k},k))
\geq
4Rkb^2\left(\frac{\SA(\OS_{2^b})}{16}\right)^r.
\end{align}
For fixed $\ell>0$, we have $r=\Theta(\log k)$ and $b=\Theta(k/\log k)$. Moreover, Lemma~\ref{lem:ordered-search-lowerbound} gives $\SA(\OS_{2^b})\geq(b\ln2-2)/\pi=\Omega(b)$, with an absolute implied constant. Consequently, we obtain the following corollary.

\begin{corollary}
For every fixed $\ell>0$ and $\epsilon\in[0,1/2)$,
\begin{align}
Q_\epsilon(\Tarski(2^{\ell k},k))
=
\left(\frac{k}{\log k}\right)^{\Omega(\log k)}.
\end{align}
\end{corollary}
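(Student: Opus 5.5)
The plan is to plug the parameter choices just made into the general scheme bound \eqref{eq:bootstrapping-lowerbound} and then do the asymptotics. First I will confirm that the choices are admissible in the sense of \eqref{eq:bootstrapping-parameter-budget}. Since $r=\lfloor\log_4(k/4)\rfloor$, we have $d_r=4^{r+1}\leq k$, so $r\leq\lfloor\log_4(k/4)\rfloor$ holds. The width used is $n_0m^r=2^{b(r+2)}\leq 2^{\ell k}$, so $r\leq\lfloor\log_m(n/n_0)\rfloor$ holds. Any leftover width and dimension is padded, which the scheme allows without loss. For sufficiently large $k$ we also have $b\geq1$ and $m=2^b\geq2$, so Lemma~\ref{lem:ordered-search-lowerbound} applies.

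Next I will instantiate \eqref{eq:bootstrapping-lowerbound} with $\log n_0=2b$:
\begin{align}
Q_\epsilon(\Tarski(2^{\ell k},k))\geq 4Rkb^2\left(\frac{\SA(\OS_{2^b})}{16}\right)^r\geq 4Rkb^2\left(\frac{b\ln2-2}{16\pi}\right)^r .
\end{align}
Here $r=\log_4 k-O(1)=\Theta(\log k)$. Since $r+2=\Theta(\log k)$, we get $b=\lfloor \ell k/(r+2)\rfloor=\Theta_\ell(k/\log k)$. In particular $b\to\infty$, so for large $k$ we have $(b\ln2-2)/(16\pi)\geq c\,b$ for an absolute $c>0$, and also $c\,b\geq b^{1/2}$.

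It remains to convert $(cb)^r$ into $(k/\log k)^{\Omega(\log k)}$. Using $b\geq c_\ell k/\log k$ and $cb\geq b^{1/2}$ gives
\begin{align}
(cb)^r\geq b^{r/2}\geq (c_\ell k/\log k)^{r/2}\geq (k/\log k)^{r/4}
\end{align}
once $k/\log k$ dominates the constant $c_\ell^{-1}$. Since $r\geq\frac12\log_4 k$ for large $k$, the exponent is $\Omega(\log k)$. The prefactor $4Rkb^2\geq1$ for large $k$ can be dropped, giving the bound. The only delicate point is absorbing the constants $16\pi$, $c_\ell$, and $R$ into the exponent, which I handle through the square-root slack above. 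I expect no real obstacle: everything else is direct substitution into the already-established scheme bound.
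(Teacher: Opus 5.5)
Your proposal is correct and is essentially the paper's argument: you use the same parameters $r=\lfloor\log_4(k/4)\rfloor$, $b=\lfloor\ell k/(r+2)\rfloor$, $n_0=2^{2b}$, $m=2^b$, the same admissibility checks, the same instantiation $4Rkb^2(\SA(\OS_{2^b})/16)^r$ of \eqref{eq:bootstrapping-lowerbound}, and the same appeal to Lemma~\ref{lem:ordered-search-lowerbound} with $r=\Theta(\log k)$ and $b=\Theta(k/\log k)$. Your square-root slack step just spells out the constant absorption that the paper leaves implicit; strictly, the last inequality needs $k/\log k\geq c_\ell^{-2}$ rather than $c_\ell^{-1}$, which is harmless for large $k$.
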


This is a superpolynomial lower bound. Reductions from problems such as $\SSG$ and $\ARRIVAL$ to $\Tarski$ have motivated the hope that efficient Tarski algorithms could help establish $\SSG,\ARRIVAL\in\textsf{P}$ (or $\BQP$, in the quantum regime). The corollary rules out an approach based on invoking a generic polynomial-query solver for arbitrary black-box Tarski instances in this parameter regime.

\section{Conclusions} \label{sec:conclusion}

In summary, we prove two hardness bootstrapping theorems for quantum query lower bounds of finding Tarski fixed points. Suppose that $\mathcal{F}$ is a hard instance family for $\Tarski(n, k)$ such that every function in $\mathcal{F}$ has a unique fixed point, all possible fixed points lie on an antichain $A$, and a nonnegative adversary matrix yields a spectral lower bound of $L$. Then:
\begin{enumerate}
\item Applying the chain constructor yields a spectral lower bound of $tL/(5+2\sqrt{2})$ for $\Tarski(n, tk)$.
\item Composing the associated hidden-sign problem with ordered search $\OS_m$ and embedding the composed problem into $[nm]^{2k}$ yields a spectral lower bound of $CL\log m/2$ for $\Tarski(nm, 2k)$, where $C>0$ is an absolute constant.
\end{enumerate}

By applying these two bootstrapping theorems in sequence, we obtain several significant consequences, including: (1) An $\Omega(\log^{\log_4(k / 2)} n)$ lower bound for constant dimensions; (2) An $\Omega(k\log^2 n)$ lower bound for $k\geq 4$; (3) For every fixed constant $\alpha>0$, there exists a constant $\delta(\alpha)>0$ such that $Q_\epsilon(\Tarski(n,k))=\Omega(kn^{\delta(\alpha)})$ whenever $k>n^\alpha$. (4) On exponentially wide grids with $n=2^{\Theta(k)}$, the quantum query lower bound is superpolynomial in $k$. 

Finally, we highlight several limitations of our approach that suggest directions for future research. The polylogarithmic exponent in the classical query upper bound of \cite{chen2026quasifpt} increases by one when the dimension increases from $k$ to $2k+1$. In contrast, iterating our lower-bound bootstrapping requires increasing the dimension from $k$ to $4k$ to gain an additional factor of $\log m$. The antichain property is a crucial invariant in this bootstrapping process. However, our current antichain constructor doubles the dimension, while logarithmic hardness bootstrapping does not preserve the antichain property. To close this gap with the classical upper bound, two natural questions are: \emph{For the hard instance families produced by logarithmic bootstrapping, can an antichain constructor be implemented in dimension $k+1$? Can logarithmic hardness bootstrapping be achieved in dimension $2k+2$ while preserving the antichain property?} The primitive hard instance family in \cite{li2026quantum}, whose possible fixed points already form an antichain, may provide useful ideas for addressing these questions.

\section*{Acknowledgments}
We thank Yuhao Li for useful discussions on their upper bound in his paper \cite{chen2026quasifpt}. We used GPT-6-Pro to assist with developing constructions, performing algebraic calculations, and polishing the manuscript. The final manuscript was substantially revised by the authors, who remain solely responsible for all mathematical claims, proofs, references, and conclusions. Tongyang Li and Ziyi Yang are supported by the National Natural Science Foundation of China under Grant No.~62372006.

\bibliographystyle{alphaurl}
\bibliography{references}

\end{document}